
\documentclass[sigplan,10pt]{acmart}
\settopmatter{printacmref=false} 
\renewcommand\footnotetextcopyrightpermission[1]{} 
\pagestyle{plain} 
\pagestyle{plain}

\usepackage{epsfig,endnotes}

\usepackage{times}
\usepackage{prp-macros}

\usepackage[small,compact]{titlesec}
\usepackage{url}
\makeatletter
\g@addto@macro{\UrlBreaks}{\UrlOrds}
\makeatother

\usepackage{pifont}
\usepackage[colorinlistoftodos]{todonotes}
\usepackage{prp-macros}
\usepackage{xifthen}
\usepackage{eucal}
\usepackage[inline]{enumitem}
\usepackage{multicol}
\usepackage{amsthm}
\usepackage{minibox}
\usepackage{thm-restate}
\usepackage{framed}
\usepackage{fixltx2e}

\crefformat{section}{\S#2#1#3}
\crefmultiformat{section}{\S#2#1#3}{ and~\S#2#1#3}{, \S#2#1#3}{, and~\S#2#1#3}
\crefname{figure}{Fig.}{Figs.}
\crefformat{figure}{Fig.\nobreak\hspace{0.25em}#2#1#3}
\crefmultiformat{figure}{Fig. #2#1#3}{ and~Fig. #2#1#3}{, Fig. #2#1#3}{, and~Fig.#2#1#3}
\crefformat{table}{Tab.\nobreak\hspace{0.25em}#2#1#3}
\crefname{algocf}{Alg.}{Algs.}
\Crefname{algocf}{Alg.}{Algs.}
\crefalias{AlgoLine}{line}

\makeatletter
\let\cref@old@stepcounter\stepcounter
\def\stepcounter#1{%
  \cref@old@stepcounter{#1}%
  \cref@constructprefix{#1}{\cref@result}%
  \@ifundefined{cref@#1@alias}%
    {\def\@tempa{#1}}%
    {\def\@tempa{\csname cref@#1@alias\endcsname}}%
  \protected@edef\cref@currentlabel{%
    [\@tempa][\arabic{#1}][\cref@result]%
    \csname p@#1\endcsname\csname the#1\endcsname}}

\makeatother


\newtheorem{proposition}{Proposition}
\newtheorem{lemma}{Lemma}

\newtheoremstyle{slanted}
{\topsep}
{\topsep}
{\slshape}
{}
{\bfseries}
{.}
{0.5em}
{}

\theoremstyle{slanted}
\newtheorem{theorem}{Theorem}
\numberwithin{theorem}{section}

\theoremstyle{slanted}
\newtheorem{definition}{Definition}
\numberwithin{definition}{section}

\usepackage{tikz}
\usetikzlibrary{positioning,calc,shapes,decorations.pathreplacing}

\colorlet{savedColor}{.}

\newcommand{\sys}{Teechain\xspace}

\newcommand{\txs}[1]{#1\unit{tx/sec}}

\newcommand{\ms}[1]{#1\unit{ms}}

\newcommand{\alice}{{\emph{A}}\xspace}
\newcommand{\bob}{{\emph{B}}\xspace}
\newcommand{\carol}{{\emph{C}}\xspace}

\newcommand{\utxo}{UTXO\xspace}

\newcommand{\bitcoin}{Bitcoin\xspace}
\newcommand{\blockchain}{blockchain\xspace}

\newcommand{\uk}{{\ensuremath{\mathit{UK}}}\xspace}
\newcommand{\us}{{\ensuremath{\mathit{US}}}\xspace}
\newcommand{\is}{{\ensuremath{\mathit{IL}}}\xspace}
\newcommand{\ukone}{{\ensuremath{\uk_{1}}}\xspace}

\newcommand{\isone}{{\ensuremath{\mathit{IL}_{1}}}\xspace}

\newcommand{\fromto}[2]		{\ensuremath{\mathit{#1},...\,,\mathit{#2}}}

\newcommand{\lorx}{\ \lor\ }

\newcount\ppnum
\newcommand\ppnumber[1]{%
        \ppnum=#1\relax
        \ifnum\ppnum<0
                $-$%
                \ppnum=-\ppnum
        \fi
        \let\pptemp\empty
        \loop\ifnum\ppnum>999
                \count255=\ppnum
                \divide\ppnum by1000
                \count255=\numexpr \count255 - 1000*\ppnum \relax
                \edef\pptemp{,\ifnum\count255<100 0\ifnum\count255<10 0\fi\fi
                             \the\count255 \pptemp}%
        \repeat
        \the\ppnum
        \pptemp
}


\newcommand{\signed}[1]		{\ensuremath{\overline{#1}}}

\newcommand{\chainSettleTx}{\ensuremath{\tau}\xspace} 
\newcommand{\signedChainSettleTx}{\ensuremath{\signed{\chainSettleTx}}\xspace}

\newcommand{\Cmd}[1]{\ensuremath{\textup{\textsf{#1}}}\xspace}
\newcommand{\Stage}[1]{\ensuremath{\textup{\textsf{#1}}}\xspace}

\newcommand{\cmdEject}{\Cmd{eject\_multihop}} 



\newcommand{\stageUpdate}{\Stage{postUpdate}}

\newcommand{\stageIdle}{\Stage{idle}}

\newcommand{\cmdAssignAsBackupFor}{\Cmd{assign\_comm\_chain}}
\newcommand{\cmdAddBackup}{\Cmd{addTail}}

\newcommand{\cmdAttest}{\Cmd{attest}}

\newcommand{\cmdStateUpdate}{\Cmd{update}}

\newcommand{\cmdLock}{\Cmd{lock}}

\newcommand{\stageLocked}{\Stage{lock}}

\newcommand{\cmdSign}{\Cmd{sign}}

\newcommand{\stageSigned}{\Stage{sign}}

\newcommand{\cmdPromiseA}{\Cmd{prepayment}}

\newcommand{\stagePromisedA}{\Stage{prepayment}}

\newcommand{\cmdPromiseB}{\Cmd{inter}}

\newcommand{\stagePromisedB}{\Stage{update}}

\newcommand{\cmdUpdate}{\Cmd{postpayment}}

\newcommand{\cmdRelease}{\Cmd{unlock}}
\newcommand{\stageRelease}{\Stage{unlock}}


\newcommand{\algCmd}[1]{\textcolor{blue}{#1}}
\newcommand{\cmdLockAlg}{\algCmd{\cmdLock}}
\newcommand{\cmdSignAlg}{\algCmd{\cmdSign}}
\newcommand{\cmdPromiseAAlg}{\algCmd{\cmdPromiseA}}
\newcommand{\cmdPromiseBAlg}{\algCmd{\cmdPromiseB}}
\newcommand{\cmdUpdateAlg}{\algCmd{\cmdUpdate}}
\newcommand{\cmdReleaseAlg}{\algCmd{\cmdRelease}}



\definecolor{airforceblue}{rgb}{0.36, 0.54, 0.66}
\definecolor{cornellred}{rgb}{0.7, 0.11, 0.11}
\definecolor{darkcyan}{rgb}{0.0, 0.55, 0.55}
\definecolor{chocolate(traditional)}{rgb}{0.48, 0.25, 0.0}

\newcommand{\commentx}[1]{\textcolor{darkcyan}{/* #1 */}}

\usepackage[linesnumbered,ruled,nofillcomment]{algorithm2e}
\SetAlFnt{\footnotesize} 
\SetKwBlock{KwFunction}{function}{}
\SetKwBlock{KwOn}{on}{}
\SetKwBlock{KwRepeat}{repeat}{}
\SetKwBlock{KwPeriodically}{periodically}{}
\SetKwBlock{KwState}{state}{}
\SetKwBlock{KwAtomic}{atomically}{}
\SetKwBlock{KwConcurrently}{concurrently}{}
\SetCommentSty{textup} 
\SetKwComment{KwIttayComment}{(}{)} 
\SetKw{KwAnd}{and} 
\SetKw{KwOr}{or}
\SetKw{KwSetTimer}{setTimer}
\SetKwBlock{KwExpTimer}{on timer expiration}{}
\SetKwFor{ForAll}{forall}{}

\let\oldnl\nl
\newcommand{\nonl}{\renewcommand{\nl}{\let\nl\oldnl}}

\makeatletter
\let\oldmarginnote\marginnote
\renewcommand*{\marginnote}[1]{%
   \begingroup%
   \ifodd\value{page}
     \if@firstcolumn\reversemarginpar\fi
   \else
     \if@firstcolumn\else\reversemarginpar\fi
   \fi
   \oldmarginnote{#1}%
   \endgroup%
}
\makeatother

\usepackage{marginnote}

\newcounter{todocounter}


\hyphenation{block-chain}



\usepackage{caption}
\captionsetup[table]{skip=3pt}


\setlength{\textfloatsep}{7pt}
\setlength{\dbltextfloatsep}{7pt}

\makeatletter
\g@addto@macro\normalsize{%
  \setlength\abovedisplayskip{1pt}
  \setlength\belowdisplayskip{1pt}
  \setlength\abovedisplayshortskip{1pt}
  \setlength\belowdisplayshortskip{1pt}
}
\makeatother


\setcopyright{none}

\begin{document}

\title{Teechain: A Secure Payment Network with Asynchronous Blockchain Access}

\author{Joshua Lind}
\affiliation{%
  \institution{Imperial College London}
}

\author{Oded Naor}
\affiliation{%
  \institution{Technion - Israel Institute of Technology and IC3}
}

\author{Ittay Eyal}
\affiliation{%
  \institution{Technion - Israel Institute of Technology and IC3}
}

\author{Florian Kelbert}
\affiliation{%
  \institution{Imperial College London}
}

\author{Emin G\"{u}n Sirer}
\affiliation{%
  \institution{Cornell University and IC3}
}

\author{Peter Pietzuch}
\affiliation{%
  \institution{Imperial College London}
}

\renewcommand{\shortauthors}{Lind et al.}

\sloppy

\newcommand{\myPubKey}{\ensuremath{ \textit{$K_{\mathit{me}}$} }\xspace} 
\newcommand{\myPrivKey}{\ensuremath{ \textit{$k_{\mathit{me}}$} }\xspace}
\newcommand{\remotePubKey}{\ensuremath{ \textit{$K_{\mathit{remote}}$} }\xspace} 
\newcommand{\remotePrivKey}{\ensuremath{ \textit{$k_{\mathit{remote}}$} }\xspace}
\newcommand{\pubKey}{\ensuremath{ \textit{K} }\xspace}  
\newcommand{\privKey}{\ensuremath{ \textit{k} }\xspace}  

\newcommand{\channelID}{\ensuremath{ \textit{id} }\xspace}

\newcommand{\btcAdd}{\ensuremath{ \textit{$a_{\mathit{btc}}$} }\xspace}
\newcommand{\myBtcAdd}{\ensuremath{ \textit{$a^{\mathit{me}}_{\mathit{btc}}$} }\xspace} 
\newcommand{\remoteBtcAdd}{\ensuremath{ \textit{$a^{\mathit{remote}}_{\mathit{btc}}$} }\xspace}   
\newcommand{\bitcoinPrivateKeys}{\ensuremath{ \textit{btcPrivs} }\xspace}
\newcommand{\btcPrivKey}{\ensuremath{ \textit{$k_{\mathit{btc}}$} }\xspace}
\newcommand{\encBtcPrivKey}{\ensuremath{ \textit{$k^{\mathit{enc}}_{\mathit{btc}}$} }\xspace}

\newcommand{\deposits}{\ensuremath{ \textit{allDeps} }\xspace} 
\newcommand{\freeDeposits}{\ensuremath{ \textit{freeDeps} }\xspace}
\newcommand{\approvedDeposits}{\ensuremath{ \textit{appDeps} }\xspace} 

\newcommand{\cmdGetAddr}{\ensuremath{ \textsf{newAddr} }\xspace} 
\newcommand{\cmdNewDeposit}{\ensuremath{ \textsf{new\_deposit} }\xspace} 
\newcommand{\cmdRemoveDeposit}{\ensuremath{ \textsf{release\_deposit} }\xspace}
\newcommand{\newAddresses}{\ensuremath{ \textit{newAddresses} }\xspace} 

\newcommand{\networkChannelAESKey}{\ensuremath{ \textit{net}_\textit{aes} }\xspace} 
\newcommand{\channelPubKey}{\ensuremath{ \textit{c}_\textit{remote\_K} }\xspace} 

\newcommand{\channelI}{\ensuremath{ \textit{c}_\textit{i} }\xspace} 
\newcommand{\channelMyDeposit}{\ensuremath{ \textit{c}_\textit{my\_deps} }\xspace} 
\newcommand{\channelTheirDeposit}{\ensuremath{ \textit{c}_\textit{remote\_deps} }\xspace}
\newcommand{\channelApprovedDeposit}{\ensuremath{ \textit{channel}_\textit{approvedDeposits} }\xspace}

\newcommand{\channelMyAddress}{\ensuremath{ \textit{c}_\textit{my\_add} }\xspace} 
\newcommand{\channelTheirAddress}{\ensuremath{ \textit{c}_\textit{remote\_add} }\xspace} 
\newcommand{\channelMyBalance}{\ensuremath{ \textit{c}_\textit{my\_bal} }\xspace}
\newcommand{\channelTheirBalance}{\ensuremath{ \textit{c}_\textit{remote\_bal} }\xspace} 
\newcommand{\channelIsReady}{\ensuremath{ \textit{c}_\textit{is\_open} }\xspace} 
\newcommand{\cmdNewPaymentChannel}[1][]{\ensuremath{ \ifthenelse{\isempty{#1}}{}{#1}{\textsf{new\_pay\_channel}} }\xspace}
\newcommand{\cmdNewNetworkChannel}[1][]{\ensuremath{ \ifthenelse{\isempty{#1}}{}{#1}{\textsf{newNetworkChannel}} }\xspace}
\newcommand{\msgNewChannelAck}{\ensuremath{ \footnotesize{\mathit{newChannelAck}} }\xspace}
\newcommand{\channelPayCount}{\ensuremath{ \textit{channel}_\textit{payCount} }\xspace} 
\newcommand{\channelReceiveCount}{\ensuremath{ \textit{channel}_\textit{receiveCount} }\xspace}
\newcommand{\channelStage}{\ensuremath{ \textit{channel}_\textit{stage} }\xspace} 

\newcommand{\cmdApproveMyDeposit}[1][]{\ensuremath{ \ifthenelse{\isempty{#1}}{}{#1}{\textsf{approve\_deposit}} }\xspace} 
\newcommand{\msgApproveMyDeposit}[1][]{\ensuremath{ \ifthenelse{\isempty{#1}}{}{#1}{\textsf{approveDeposit}} }\xspace} 

\newcommand{\cmdApproveTheirDeposit}{\ensuremath{ \textbf{approveTheirDeposit} }\xspace} 
\newcommand{\msgApprovedDeposit}{\ensuremath{ \textsf{approvedDeposit} }\xspace}

\newcommand{\cmdAssociateMyDeposit}[1][]{\ensuremath{ \ifthenelse{\isempty{#1}}{}{#1}{\textsf{associate\_deposit}} }\xspace} 
\newcommand{\msgAssociatedDeposit}{\ensuremath{ \textsf{associatedDeposit} }\xspace} 

\newcommand{\cmdAssociateTheirDeposit}{\ensuremath{ \textbf{associateTheirDeposit} }\xspace} 

\newcommand{\cmdDissociateMyDeposit}[1][]{\ensuremath{ \ifthenelse{\isempty{#1}}{}{#1}{\textsf{dissociate\_deposit}} }\xspace} 
\newcommand{\msgDissociatedDeposit}{\ensuremath{ \textsf{dissociatedDeposit} }\xspace} 
\newcommand{\msgDissociatedDepositAck}{\ensuremath{ \textsf{dissociatedDepositAck} }\xspace} 

\newcommand{\cmdPay}[1][]{\ensuremath{ \ifthenelse{\isempty{#1}}{}{#1}{\textsf{pay\_channel}} }\xspace}
\newcommand{\cmdPayMultihop}[1][]{\ensuremath{ \ifthenelse{\isempty{#1}}{}{#1}{\textsf{pay\_multihop}} }\xspace} 
\newcommand{\msgPaid}{\ensuremath{ \textsf{paid} }\xspace} 

\newcommand{\cmdReceive}[1][]{\ensuremath{ \ifthenelse{\isempty{#1}}{}{#1}{\textsf{receive}} }\xspace} 
\newcommand{\cmdSettle}[1][]{\ensuremath{ \ifthenelse{\isempty{#1}}{}{#1}{\textsf{settle\_channel}} }\xspace}

\newcommand{\tx}{\textsf{tx}}
\newcommand{\dx}{\textsf{d}}
\newcommand{\cx}{\textsf{c}}
\newcommand{\trx}{\textsf{$p_{t_i}$}}
\newcommand{\trxn}{\textsf{$p_{t_n}$}}
\newcommand{\trxs}{\textsf{$s_{t_i}$}}

\newcommand{\atx}{\textsf{$t_A$}}
\newcommand{\btx}{\textsf{$t_B$}}
\newcommand{\itx}{\textsf{$t_i$}}
\newcommand{\onetx}{\textsf{$t_1$}}
\newcommand{\ntx}{\textsf{$t_n$}}
\newcommand{\nonetx}{\textsf{$t_{n+1}$}}

\newcommand{\id}{\textsf{id}}
\newcommand{\did}{\textsf{d\_id}}
\newcommand{\cid}{\textsf{c\_id}}
\newcommand{\chan}{\textsf{chan}}
\newcommand{\val}{\textsf{val}}
\newcommand{\amount}{\textsf{val}}
\newcommand{\ap}{\textsf{apprv}}
\newcommand{\mybal}{\textsf{my\_bal}}
\newcommand{\rembal}{\textsf{rem\_bal}}
\newcommand{\sendtorem}{send\_to\_remote}
\newcommand{\tpub}{\textsf{pub}}
\newcommand{\prex}{\textsf{pre}}
\newcommand{\postx}{\textsf{post}}
\newcommand{\interx}{\textsf{inter}}

\newcommand{\multi}{\textsf{state}}
\newcommand{\multiamount}{\textsf{val}}
\newcommand{\cpos}{\textsf{pos}}
\newcommand{\cpath}{\textsf{$\fromto{\cx_1}{\cx_n}$}}
\newcommand{\mypos}{\textsf{my\_position\_in\_path}}
\newcommand{\this}{\textsf{this}}
\newcommand{\pred}{\textsf{pred}}
\newcommand{\succx}{\textsf{succ}}
\newcommand{\res}{\textsf{resp}}
\newcommand{\currstate}{\textsf{curr\_state}}
\newcommand{\stx}{\textsf{s}}
\newcommand{\ackx}{\textsf{ack}}

\begin{abstract}
  Blockchains such as Bitcoin and Ethereum execute payment transactions
  securely, but their performance is limited by the need for global
  consensus. Payment networks overcome this limitation through \emph{off-chain}
  transactions. Instead of writing to the blockchain for each transaction, they
  only settle the final payment balances with the underlying blockchain. When
  executing off-chain transactions in current payment networks, parties must
  access the blockchain within bounded time to detect misbehaving parties that
  deviate from the protocol. This opens a window for attacks in which a
  malicious party can steal funds by deliberately delaying other parties'
  blockchain access and prevents parties from using payment networks when
  disconnected from the blockchain.

  We present \emph{\sys}, the first layer-two payment network that executes
  off-chain transactions \emph{asynchronously} with respect to the underlying
  blockchain.
  To prevent parties from misbehaving, \sys uses \emph{treasuries}, protected
  by hardware trusted execution environments~(TEEs), to establish off-chain
  payment channels between parties. Treasuries maintain collateral funds and
  can exchange transactions efficiently and securely, without interacting with the
  underlying blockchain. To mitigate against treasury failures and to avoid
  having to trust all TEEs, \sys replicates the state of treasuries using
  \emph{committee chains}, a new variant of chain replication with threshold
  secret sharing. \sys achieves at least a 33$\times$ higher transaction
  throughput than the state-of-the-art Lightning payment network. A 30-machine
  \sys deployment can handle over 1~million Bitcoin transactions per second.
\end{abstract}

\begin{teaserfigure}
\begin{center}
This is an extended version of the SOSP 2019 paper~\cite{teechain_sosp}.
\end{center}
\bigbreak
\end{teaserfigure}

\settopmatter{printfolios=false, printacmref=false}

\maketitle


\section{Introduction}
\label{sec:intro}

Cryptocurrencies, such as Bitcoin~\cite{nakamoto2008bitcoin} and
Ethereum~\cite{wood2016ethereum}, offer secure payments between distrusting
parties using blockchains. Existing blockchains have limited performance due to
their need for consensus across all transactions: global throughput is capped
at a handful of transactions per second; transactions take minutes to hours to
be processed and parties must maintain a history of every transaction executed.

\emph{Payment networks}, such as Lightning~\cite{poon2016bitcoin} and
Raiden~\cite{raiden2017source}, have been proposed as a more performant second
layer on top of a blockchain. They allow parties to move fund deposits from the
blockchain into point-to-point \emph{payment channels}. Parties then exchange
payment transactions directly \emph{off-chain} via these channels, without
having to involve the blockchain. Before a channel is terminated, it is
\emph{settled} by writing its final balance as a transaction back to the
blockchain. Payment networks can therefore operate with higher transaction
throughput and lower latency than blockchains~\cite{tadgescalability}.

Protocols for payment channels must ensure that parties cannot steal funds. In
particular, only the most recent channel balance must be settled on the
blockchain; otherwise a malicious party can settle a channel at a previous
balance. Existing protocols thus require parties to \emph{monitor} the
underlying blockchain~\cite{poon2016bitcoin}: if a party observes that a stale
balance is settled on the blockchain, they have a bounded \emph{reaction
  time}~$\Delta$ to invalidate the settlement. This requirement for
\emph{synchronous blockchain access}, \ie parties must read blockchain
transactions and write them within $\Delta$, has drawbacks: (i)~it makes
payment networks vulnerable to attacks in which an adversary deliberately
delays writes to~\cite{secbitfomo3d, superfreek2017steemit,
  pearson2015wikileaksAttacks, BTCConfirmationTimes2018, buntinx2017ICO,
  dwarfpool2016emptyblocks, young2017mempool} or reads from the
blockchain~\cite{marcus2018low} beyond $\Delta$ to steal funds; (ii)~it
prevents parties from using payment networks without connectivity to the
blockchain; and (iii)~it complicates the cryptographic protocols and
the number of messages exchanged because parties must provide each other with
means to cancel stale settlements~\cite{poon2016bitcoin}.

Our key idea is that, rather than requiring parties to rely on the underlying
blockchain to detect misbehaviour during off-chain transactions, we explore a
design for a payment network in which parties use \emph{trusted execution
  environments}~(TEEs)~\cite{costan2016sanctum, openenclave} as a root-of-trust
to enforce faithful protocol execution. TEEs are a hardware security feature in
modern CPUs~\cite{sgx14, trustzone} that ensures the confidentiality and
integrity of code and data. At the same time, we want our design to be
resilient against TEE failures and attacks that compromise a subset of the
TEEs~\cite{van2018foreshadow, brasser2017software, moghimi2017cachezoom,
  o2018spectre}.

We describe \emph{\sys}, a new payment network that supports secure and
performant payments on existing blockchains. \sys only requires
\emph{asynchronous blockchain access}, \ie parties are not assumed to read and
write transactions on the blockchain within bounded time. \sys uses trusted
\emph{treasuries}, which are protected by TEEs, to maintain fund deposits for
off-chain payment channels. By relying on TEEs, treasuries can employ a new
efficient off-chain payment protocol that simplifies both payment and
settlement. To mitigate against TEE failures or compromises, treasuries
replicate their state among a \emph{committee} of treasuries. Within each
committee, a treasury must have approval from a subset of other committee
treasuries to make an off-chain transaction or settle a payment channel. TEEs
therefore improve the efficiency of payment channels but the security of \sys
does not depend on that of individual TEEs.

Overall the design of \sys makes three contributions:

\mypar{(C1)~Dynamic deposits with treasuries}
Due to their binding with a blockchain, existing payment networks only support
a fixed assignment of deposits to channels: parties cannot add or remove
deposits after a payment channel is established. Instead, \sys separates the
ownership of fund deposits and channel assignment using treasuries. It only
requires blockchain interaction during the initial creation of a fund deposit,
whereby a treasury exclusively owns each deposit by storing the private keys
for that deposit in a TEE.
Parties can assign deposits to channels upon establishment using the
treasuries, and move them in and out of channels at runtime. Since deposit
assignment does not require blockchain access, new payment channels are
established within seconds.

\mypar{(C2)~Payments with asynchronous blockchain access}
After associating a fund deposit with a channel, a party makes a payment
through a single integrity-protected message exchange. A payment message
decrements the channel balance of its treasury and increments the balance of
the recipient's treasury. This is done by duplicating the pair of balances
across both treasuries, and updating them atomically. To settle the channel, a
party requests a settlement transaction from the treasury, which is a
blockchain transaction with the final balance. Settlement transactions can be
written to the blockchain in unbounded time because the treasuries ensure that
only a single transaction can be generated for a channel.

\mypar{(C3)~Committee chains} As private keys maintained by treasuries to spend
fund deposits are stored inside TEEs, accidental TEE failures or malicious TEE
compromises could result in fund loss or theft. \sys therefore uses
\emph{committee chains}, which are committees of treasuries responsible for
managing deposits. To replicate deposit balances in a committee chain, \sys
employs a new \emph{force-freeze replication} protocol that prevents roll-back
attacks. If a treasury in the chain fails to update its balance after a payment
or tries to roll-back to a stale balance, the state of all treasuries is
frozen, and they can only settle their balances safely. To mitigate against
compromised TEEs~\cite{van2018foreshadow}, the committee chain uses the
\emph{multi-signature} support~\cite{bitcoin-multisig} of the underlying
blockchain: a threshold number of signatures by treasuries from the committee
chain are necessary to settle a payment channel.

\tinyskip

\noindent
We implement \sys using Intel's SGX TEEs~\cite{linux-sgx-sdk-dev-reference} and
deploy it on Bitcoin.\footnote{An initial release of \sys can be found at:
  \href{https://teechain.network}{https://teechain.network}.} \sys achieves substantially higher throughput
due to its more efficient off-chain payment protocol between treasuries:
compared to the Lightning Network~\cite{lightning2017source}, \sys handles
33$\times$--145$\times$ more payment transactions depending on the size of
committee chains. Channel establishment takes seconds, as opposed to minutes or
hours~\cite{decker2015duplex, poon2016bitcoin}. \sys also reduces the number of
transactions stored on the blockchain by at least~25\% compared to the Lightning Network.

\section{Secure Payment Networks for Blockchains}

\label{sec:related}
\label{sec:background:scalability}
\label{sec:background:channels}

\subsection{Blockchain protocols}
\label{subsec:blockchains}

In cryptocurrencies such as Bitcoin~\cite{nakamoto2008bitcoin},
Ethereum~\cite{ethereum2015white} and Zerocash~\cite{sasson2014zerocash}, a set
of nodes connect over a peer-to-peer network to operate as a replicated state
machine. This state machine maintains an append-only \emph{ledger} that
contains the history of all transactions in the system. Each transaction is a
payment from one system user, a \emph{party}, to another, secured
cryptographically. The ledger is a chain of \emph{blocks}, or
\emph{\blockchain}, such that each block contains a list of transactions.

Each transaction is a list of instructions that update the state of the
blockchain. Different cryptocurrencies implement transactions that move funds
differently: Bitcoin~\cite{nakamoto2008bitcoin} follows an
\emph{unspent-transaction-output}~(\utxo) model in which transactions consume, or use as \emph{input}, a set of previously unspent
  transactions, where the \emph{output} of those transactions are owned by the sender. A
  payment transaction therefore consumes unspent input transactions and
  generates new output transactions that recipients can spend;
Ethereum~\cite{ethereum2015white} uses an \emph{account model} in which a
user's account balance is represented as an integer stored on the blockchain
and updated by transactions.

Users are represented by cryptographic public keys. A user's transaction is
validated with a cryptographic signature produced by the matching private
key. To prevent users from \emph{double-spending}, \ie signing multiple
transactions that spend the same funds, blockchains enforce that funds can only
be spent once by making double-spending transactions \emph{conflict}: only one
transaction in a set of conflicting transactions can be written to the
blockchain. Transactions may also support more elaborate conditions such as
$m$-out-of-$n$ \emph{multi-signatures} that require signing by multiple users:
such transactions must be signed by any $m$~keys from a set of $n$~keys.

In blockchains, nodes must agree on the order of transactions, \ie they must
reach consensus. The details of blockchain consensus are immaterial to this
work---we treat consensus as a black box. Consensus, however, limits
transaction throughput~\cite{vukolic2015quest} and incurs high storage
costs. In Bitcoin, global throughput is limited to 7~transactions per
second~\cite{nakamoto2008bitcoin}, and the total size of the blockchain is 100s
of GBs~\cite{decker2015duplex}. Due to consensus, transactions may also take
arbitrarily long to be written to the blockchain---minutes or even days~\cite{BTCConfirmationTimes2018}.

\subsection{Payment networks and channels}
\label{sec:background}

\emph{Payment networks}~\cite{malavolta2017concurrency}, such as
Lightning~\cite{poon2016bitcoin} and Raiden~\cite{raiden2017source}, try to
overcome the performance limitations of blockchains by allowing parties to
exchange funds directly, \emph{off-chain}. To execute a transaction, they
establish a point-to-point \emph{payment channel}~\cite{lind2016teechan,
  miller2017sprites, poon2017plasma, decker2015duplex, poon2016bitcoin}. A
payment channel is a protocol between two parties, \alice and \bob, that
updates their balances directly through message exchange. When a payment
channel is closed, the payment network \emph{settles} the channel by writing
the final balances of \alice and \bob back to the blockchain using a
\emph{settlement} transaction. Since payment networks do not write to the
blockchain for each transaction, their transaction throughput is higher and
latencies lower compared to on-chain payments~\cite{poon2016bitcoin}. Payment
networks also reduce the number of transactions stored on the blockchain
because only final balances are recorded~\cite{poon2016bitcoin,
  raiden2017source}.

To establish a payment channel~$c$, as shown in \Cref{fig:sync_access}, one or
both of \alice and \bob write \emph{fund deposit} transactions to the
blockchain. These place funds into a 2-out-of-2 \emph{multi-signature}
account~\cite{bitcoin-multisig} owned by both parties, and requires both \alice
and \bob to cryptographically sign any transaction in order to spend the
funds. \alice creates a fund deposit~$d$ of \$1000 for
$c$~(step~\myc{1}). Using the fund deposits, \alice and \bob can then execute
payment transactions: a new payment transaction is generated and signed by both
parties, spending from the channel deposits and reflecting the new
balances. For example, \alice pays \bob \$100 using $\mathit{tx}_1$ signed by
both \alice and \bob~(step~\myc{2}), and \bob, whose balance is now \$100,
sends \alice \$50 using $\mathit{tx}_2$, also signed by both
parties~(step~\myc{3}). Note that $\mathit{tx}_1$ and $\mathit{tx}_2$ do not
require interaction with the blockchain and that each payment takes into
account all previous payments and updates the current state of the payment
channel. At any time, either \alice or \bob may close the channel by writing
the most recent payment transaction to the blockchain: \bob~settles the channel
by writing $\mathit{tx}_2$ to the blockchain with their final
balances~(step~\myc{4}).

Payment networks also support \emph{multi-hop} payments~\cite{poon2016bitcoin,
  malavolta2017concurrency, miller2017sprites} in which multiple payment
channels, $c_1$ to~$c_n$, are concatenated to form a payment path. This allows
for payments between parties that do not have a direct payment
channel. This makes payment networks useful in practice, because it
  allows payments between parties without long-lived financial relationships,
  \eg e-commerce buyers and sellers who conduct transactions via
  intermediaries~\cite{ali2017nuts} such as Amazon~\cite{Amazon} and
  eBay~\cite{ebay}.
Similar to a single payment channel, any party along the path can unilaterally
settle its channels. The added guarantee is atomicity: either all
channels~$c_1$ to~$c_n$ are settled at the state after the multi-hop payment,
or all settle before it.

\begin{figure}[tb]
  \centering
  \includegraphics[width=0.95\columnwidth]{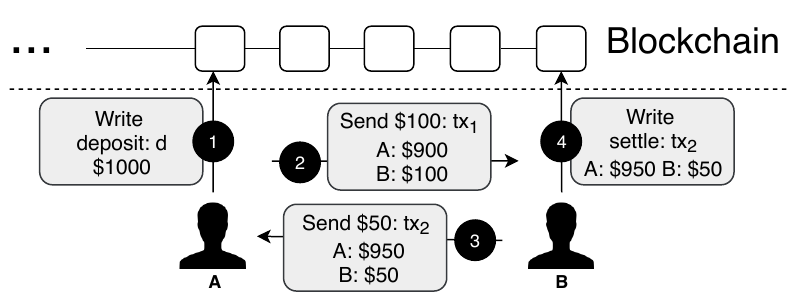}
  \caption{Payment channel in operation}\label{fig:sync_access}
\end{figure}

\begin{figure*}[tb]
  \centering
  \includegraphics[width=\linewidth]{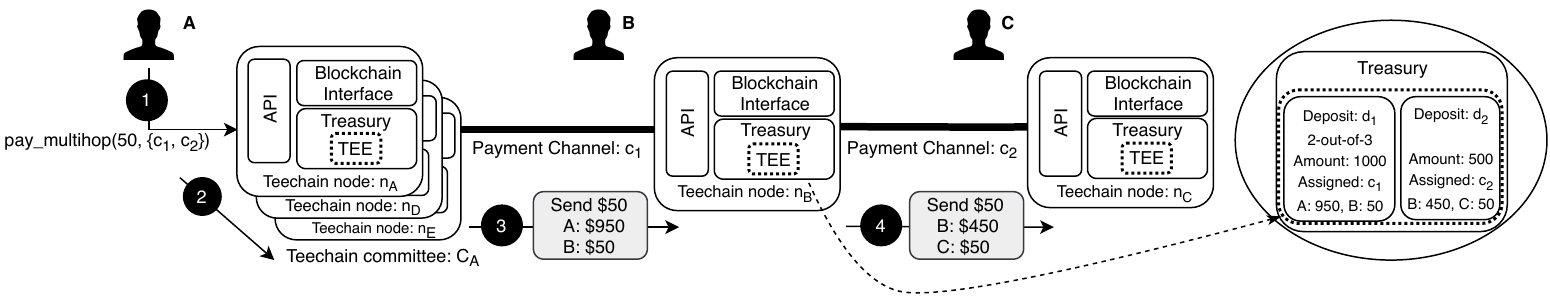}
  \caption{\sys overview \textmd{(\sys nodes operate
      \emph{treasuries} to store and manage funds. Users construct
      \emph{payment channels} between nodes to exchange funds directly, and
      execute \emph{multi-hop} payments along concatenated payment
      channels. \emph{Committee chains} with multiple treasuries replicate and
      protect state.)}}\label{fig:overview}
\end{figure*}

\subsection{Limitations of payment networks}
\label{subsec:limits}

To avoid fund theft or loss, payment networks must only settle channels with
the most recent payment transaction; otherwise a malicious party can launch a
\emph{roll-back} attack in which they settle the channel at a previous payment
transaction with a stale balance. For example, in
\Cref{fig:sync_access}, step~\myc{4}, if \bob settled $c$ using
$\mathit{tx}_1$ instead of $\mathit{tx}_2$ it would allow \bob to steal
\$50 from \alice.

Existing payment networks~\cite{poon2016bitcoin, decker2015duplex,
  poon2017plasma} overcome this problem by requiring parties to detect
misbehaviour using information available on the blockchain: when using a
payment channel, each party monitors the blockchain for a settlement
transaction written by its counterparty to settle the channel. If an old
settlement transaction is written, the party negates its effect by writing the
most up-to-date settlement transaction to the blockchain within a bounded
\emph{reaction time}~$\Delta$.

For this mechanism to work, the payment network must assume that parties can
read and write transactions on the blockchain within the fixed upper
bound~$\Delta$. We refer to this assumption as \emph{synchronous blockchain
  access}.

In practice, it is not always possible to ensure synchronous blockchain access
during payment channel operation. The load on the blockchain may
result in long queues to write
transactions~\cite{BTCConfirmationTimes2018}. Moreover, an attacker may delay
transaction writes deliberately, such as by controlling the order
in which transactions are written~\cite{young2017mempool,
  pearson2015wikileaksAttacks, superfreek2017steemit}, or censoring
transactions~\cite{secbitfomo3d, buntinx2017ICO,
  dwarfpool2016emptyblocks}. Attackers may also partition victims from the network~\cite{ marcus2018low}, preventing them
from accessing the blockchain at all. Current payment networks therefore face a
trade-off when selecting the reaction time~$\Delta$: a
short~$\Delta$ allows for quick settlement but facilitates the above attacks.

The requirement for synchronous blockchain access also prevents parties from
using payment channels when they are \emph{disconnected} from the
blockchain. This negates one of the benefits of payment networks: parties can
no longer exchange payments directly with only point-to-point network
connections. For example, it becomes impossible to use a payment channel
between two devices that are directly connected, but do not have
  connectivity to the Internet and thus the rest of the
  blockchain~\cite{dryja2016monitor}.


\section{\sys Design}
\label{sec:design}
\label{sec:background:async}
\label{sec:designOverview}

\newcommand{\popt}{\ensuremath{\mathit{PoPT}}\xspace}
\newcommand{\popts}{\ensuremath{\mathit{PoPT}s}\xspace}

Next we introduce how \sys uses trusted
execution~(\Cref{sec:design:root_of_trust}), state the threat
model~(\Cref{sec:design:threat_model}), give an overview of its
design~(\Cref{sec:design:overview}), describe
treasuries~(\Cref{sec:treasuries}) and committees~(\Cref{sec:committeeChains}),
and analyse how the design handles different
threats~(\Cref{sec:design:threat_analysis}).

\subsection{Trusted execution as a root-of-trust}
\label{sec:design:root_of_trust}

The requirement for synchronous blockchain access in existing
payment networks comes from the fact that their protocols use the blockchain as
a root-of-trust: parties executing the payment protocol monitor the blockchain
to discover when other parties deviate from the protocol, and react
appropriately.

We explore a design that introduces a separate root-of-trust that,
independently of the blockchain, ensures the faithful execution of a payment
protocol. Our idea is for the payment network to use \emph{trusted execution
  environments}~(TEEs)~\cite{sgx14, kaplan2016amd} during the execution of a payment
protocol. TEEs are encrypted and integrity-protected memory regions, which are
isolated by the CPU hardware from the rest of the software stack, including
higher privileged system software. Multiple TEE implementations are commercially available, including Intel SGX~\cite{sgx14}, ARM TrustZone~\cite{trustzone} and
AMD SEV~\cite{kaplan2016amd}, with several others currently under way, such as KeyStone Enclave~\cite{keystoneEnclave}, Multizone~\cite{multizone} and OP-TEE~\cite{optee}.
Intel CPUs from the Skylake generation
onwards support \emph{Software Guard Extensions}~(SGX)~\cite{sgx}, a set of new
instructions that permit applications to create TEEs called SGX
enclaves.

By using TEEs as an independent root-of-trust, we want our design to only
require \emph{asynchronous blockchain access}, \ie the payment protocol must
not assume that transaction reads and writes to the blockchain complete within
a fixed upper bound, but only complete \emph{eventually}. To achieve
asynchronous blockchain access, a payment network must protect the security of
funds, regardless of blockchain access times.

We define the security of funds in terms of \emph{balance security}: at any
time during the payment protocol execution, each party should be able to
perform a finite set of actions that eventually results in them receiving their
\emph{perceived balance} on the underlying blockchain. A party's perceived
balance is their initial balance on the blockchain plus any payments received
in the payment network, minus any payments made. Our design must ensure balance
security regardless of how long transaction reads and writes may take.

\vspace{-0.2em}
\subsection{Threat model}
\label{sec:design:threat_model}

We assume that mutually distrusting parties use a blockchain to exchange funds
and that their machines have TEEs. Parties trust their own machines, including
the hardware and software, but distrust the machines of others~\cite{gudgeon2019SoKpaymentchannels}. 
We assume that TEEs on machines are normally trustworthy, but a subset of
TEEs may suffer arbitrary integrity and confidentiality compromises. They may
be compromised by other parties or external attackers who want to
violate balance security (\cref{sec:design:root_of_trust}).

Parties are
rational, selfish and potentially malicious, \ie they may
attempt to steal funds and deviate from the payment protocol, if it benefits them. We also assume
that parties may collude with one another. Parties are connected via a network, with some behind firewalls or
network address translation~(NAT). Parties may drop, modify and replay
messages. An attacker may delay or prevent others from accessing the blockchain
for an unbounded amount of time, but we assume this cannot occur
indefinitely.

\subsection{Design overview}
\label{sec:design:overview}

\begin{table*}[tb]
    \caption{\sys API}\label{tab:api}
    \centering
    \setlength\tabcolsep{0.6em}
    \resizebox{0.95\linewidth}{!}{%
    \begin{tabular}{llll}
    \toprule
    \textbf{\sys API} & \textbf{Inputs} & \textbf{Outputs} & \textbf{API Description} \\
    \midrule
    
	Deposits (\cref{subsec:deposits}) & \\
    \quad {\cmdNewDeposit} & \tx, $ \tpub_1 \ldots{} \tpub_n$ & \did & Creates a new fund deposit (\did) using  a transaction (\tx) and a set of treasury public keys \\
    \quad {\cmdRemoveDeposit} & \did & \tx & Refunds an unassociated fund deposit (\did) by generating and returning a transaction (\tx) \\
    \quad {\cmdApproveMyDeposit} & \did , \tpub & $\top | \bot$ & Requests approval for a deposit (\did) from a specific treasury (\tpub) \\
    
    Payment channels (\cref{subsec:channels}) & \\
    \quad {\cmdNewPaymentChannel} & \tpub & \cid & Creates a new payment channel (\cid) with a given treasury (\tpub)  \\
    \quad {\cmdAssociateMyDeposit} & \did, \cid & $\top | \bot$ & Associates an approved fund deposit (\did) with an existing payment channel (\cid) \\
    \quad {\cmdDissociateMyDeposit} & \did, \cid & $\top | \bot$ & Dissociates a previously associated fund deposit (\did) from a payment channel (\cid) \\
    \midrule
    Payments (\cref{subsec:channels}) & \\
    \quad {\cmdPay} & \amount, \cid & $\top | \bot$  & Pays an amount (\amount) to the other user in a payment channel (\cid) \\
    \quad {\cmdPayMultihop} & \amount, $\cid_1 \ldots{} \cid_n$ & $\top | \bot$  & Executes a multi-hop payment of an amount (\val) along a given path of payment channels \\

    Settlement (\cref{subsec:settlement}) & \\    
    \quad {\cmdSettle} & \cid & \tx & Settles a payment channel (\cid) by generating a settlement transaction (\tx) \\
    \quad {\cmdEject} & \cid  & \tx & Settles a payment channel (\cid) during the execution of a multi-hop payment \\
    \quad {\cmdEject \textsf{popt}} &  \cid, \textsf{popt} & \tx & Settles a payment channel (\cid) using a \popt (\textsf{popt}) during a multi-hop payment \\
    \midrule
    Chain replication (\cref{sec:fault}) & \\
    \quad {\cmdAssignAsBackupFor} & \tpub & $\top | \bot$ & Assigns this treasury to a committee chain by joining the last treasury (\tpub) in the chain \\
    \bottomrule
      \end{tabular}}
  \end{table*}

\Cref{fig:overview} shows the design of \sys. \sys constructs a peer-to-peer
payment network in which parties operate \sys \emph{nodes}, \eg node~$n_A$ is
operated by party~\alice. Each node comprises: (i)~an API for parties to
interact with the payment network; (ii)~an interface through which to read and
write blockchain transactions; and (iii) a~TEE-protected \emph{treasury} that
securely holds and manages parties' funds.

\emph{Treasuries} ensure the faithful execution of the payment protocol. They
are external to the blockchain and manage payment channels, execute payment
transactions and control the access to funds. To avoid blindly trusting
treasuries to behave honestly, \sys uses TEEs to ensure the confidentiality and
integrity of treasuries.
By using TEEs, \sys achieves asynchronous blockchain access because treasuries
operate correctly, autonomously and protect against misbehaviour by parties
without having to interact with the blockchain.

As TEE implementations may suffer from confidentiality, integrity and
availability failures~\cite{van2018foreshadow, brasser2017software,moghimi2017cachezoom}, \sys avoids trusting individual TEEs for
security. Instead, \sys operates \emph{committees} of treasuries: these are
groups of treasuries that manage a single collection of funds
together. \Cref{fig:overview} shows a committee~$C_A$ that constitutes of the
treasures at nodes~$n_A$, $n_D$ and $n_E$. Within each committee, a treasury
must have approval from a subset of other treasuries to make an off-chain
transaction or settle a payment channel.

\label{sec:api}
  

\Cref{tab:api} shows the API that \sys provides to parties. It supports (i)~creating deposits~(\cref{subsec:deposits}), (ii)~operating payment
channels~(\cref{subsec:channels}) and (iii)~constructing
committees~(\cref{sec:fault}). \sys generates unique identifiers for each
deposit and channel, \eg when a deposit is created ({\footnotesize
  \cmdNewDeposit}), a unique identifier is returned as a
handle to be used in subsequent API calls. Treasuries are identified through
unique public keys.


To execute payments, \sys forms \emph{payment channels} between nodes with network connectivity. Treasuries communicate via these channels to
update payment balances. \Cref{fig:overview} shows channel $c_1$ between \alice and \bob; and $c_2$ between \bob and \carol.


\emph{Multi-hop payments} can be executed along payment channel paths. In
\Cref{fig:overview}, a payment from \alice to \carol is executed: \alice
invokes the API to execute a multi-hop payment of \$50 along path $c_1$--$c_2$
to \carol~(step~\myc{1}); node~$n_A$ notifies the treasuries of its committee
of the upcoming balance update~(step~\myc{2}); the treasuries for nodes~$n_A$
and $n_B$ update the balances of \alice and \bob in $c_1$~(step~\myc{3}); and
the treasuries for nodes~$n_B$ and $n_C$ update the balances of \bob and \carol
in $c_2$~(step~\myc{4}). The final state is that \alice's balance has been
deducted \$50 in $c_1$, \bob's balance incremented by \$50 in $c_1$ and
decremented by \$50 in $c_2$, and \carol's balance incremented by \$50 in
$c_2$.

\subsection{Treasuries}
\label{sec:treasuries}

Treasuries generate public/private key pairs for \emph{treasury addresses},
which are cryptocurrency addresses owned exclusively by a treasury. They are
generated securely inside each TEE, and their private keys are stored in TEE
memory.

Parties can send funds to these addresses in the form of fund
\emph{deposits}. A call to {\footnotesize \cmdNewDeposit} from \Cref{tab:api}
creates a deposit. It requires a deposit transaction~{\footnotesize $\tx$},
which sends funds to a set of treasury addresses, identified by the treasury
public keys, {\footnotesize $\tpub_1$ \ldots $\tpub_n$}.
In \Cref{fig:overview}, deposit~$d_2$ sends \$500 to the treasury at node
$n_B$. Deposits can be associated by a treasury with a payment channel, thus
incrementing the balance of that party in the channel. \Cref{fig:overview}
shows two deposits registered with the treasury of node~$n_B$: $d_1$ of \$1000
assigned to channel~$c_1$; $d_2$ of \$500 assigned to channel~$c_2$.

Parties must verify the integrity of treasuries before trusting them with
funds; \sys uses the remote attestation support of TEEs for
verification~\cite{johnson2016intel, Intel2016RemoteAttesation}. A TEE
(i)~measures the treasury code; (ii)~cryptographically signs the measurement
and the treasury's public key; and (iii)~provides the signed
measurement and public key to the remote party. The remote
party then verifies the attestation, \ie the remote party ensures that the attestation is
  correctly signed by the TEE hardware and that the measurement corresponds to
  a known treasury implementation. Parties can thus verify that a specific
  treasury, identified by its public key, is operating genuine TEE hardware.

Users without a TEE-enabled node of their own can use a remote node to manage
their funds through \emph{treasury outsourcing}. For this, the party attests a
remote treasury and provisions it with a secret key, giving it the same
abilities as a local party. To avoid having to trust a single remote treasury,
\sys constructs committees with multiple remote
treasuries~(\Cref{sec:committeeChains}).
\subsection{Committee chains}
\label{sec:committeeChains}

Committees are groups of treasuries that jointly manage fund deposits. For each
deposit owned by a committee, a minimum number of committee members are
required to sign transactions before that deposit can be spent, thus tolerating
a fixed number of TEE failures. For this, \sys uses the multi-signature support
of the blockchain~\cite{bitcoin-multisig}: each fund deposit is paid into an
\emph{$m$-out-of-$n$} treasury address, where $m$ treasury signatures are
required to spend the deposit. The $n$~committee members correspond to the
$n$~public keys provided to {\footnotesize \cmdNewDeposit} in \Cref{tab:api},
when the deposit is created.

All committee members must agree on the proportion of each deposit owned by the
parties in a payment channel. To achieve agreement, \sys uses a variant of
\emph{chain replication}~\cite{vanrenesse2004chain}, which offers strong
consistency without requiring all nodes to communicate directly. This is
beneficial because parties may not have direct connectivity due to network
address translation~(NAT) and firewalls.

With chain replication, \sys must prevent \emph{roll-back} and state \emph{forking} attacks~\cite{brandenburger2017rollback} in which
an attacker partitions the committee members into disjoint subgroups that can
settle a channel at different balances using different deposit
states. Forking a committee chain in this way would allow attackers to
roll-back to old payment states to steal funds.

\sys achieves this with a new variant of chain replication called
\emph{force-freeze replication}: if any committee member fails or refuses to
update to the latest agreed upon state, the replication chain is broken,
freezing all nodes at the current state. This prevents future state updates and
requires that all channels are settled and unused deposits released. We
describe force-freeze replication in more detail in \Cref{sec:fault}.

\subsection{Threat analysis}
\label{sec:design:threat_analysis}

\mypar{Malicious parties} \sys assumes parties are rational and selfish, \ie
parties behave in their best financial
interest~(\cref{sec:design:threat_model}). We consider two possible cases:
(i)~\alice is a malicious local party; and (ii)~\bob is a malicious remote
party. In the case of a local malicious party~\alice, \sys requires parties to
encrypt and sign all API calls made to a local (or outsourced)
treasury~(\cref{{tab:api}}). \alice only has access to their own funds but
cannot affect other funds, as enforced by the local treasury.

In the case of a malicious remote party~\bob who wishes to steal \alice{}'s
funds, \bob must either interact with the \sys API to force a protocol
deviation or drop/replay/modify messages on the network. \sys secures funds
with treasuries and uses TEEs to ensure faithful protocol execution. Treasuries
use encrypted, authenticated and freshness-protected messages.

\mypar{Compromised treasuries} Current TEE implementations are vulnerable to
attacks, \eg through side-channels~\cite{van2018foreshadow, brasser2017software,
  moghimi2017cachezoom}, and \sys assumes that treasury compromises are
possible~(\Cref{sec:design:threat_model}). We consider two cases of a
compromised treasury~$T$, which wishes to attack \alice: (i)~$T_L$ is a local
treasury that \alice interacts with directly (\ie the treasury at node~$n_A$ in
\cref{fig:overview}); and (ii)~$T_R$ is a remote treasury on another node in the \sys
network.

A compromised local treasury~$T_L$ cannot steal \alice's funds due to \sys{}'s
$m$-out-of-$n$ committees for deposits. To steal a deposit, $T_L$ would need to
compromise $m - 1$~treasuries in the committee. To prevent $T_L$ from deceiving
\alice when interacting with the \sys API, \sys requires the results of API
calls to be signed by all $n$~committee treasuries, except when an API call
returns a blockchain transaction, which only requires $m$~signatures. If $T_L$
fails to coordinate correctly with the committee, \alice settles channels and
returns deposits.

Mitigating a compromised remote treasury~$T_R$ is similar to the local
case above: committees protect deposits, and thus $T_R$ needs to compromise
$m - 1$ other treasuries. Note that, although the requirement for
$n$~signatures on API calls means that $T_R$ can force channel settlements, it
does not gain financially from this. Similar to prior work~\cite{burchert2017scalable}, \sys
assumes committee members are paid fees for participation.

\noindent
\mypar{Global TEE compromises} To mitigate global TEE compromises, in
  which many treasuries are compromised simultaneously, \sys is designed to be
  TEE-agnostic, thus avoiding dependencies on a single TEE implementation. This
  allows parties in the network to protect deposits using committees of
  \emph{heterogeneous} TEEs. Under global TEE compromises, \eg when a specific
  TEE vendor leaks hardware private keys or a batch of TEEs are found to be
  faulty, parties can lower their risk by including sufficiently heterogeneous
  TEEs in their committee chains.

Compromises to the attestation mechanism of a particular TEE
  implementation, \eg as done by the Foreshadow~\cite{van2018foreshadow} attack
  against the Intel attestation service, do not affect funds held by
  committees. As described in~\Cref{sec:treasuries}, remote attestation ensures
  that a specific treasury, identified by its public key, operates genuine TEE
  hardware. Even if remote attestation has been compromised, an attacker can
  only create new malicious treasuries, but cannot spoof other treasuries or
  committee members in the network. To steal funds, an attacker would need to
  bias the selection of future committee members. We discuss committee member
  selection in~\Cref{subsec:config}.

\newcommand{\msgNewChannelAckAlg}{\algCmd{\msgNewChannelAck}}		
\newcommand{\msgApproveMyDepositAlg}{\algCmd{\msgApproveMyDeposit}}		
\newcommand{\msgApprovedDepositAlg}{\algCmd{\msgApprovedDeposit}}	
\newcommand{\msgAssociatedDepositAlg}{\algCmd{\msgAssociatedDeposit}}
\newcommand{\msgPaidAlg}{\algCmd{\msgPaid}}
\newcommand{\msgDissociatedDepositAlg}{\algCmd{\msgDissociatedDeposit}}
\newcommand{\msgDissociatedDepositAckAlg}{\algCmd{\msgDissociatedDepositAck}}

		\begin{algorithm*}[ht]\footnotesize
			\caption{\sys payment protocol executed by the treasury at each node (Based on the API shown in Table \ref{tab:api}.                          For brevity, we omit the collection of committee member signatures at the end of each API call (see \Cref{sec:design:threat_analysis}).)}
			\label{alg:teechainChannelA}
			\SetAlgoNoEnd
			\DontPrintSemicolon 
			\SetNoFillComment
			\SetAlgoNoEnd
			\SetInd{0.4em}{0.4em}
			\vspace*{-2em}
			\begin{multicols}{4}

				\KwOn({$\cmdNewDeposit$(\tx, $\tpub_1$...$\tpub_n$):}){\label{alg:newdeposit}
				\scriptsize\textsf{verify\_tx(\tx, $\tpub_1$...$\tpub_n$) \;
					$\dx \gets$ create\_new\_deposit(\tx) \;
					deposits[\dx.\id] $\gets \dx $ \commentx{store dep} \;
					write\_to\_blockchain(\tx) \;
					return $\dx.\id $ \commentx{return deposit id} \label{alg:newdeposit:return} \;
				}}
				\BlankLine
				
				\KwOn({$\cmdRemoveDeposit$(\did):}){\label{alg:removedeposit}
				\scriptsize\textsf{$\dx \gets$ deposits[\did] \;
					assert($\dx.\chan =\varnothing$) \commentx{unassoc} \;
					$\tx \gets$ gen\_deposit\_refund(\dx) \;
					deposits[\dx.\id] $\gets \varnothing$ \commentx{clear dep}\;
					write\_to\_blockchain(\tx) \;
					return $\tx$ \commentx{return refund}\;
				}}
				\BlankLine
				
				\KwOn({$\cmdApproveMyDeposit$(\did, \tpub):}){\label{alg:approvedeposit}
				\scriptsize\textsf{$\dx \gets$ deposits[\did] \;
					$\ap \gets$ ask\_approve\_remote(\dx, \tpub) \;
					$\dx.\ap[\tpub] \gets \ap$ \;
					return $\ap$ \commentx{return approval} \;
				}}
				\BlankLine			

				\KwOn({$\cmdNewPaymentChannel$(\tpub):}) {\label{alg:cmdnewchannel}
				\scriptsize\textsf{
					$\cx \gets$ create\_channel\_with(\tpub) \;
					$(\cx.\mybal, \cx.\rembal) \gets (0,0)$ \;
					channels[\cx.\id] $\gets$ \cx \;
					return $\cx.\id$ \commentx{return channel id} \;
				}}
				\BlankLine
				
				\KwOn({$\cmdAssociateMyDeposit$(\did, \cid):}) {\label{alg:assocmydeposit}
				\scriptsize\textsf{$\dx \gets$ deposits[\did] \;
					$\cx \gets$ channels[\cid] \;
					assert($\dx.\chan = \varnothing$) \commentx{unassoc} \;
					assert($\dx.\ap[\cx.\tpub]$) \;
					$\dx.\chan \gets \cx$ \commentx{add assoc} \;
					$\cx.\mybal \gets \cx.\mybal + \dx.\val$ \label{alg:assocmydeposit:inc} \;
					send\_assoc\_to\_remote(\dx, \cx) \label{alg:assocmydeposit:inctwo} \;
				}}
				\BlankLine
				
				\KwOn({$\cmdDissociateMyDeposit$(\did, \cid)}:) {\label{alg:dissociate}
				\scriptsize\textsf{$\dx \gets$ deposits[\did] \;
					$\cx \gets$ channels[\cid] \;
					assert($\dx.\chan = \cx$) \;
					send\_dissoc\_to\_remote(\dx, \cx) \label{alg:dissociate:dectwo} \;
					$\dx.\chan \gets \varnothing $ \commentx{remove assoc} \;
					$\cx.\mybal \gets \cx.\mybal - \dx.\val$ \label{alg:dissociate:dec} \;
				}}
				\BlankLine
				
				\KwOn({$\cmdPay$(\amount, \cid)}:) {\label{alg:pay}
					\scriptsize\textsf{$\cx \gets$ channels[\cid] \;
					assert($\cx.\mybal \geq \amount$) \;
					$\cx.\mybal \gets \cx.\mybal - \amount$ \label{alg:pay:inc} \;
					$\cx.\rembal \gets \cx.\rembal + \amount$ \label{alg:pay:dec} \;
					send\_pay\_to\_remote(\cx, \amount) \;
				}}
				\BlankLine

				\KwOn({$\cmdSettle$(\cid)}:) {\label{alg:settle}
				\scriptsize\textsf{$\cx \gets$ channels[\cid] \;
					\If{\textsf{neutral\_balance(\cx)}} {
						\commentx{terminate off-chain} \;
						dissociate\_all\_deposits(\cx); \;
						channels[\cx.\id] $\gets \varnothing$ \;
						return $\varnothing$
					} \Else {
						\commentx{terminate on-chain} \;
						$\tx \gets$ get\_settle\_for\_bals(\cx)\label{alg:settle:gen} \;
						send\_settle\_to\_remote(\cx, \tx) \;
						channels[\cx.\id] $\gets \varnothing$ \;
						write\_to\_blockchain(\tx) \;
						return \tx \; \label{alg:on:settle}
					}
				}}
				\BlankLine
												
				\KwOn({$\cmdPayMultihop$(\amount, $\cid_1$...$\cid_n$)}:) {\label{alg:routepayment}\scriptsize\textsf{$\cx_1 \gets$ channels[$\cid_1$] \;
					$\ldots$ \;
					$\cx_n \gets$ channels[$\cid_n$] \;
					lock(\amount, $\fromto{\cx_1}{\cx_n}$) \commentx{Alg.2} \;
					wait\_for\_unlock() \;
					return $\top$ \commentx{payment done} \;
				}}
				\BlankLine
				
				\KwOn({$\cmdEject$(\cid)}:) {\label{alg:eject}
					\scriptsize\textsf{$\cx \gets$ channels[$\cid$] \;
					$ s \gets \cx.\multi$ \;
					\If{$\mathit{s} = \cmdLockAlg \lorx \mathit{s} = \cmdSignAlg \lorx \linebreak \mathit{s} = \cmdUpdateAlg \lorx \linebreak \mathit{s} = \cmdReleaseAlg$ \phantom . \phantom . \phantom .} {
				return \cmdSettle(\cid) \;	
		}
				return \cx.\signedChainSettleTx \commentx{settle all} \; \label{alg:eject:all}}
				}
				\BlankLine

				\KwOn({$\cmdEject$(\cid, \textsf{popt})}:) {\label{alg:eject_popt}
					\scriptsize\textsf{$ s \gets \textsf{popt}.\multi$ \;
					\If{$\mathit{s} = \cmdLockAlg \lorx \mathit{s} = \cmdSignAlg$} {
						return settle\_prepay(\cid) \;	
					}
					\If{$\mathit{s} = \cmdUpdateAlg \lorx \linebreak \mathit{s} = \cmdReleaseAlg$} {
						return settle\_postpay(\cid) \;	
					}}
				
				}
				\BlankLine

			\end{multicols}
			\vspace{-1em}	
		\end{algorithm*} 


\section{Payment Protocol}

\label{sec:channelprotocol}
\label{sec:channel}
\label{sec:protocol:networkLink}
\label{sec:protocol:remoteattestation} 
\label{sec:protocol:channelInit}
\label{sec:design:channel}

This section describes \sys{}'s deposit allocation~(\Cref{subsec:deposits}),
its payment channel protocol~(\Cref{subsec:channels}), its multi-hop payment
protocol~(\Cref{sec:chains}), and sketches their security
proofs~(\Cref{sec:security:overview}).

\newcommand{\pluseq}{\mathrel{+}=}
\newcommand{\mineq}{\mathrel{-}=}
\subsection{Allocating dynamic deposits}
\label{subsec:deposits}

Deposits can be created at any time and associated/dissociated with payment
channels dynamically. \Cref{alg:teechainChannelA} shows the protocol executed
by treasuries for the API calls from \Cref{tab:api}.

To construct a new deposit~$d$, parties invoke {\footnotesize\cmdNewDeposit}
(\Cref{alg:teechainChannelA}, \cref{alg:newdeposit}) and present a deposit
transaction~{\footnotesize $\tx$} and the list of treasury public keys forming
the committee that {\footnotesize $\tx$} sends funds to. The treasury then
verifies that {\footnotesize $\tx$} sends funds to an {$m$-out-of-$n$}
multi-signature address using the committee members' public keys,
{\footnotesize $\tpub_1$ \ldots $\tpub_n$}, and notifies the committee of the
new {\footnotesize $\tx$} (see~\Cref{sec:fault}). The treasury then constructs
a new deposit~$d$, forwards {\footnotesize $\tx$} to the blockchain, and
returns $d$'s unique identifier to the requester
(\cref{alg:newdeposit:return}), signed by all committee members---we omit
signature collection for brevity.

A payment channel~$c$ may contain zero or more deposits through deposit
association. The sum of the deposits associated with $c$ must be equal to the
sum of the balances of $A$ and $B$ in $c$, \ie deposits are distributed to
\alice and \bob. Before a deposit~$d$ can be associated with $c$, it must be
approved by the remote party in $c$ (\eg{} \bob if \alice requests approval)
using {\footnotesize\cmdApproveMyDeposit} (\cref{alg:approvedeposit}). Approval
contacts the remote party via its treasury and queries if the deposit is
eligible for association with $c$. Deposit approval therefore allows \alice and
\bob to define which deposits can be associated with $c$. Due to our assumption
of asynchronous blockchain access, this may take unbounded time. Deposits need to be approved only once.

Approved deposits can be \emph{associated} with a single channel using
{\footnotesize \cmdAssociateMyDeposit}, and \emph{dissociated} using
{\footnotesize \cmdDissociateMyDeposit}
(\cref{alg:assocmydeposit,alg:dissociate}). When deposit~$d$ is associated with
$c$ by \alice, the treasuries increase \alice's balance by the deposit amount
(\cref{alg:assocmydeposit:inc,alg:assocmydeposit:inctwo});
dissociation decrements \alice's balance
(\cref{alg:dissociate:dec,alg:dissociate:dectwo}). Disassociation can only be
done if the participant's balance is greater than or equal to the deposit
amount. \emph{Unassociated} deposits are deposits not associated with any
channel. They can be returned upon request through {\footnotesize
  \cmdRemoveDeposit} (\cref{alg:removedeposit}): a new
transaction~{\footnotesize $\tx$} is generated and signed by the appropriate
committee treasuries, and written to the blockchain; $d$ is then removed from
the treasury.

\subsection{Using payment channels}
\label{subsec:channels}

To create payment channels between treasuries without blockchain interaction,
participants call {\footnotesize \cmdNewPaymentChannel} and provide the public
key of the treasury with which to create the channel
(\Cref{alg:teechainChannelA}, \cref{alg:cmdnewchannel}).
The two treasuries then establish a secure communication channel using
authenticated Diffie-Hellman~\cite{krawczyk2003sigma} for key provisioning and
remote attestation. Using the secure channel, the treasuries assign a unique
channel identifier to the channel~$c$, initialize both participant's balances
to 0, and return the channel identifier.

To execute a payment along a channel, the sender calls {\footnotesize \cmdPay}
(\cref{alg:pay}), which specifies the amount to send and the channel
identifier. The sender's treasury first ensures that the sender has sufficient
funds before decrementing the sender's balance and incrementing the recipient's
balance locally (\cref{alg:pay:inc,alg:pay:dec}). It then forwards the payment
to the recipient's treasury to update balances. If the payment is
not received by the recipient, \eg due to a network failure, the sender settles
the channel and writes the balances to the blockchain to allow the remote party
to see the final state of the channel. This prevents balance inconsistencies.

As deposits can only be associated with a single channel, participants may
suffer from \emph{deposit lock-in}: when a large deposit is added to a channel
but only a small fraction is spent, it leaves the remaining locked-in until the
channel is settled. In a channel~$c$ with deposit~$d_x$ of amount~$a_x$, after
payments of value~$p_x$ have been made, the locked-in funds~$f_x$ are
$a_x - p_x$. If $f_x$ is large, there is a high fund lock-in. To avoid this,
participants can perform \emph{deposit rebalancing}: they associate another
deposit~$d_y$ of value~$v_y$, where $v_x > v_y \geq p_x$, with $c$ and
dissociate $d_x$ from $c$. This reduces the lock-in.

\label{subsec:settlement}


At any time, either party may settle the channel using {\footnotesize
  \cmdSettle} (\cref{alg:settle}). If the balances of the parties are
\emph{neutral}, \ie equivalent to their deposits as if no payments were made,
the treasuries can terminate the channel off-chain by simply disassociating all
deposits from the channel. Off-chain termination avoids writing a settlement
transaction to the blockchain (see~\cref{sec:cost}); otherwise, the local
treasury generates a blockchain transaction {\footnotesize $\tx$} using the
deposits and balances in the channel, collects signatures from the committee
members, and writes {\footnotesize $\tx$} to the blockchain.

\begin{algorithm*}[t]\footnotesize
	\caption{\sys multi-hop payment protocol {(For brevity, we omit the messages exchanged between treasuries after each step, i.e., the messages in \Cref{fig:teechainCommunication}. Payment channels in the path are denoted:
            {\footnotesize$\cx_1 \ldots{} \cx_n$}. Treasuries in the path are
            numbered {\footnotesize $1$ \ldots $n+1$}. {\footnotesize $\cpos$} denotes a treasuries' position. )}} 
	\label{alg:teechainA}
	\SetAlgoNoEnd
	\DontPrintSemicolon 
	\SetNoFillComment
	\SetKw{assert}{assert}
	\SetInd{0.4em}{0.4em}
	\vspace*{-2em}
	\begin{multicols}{4}\
				\KwOn({\myc{1} \textsf{lock}(\amount, \cpath)}:) {\label{alg:lock} 
					\scriptsize\textsf{\If{$\cpos \leq n$} {
						assert($\cx_\cpos.\mybal \geq \amount$) \;
						lock\_channel($\cx_\cpos$, \val) \;
					} \If {$\cpos > 1$} {
						lock\_channel($\cx_{\cpos-1}$, \val) \;
					} 
				}}
				\BlankLine

				\KwOn({\textsf{lock\_channel}(\cx, \val)}:) {
					\scriptsize\textsf{
						$(\cx.\multi, \cx.\multiamount) \gets (\cmdLockAlg, \val)$ \;
				}}
				\BlankLine				
				
				\KwOn({\myc{2} \textsf{sign}(\signedChainSettleTx, \cpath)}:) {\label{alg:sign} 
					\scriptsize\textsf{\If{$\cpos  > 1$} {
						sign\_channel($\cx_{\cpos-1}$, \signedChainSettleTx) \;
					} \If {$\cpos  \leq n$} {
						sign\_channel($\cx_{\cpos}$, \signedChainSettleTx) \;
					} 
				}}
				\BlankLine
				
				\KwOn({\textsf{sign\_channel}(\cx, \signedChainSettleTx)}:) {
					\scriptsize\textsf{$\signedChainSettleTx \gets $ add\_chan\_settle\_post(\signedChainSettleTx, \cx) \;
					$\cx.\multi \gets \cmdSignAlg$ \;
				}}
				\BlankLine
				
				\KwOn({\myc{3} \textsf{pre}(\signedChainSettleTx, \cpath)}:) {\label{alg:pre} 
					\scriptsize\textsf{\If{$\cpos \leq n$} {
						pre\_channel($\cx_{\cpos}$, \signedChainSettleTx) \;
					} \If {$i > 1$} {
						pre\_channel($\cx_{\cpos-1}$, \signedChainSettleTx) \;
					} 
				}}
				\BlankLine
				
				\KwOn({\textsf{pre\_channel}(\cx, \signedChainSettleTx)}:) {
					\scriptsize\textsf{$\cx.\signedChainSettleTx \gets \signedChainSettleTx$ \commentx{store \signedChainSettleTx for if settle} \;
					$\cx.\multi \gets \cmdPromiseAAlg$ \;
				}}
				\BlankLine
				
				\KwOn({\myc{4} \textsf{inter}(\cpath)}:) {\label{alg:inter}
					\scriptsize\textsf{\If{$ \cpos > 1$} {
						increase\_my\_bal($\cx_{\cpos-1}$) \;
					} \If {$\cpos \leq n$} {
						decrease\_my\_bal($\cx_{\cpos}$) \;
					} 
				}}
				\BlankLine
				
				\KwOn({\textsf{increase\_my\_bal}(\cx)}:) {
					\scriptsize\textsf{$\cx.\mybal \gets \cx.\mybal + \cx.\multiamount$ \;
					$\cx.\rembal \gets \cx.\rembal - \cx.\multiamount$ \;
					$\cx.\multi \gets \cmdPromiseBAlg$ \;
				}}
				\BlankLine
				
				\KwOn({\textsf{decrease\_my\_bal}(\cx)}:) {
					\scriptsize\textsf{$\cx.\mybal \gets \cx.\mybal - \cx.\multiamount$ \;
					$\cx.\rembal \gets \cx.\rembal + \cx.\multiamount$ \;
					$\cx.\multi \gets \cmdPromiseBAlg$ \;
				}}
				\BlankLine
				
				\KwOn({\myc{5} \textsf{post}(\cpath)}:) {\label{alg:post} 
					\scriptsize\textsf{\If{$\cpos \leq n$} {
						post\_channel($\cx_{\cpos}$) \;
					} \If {$\cpos  > 1$} {
						post\_channel($\cx_{\cpos-1}$) \;
					} 
				}}
				\BlankLine
				
				\KwOn({\textsf{post\_channel}(\cx)}:) {
					\scriptsize\textsf{$\cx.\signedChainSettleTx \gets \varnothing$ \commentx{not needed} \;
					$\cx.\multi \gets \cmdUpdateAlg$ \;
				}}
				\BlankLine
				
				\KwOn({\myc{6} \textsf{unlock}(\cpath)}:) {\label{alg:unlock}
					\scriptsize\textsf{\If{$\cpos > 1$} {
						$\cx_{\cpos-1}.\multi \gets \stageIdle$ \;
					}
					\If {$\cpos \leq n$} {
						$\cx_{\cpos}.\multi \gets \stageIdle$ \;
					}
				}}
				\BlankLine
				
				\BlankLine
	 \end{multicols}
	\vspace*{-1em}
\end{algorithm*}

\begin{figure}[tb]
  \vspace{0.5em}
  \centering
  \includegraphics[width=.9\columnwidth]{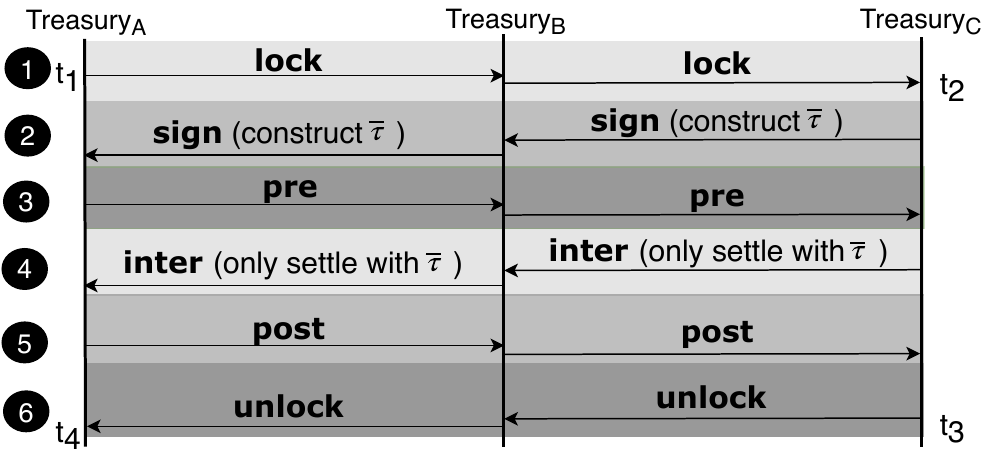}
  \caption{Protocol for multi-hop payments}
  \label{fig:teechainCommunication}
\end{figure}

\subsection{Executing multi-hop payments}
\label{sec:chains}
\label{sec:routingprotocol}
\label{sec:design:chains}
\label{sec:chain:protocol}
\label{sec:chain:lock}

\newcommand{\preState}{{\footnotesize\Cmd{pre }}}
\newcommand{\postState}{{\footnotesize\Cmd{post }}}

To do a multi-hop payment across multiple payment channels, parties invoke
{\footnotesize \cmdPayMultihop} (\Cref{alg:teechainChannelA},
\cref{alg:routepayment}) with the payment amount and the channel identifiers of
the path.\footnote{We assume that participants determine paths in \sys out-of-band.}
All channels in the path must update their balances atomically otherwise
intermediaries could lose funds. For example, \bob in \Cref{fig:overview}
retains the same total funds post-payment, \ie their balance is incremented by
\$50 in $c_1$ and decremented by \$50 in $c_2$; if $c_1$ is not updated and
only $c_2$ updates, \bob pays \carol personally.

One approach to ensure atomic channel updates is to freeze channels by
preventing parties from settling them until the multi-hop payment
completes. This has the problem that if a failure occurs along the path,
channels are frozen indefinitely. To overcome this, \sys allows parties to
settle channels even if a multi-hop payment is being executed. \sys achieves
this using a \emph{proof of premature termination}~(\popt). When a party
prematurely settles a channel $c$ during a multi-hop payment, the settlement
transaction {\footnotesize $\tx$} can be used by other parties in the path to
determine the state~$s$ of settlement: $c$ was either settled pre-payment
($s = $ {\footnotesize \prex}), \ie before the payment has occurred, or
post-payment ($s = $ {\footnotesize \postx}), \ie after the payment has
occurred. The parties can present {\footnotesize $\tx$} to their treasuries as
a \popt to settle all channels in the same state~$s$.

\sys enforces that settlement transactions in state~{\footnotesize \prex} will
\emph{conflict} (\Cref{subsec:blockchains}) with those in state~{\footnotesize
  \postx}. If a channel in the path is terminated prematurely using
{\footnotesize \cmdEject} (\Cref{alg:teechainChannelA}, \cref{alg:eject}), the
first settlement transaction~{\footnotesize $\tx$} to be written to the
blockchain determines the state at which all channels are settled. If a channel
in a different state tries to settle afterwards, the transaction is rejected by
the blockchain. The party can present~{\footnotesize $\tx$} to its treasuries
as \popt through {\footnotesize \cmdEject} (\Cref{alg:teechainChannelA},
\cref{alg:eject_popt}), which generates a settlement transaction without
conflict. Conflicts prevent \sys from assuming how long settlements take to be written to the blockchain.

For blockchains with expressive transactions~\cite{ethereum2015white}, smart contracts can
be used to ensure conflicts between settlement transactions in different
states. Channels in a multi-hop payment can simply transition from
{\footnotesize \prex} to {\footnotesize \postx} in a single step.

For other blockchains, \eg \bitcoin, \sys must enforce transaction conflicts
manually. \sys constructs an intermediate \emph{path settlement
  transaction}~\signedChainSettleTx that settles all channels in
state~{\footnotesize \postx} using a single blockchain
transaction. \signedChainSettleTx conflicts with individual settlement
transactions in {\footnotesize \prex} and {\footnotesize \postx} because it
spends the same deposits. \sys uses \signedChainSettleTx to transition channels
from state {\footnotesize \prex} to {\footnotesize \postx} by moving to an
intermediate state~{\footnotesize \interx} between the transition
first. Channels in state~{\footnotesize \interx} can only settle using
\signedChainSettleTx. If a party decides to settle a channel while it is in
state~{\footnotesize \interx}, they settle all channels in the path. Therefore,
during the transition from {\footnotesize \prex} to {\footnotesize \interx},
either the first channel settlement transaction~{\footnotesize \tx} written to
the blockchain is in {\footnotesize \prex}, in which case \signedChainSettleTx
cannot be written to the blockchain and all channels settle at {\footnotesize
  \prex} by presenting {\footnotesize \tx} as \popt; or {\footnotesize \tx} is
\signedChainSettleTx, in which case all channels are settled in {\footnotesize
  \postx} using \signedChainSettleTx. The transition from {\footnotesize
  \interx} to {\footnotesize \postx} is analogous.

\mypar{Payment execution} \Cref{fig:teechainCommunication} shows the messages
exchanged by the treasuries when \alice executes a multi-hop payment to \carol via
\bob. \Cref{alg:teechainA} shows the corresponding protocol steps.

\sys requires three network round trips to complete the payment: step~\myc{1}
locks the channel and ensures sufficient balances (\Cref{alg:teechainA},
\cref{alg:lock}); step~\myc{2} constructs \signedChainSettleTx with all
treasuries writing their channel balances and signatures (\cref{alg:sign});
\sys then updates the channel balances from {\footnotesize \prex} (\myc{3},
\cref{alg:pre}) to {\footnotesize \interx} (\myc{4}, \cref{alg:inter}) to
{\footnotesize \postx} (\myc{5}, \cref{alg:post}) payment state; and finally,
step~\myc{6} unlocks the channels (\cref{alg:unlock}).

As multi-hop payments lock channels, this prevents concurrent payments. \sys
therefore dynamically constructs new channels for concurrent payments using
unassociated deposits, as needed. This is feasible because \sys can create
channels and assign deposits with low latency. \sys coalesces no longer needed
payment channels by: (i)~executing multi-hop payments in a cycle along the
channels until they are at a neutral balance; and (ii)~terminating the channels
off-chain through deposit dissociation~(see~\Cref{subsec:channels}). We
evaluate dynamic channel construction in \cref{sec:eval:network}.

\newcommand{\Fchannel}{\ensuremath{\mathcal{F_{\textit{\sys}}}}\xspace}
\newcommand{\Fteechain}{\Fchannel}
\newcommand{\Fideal}{\ensuremath{\mathcal{F_{\textit{Ideal}}}}\xspace}
\newcommand{\TeechainProt}{\ensuremath{\pi_{\textit{\sys}}}\xspace}


\subsection{Payment protocol security}
\label{sec:security:protocol}
\label{sec:security:overview}
\label{sec:security:buildingBlocks}

\newcommand{\simulator}{\ensuremath{\mathcal{S}}\xspace}
\newcommand{\adversary}{\ensuremath{\mathcal{A}}\xspace}
\newcommand{\FTEE}{\ensuremath{\mathcal{F}_{\textit{TEE}}}\xspace}
\newcommand{\Fledger}{\ensuremath{\mathcal{F}_{\textit{B}}}\xspace}
\newcommand{\hybrid}[1]{\ensuremath{H_{#1}}\xspace}
\newcommand{\node}{\ensuremath{p}\xspace}
\newcommand{\paymentAmt}{\ensuremath{\textit{amt}}\xspace}
\newcommand{\env}{\ensuremath{\mathcal{E}}\xspace}
\newcommand\mydots{\ifmmode\ldots\else\makebox[0.5em][c]{.\hfil.\hfil.}\thinspace\fi}

\sys{}'s payment protocol (\Cref{alg:teechainChannelA,alg:teechainA}) achieves
balance security (\cref{sec:design:root_of_trust}) under asynchronous
blockchain access, \ie, parties can always receive their funds on the
blockchain, regardless of blockchain access times or other parties' actions. We
sketch a proof below, and the full details are in~\Cref{app:formalProof}.
We first show that
\sys achieves asynchronous blockchain access, and then prove balance security.

When \sys writes to ({\footnotesize \cmdNewDeposit,
  \cmdRemoveDeposit, \cmdSettle}, {\footnotesize \cmdEject}) or reads from the blockchain ({\footnotesize \cmdApproveMyDeposit}), the
protocol makes no assumption about the duration of these operations. For
example, when ejecting from a multi-hop payment prematurely ({\footnotesize
  \cmdEject}), \sys uses the first settlement transaction written to the
blockchain to determine the state at which all channels in a payment path
are settled~(\Cref{sec:chains}). By considering all blockchain interactions on
a case-by-case basis (\Cref{alg:teechainChannelA,alg:teechainA}), we can see
\sys operates with asynchronous blockchain access.
\mypar{Payment channel security} We now prove that \sys achieves balance
security using the Universal Composability~(UC)
framework~\cite{canetti2001universally}. Our definition of balance
security~(\Cref{sec:design:root_of_trust}) under UC is similar to prior
work~\cite{malavolta2017concurrency, miller2017sprites,
  dziembowski2018general}.
We model committees as a single treasury executing the protocol.

Under UC, we consider a \emph{real} world, in which parties run the
\sys protocol~\TeechainProt~(\cref{alg:teechainChannelA}), and an \emph{ideal} world, in which parties interact
with an \emph{ideal functionality}, \Fteechain, a trusted third party that implements \sys's API~(\cref{alg:teechainChannelA}). Adversarial behavior is introduced
in the ideal world by a simulator~\simulator with appropriate adversarial
abilities (\cref{sec:design:threat_model}).

To prove that \sys achieves balance security, we show that (i)~the real and
ideal worlds are indistinguishable to an external observer~\env. This implies that
any attack violating balance security in the real world is also possible in
the ideal one; and (ii)~\Fteechain achieves balance security in the ideal world. 
This proves that~\TeechainProt also achieves balance security.

We prove indistinguishability between the real and ideal worlds through a series of five \emph{hybrid steps}, starting at the real world~\hybrid{0}, and ending in the ideal world~\hybrid{5}. 
In each step, a
key element is changed and indistinguishability is proven. 
As commonly done~\cite{tramersealed,bentov2017tesseract}, in~\hybrid{0}, the desired
behavior of TEEs and the blockchain are replaced by two ideal functionalities, \FTEE and
\Fledger, respectively (defined in~\cite{pass2017formal, pass2017analysis}). In \hybrid{1} and \hybrid{2}, \simulator simulates
\FTEE and \Fledger, respectively, and in \hybrid{3} and \hybrid{4}, incorrectly
signed messages to \FTEE and \Fledger, are dropped, to tolerate attacks on the
signing schemes.
Finally, in \hybrid{5}, we prove equivalence between \TeechainProt and \Fteechain to~\env.
Next, we prove that \Fteechain achieves balance security by showing that a party
can always eventually place transactions on the blockchain that
grant it an amount equal to its perceived balance. This is done by ordering
\Fteechain to create transactions that close all open channels, remove all
unassociated deposits, and place them on the blockchain.
Since \sys does not make blockchain timing assumptions, denial-of-service attacks~\cite{heilman2015eclipse, marcus2018low}), do not violate balance security.

\label{sec:security:multihop}

\mypar{Multi-hop payment security} We show that the multi-hop protocol also
maintains balance security. As shown in \cref{fig:teechainCommunication},
consider a payment from \alice to \carol via \bob of amount {\footnotesize
  \val} at the following times:~\alice begins step {\footnotesize \stageLocked}
of the protocol at $t_1$; at $t_2 > t_1$, \carol begins step {\footnotesize
  \stageLocked}; at $t_3 > t_2$:~\carol completes step {\footnotesize
  \stageRelease}; and, at $t_4 > t_3$, \alice completes the protocol with
{\footnotesize \stageRelease}.

For \alice, the perceived balance for the channel is: before~$t_1$ as if
{\footnotesize \val} was not paid; after~$t_4$, as if {\footnotesize \val} was
paid; between~$t_1$ and~$t_4$ either is acceptable.
For \carol, the same as~\alice but $t_1$ replaced with~$t_2$, and $t_4$ with~$t_3$.
The perceived
balance of the intermediate~\bob is not affected. \alice considers the
payment complete \emph{iff} \carol considers it complete; funds are
not lost or created.

We show that~\alice and~\carol can unilaterally reclaim their perceived
balance. Note that single channel payments do not interfere with multi-hop
payments, because all channels are locked~(\cref{sec:chains}). At any point,
\alice and~\carol can settle the channels in either the {\footnotesize \prex}- or {\footnotesize \postx}-payment
states, either with single settlement transactions or using
\signedChainSettleTx (see~\cref{alg:teechainA}).
For example, if a node is in state {\footnotesize \stageLocked}, the others are
either in {\footnotesize \stageRelease} or {\footnotesize \stageLocked} or in
{\footnotesize \stageLocked} or {\footnotesize \stageSigned}. In all cases, if
a node settles, the rest of the nodes can only settle in the same state ({\footnotesize \prex}- or {\footnotesize \postx}-payment), in accordance with balance security.

\section{Committee Chains}
\label{sec:fault}
\label{sec:ft:bft}

This section describes force-freeze replication in committee
  chains~(\Cref{subsec:freeze}), committee
  configurations~(\Cref{subsec:config}), and persistent storage for committee
  members~(\Cref{subsec:counters}).

\subsection{Force-freeze replication}
\label{subsec:freeze}

To maintain consensus among committee members, \sys uses \emph{force-freeze}
replication, a new variant of chain
replication~\cite{vanrenesse2004chain}.
The nodes form a chain, with the primary at the head, and the last backup at
the tail.
On an update, the primary propagates the update down the chain. Each node
forwards the update to its backup, and waits for an acknowledgement before
executing the update. When the primary receives an acknowledgement, the entire
chain has updated. 
This provides strong consistency among the nodes.

Traditional chain replication~\cite{vanrenesse2004chain} continues to execute state updates even after
nodes have failed to update. Applying this naively to treasuries in a committee, would make \sys
vulnerable to roll-back and state forking
attacks~(\Cref{sec:committeeChains}). Instead, in \emph{force-freeze
  replication}~(\Cref{alg:backup}), if a node receives an update
request~(\cref{alg:backup:update}) and it or its successor fails to update, the
chain is frozen at its current state (\cref{alg:backup:freeze}). All channels
must now settle and release unused deposits.

Parties construct force-freeze replication chains using {\footnotesize
  \cmdAssignAsBackupFor} (\cref{alg:backup:assign}), which assigns a treasury
to the end of the chain: a party provides the public key of the node
at the tail of the chain. To secure state updates along the chain, nodes construct
secure communication channels (\cref{alg:backup:attest:one,alg:backup:attest:two}).

To prevent malicious treasuries from executing denial-of-service
  attacks by freezing committee chains through forced failures, \sys employs
  incentives for committee members: parties are assumed to be financially
  rational (\Cref{sec:design:threat_analysis}), and committee members are paid
  fees for participation. If a committee member forces a freeze, it loses any
  participation fees that it has accumulated in that committee.

Unlike other replication protocols, \eg Paxos~\cite{lamport2001paxos}
  and PBFT~\cite{castro1999practical}, force-freeze replication uses a chain
  communication topology and therefore does not not require full network
  connectivity, which is impractical in peer-to-peer networks. Other consensus
  protocols may enhance liveness, but this comes at the cost of increased
  network communication. It also increases protocol complexity---a benefit of
  force-freeze replication is that it is simple to implement and reason about.

\newcommand{\tee}[2]	{\ensuremath{\mathit{#1}^{\mathit{#2}}}}
\newcommand{\cmdAddBackupAlg}{\algCmd{\cmdAddBackup}}
\newcommand{\cmdAttestAlg}{\algCmd{\cmdAttest}}
\newcommand{\cmdStateUpdateAlg}{\algCmd{\cmdStateUpdate}}

\begin{algorithm}[tb]\footnotesize
	\caption{Force-freeze chain replication {(For brevity, we omit message encryption, authentication and freshness.)}} 
	\label{alg:backup} 
	\SetAlgoNoEnd 
	\DontPrintSemicolon 
	\SetNoFillComment
	\SetKw{fail}{fail}
	\SetKw{true}{true}
	\SetKw{false}{false}
	\SetKw{assert}{assert}
	\SetKw{ack}{ack}
	\SetKw{nack}{nack}
	\SetKw{primary}{primary}
	\SetInd{0.4em}{0.4em}
	\vspace*{-2em}
	\begin{multicols}{2}

				\KwOn({\textsf{assign\_comm\_chain}($\tpub$)}:) {\label{alg:backup:assign} 
					\scriptsize\textsf{assert($\pred = \varnothing$) \commentx{no chain} \;
					attest\_and\_auth\_DH(\tpub) \label{alg:backup:attest:one}  \;
					$\pred \gets \tpub$ \commentx{set chain pred} \;
					send(\cmdAddBackupAlg) to (\tpub) \;
					wait\_for($\cmdStateUpdateAlg, \stx$) from (\tpub) \;
					return $\top$ \;
				}}
				\BlankLine
				
				\KwOn({\textsf{receive($\cmdAddBackupAlg$) from ($\tpub$)}}:) {\label{alg:backup:recv} 
					\scriptsize\textsf{assert($\succx = \varnothing$) \commentx{current tail} \;
					attest\_and\_auth\_DH(\tpub) \label{alg:backup:attest:two}  \;
					$\succx \gets \tpub$ \;
					send($\cmdStateUpdateAlg, \currstate$) to \phantom . \phantom . (\tpub) \;
				}}
				\BlankLine
								
				\KwOn({\textsf{receive($\cmdStateUpdateAlg, \stx$) from ($\tpub$)}}:) {\label{alg:backup:update} 
					\scriptsize\textsf{assert($\pred = \tpub$) \;
					\If{$\succx = \varnothing$} {
						update\_state\_to(\stx) \;
						$\ackx \gets $ create\_signed\_ack() \; 
					} \Else {
						$\ackx \gets $ send(\cmdStateUpdateAlg, \stx) to (\succx)\;
						\If{\textsf{fail\_or\_invalid($\ackx$)}} {
							freeze() \commentx{can't update} \label{alg:backup:freeze} \;
						} \Else {
						update\_state\_to(\stx) \;
						$\ackx \gets $ sign\_ack(\ackx) \; 
						}
					}
					send(\ackx) to \tpub \;
				}}
				\BlankLine
	\end{multicols}
	\vspace{-1em}
 \end{algorithm}

\subsection{Committee chain configurations}
\label{subsec:config}

To ensure balance security~(\Cref{sec:design:root_of_trust}) despite
compromised treasuries, \sys uses committees chains of size~$n$ for
each deposit, and requires at least $m$~treasuries in a committee to sign a
blockchain transaction before that deposit can be spent. To violate balance
security, an attacker must compromise at least $m$~treasuries in a committee,
or cause $(n - m) + 1$ treasuries to fail, \eg crash or stop responding.

The values of $m$ and $n$ affect security: (i)~$1$-out-of-$1$ deposits provide
no fault tolerance against crash failures or compromises; (ii)~$1$-out-of-$n$
committee chains provide crash fault tolerance for treasuries but do not
tolerate their compromises; and (iii)~in the general case, as $m$ increases,
more signatures are appended to each transaction, impacting their size. We
explore this trade-off in \Cref{sec:cost}.

As deposits must be approved before association with payment
channels~(\Cref{sec:treasuries}), parties can choose the values of $n$ and $m$
for their deposits and channels. For small deposits, a $1$-out-of-$1$~committee
chain may be sufficient as there is little loss if a failure occurs; for medium
deposits, $1$-out-of-$n$ may be desirable to tolerate crash failures; and for
large deposits, \eg $2$-out-of-$3$ committee chains are required to tolerate
attacks. Larger committees, \eg with more than five members, may only
  be required for high-value deposits.

To prevent an attacker from biasing the selection of committee members, parties
select the committee treasuries themselves on deposit creation ({\footnotesize
  \cmdNewDeposit}, \Cref{tab:api}).
Selection criteria may include treasury reputation, trusted TEE vendors and implementations, blacklisted treasuries, and TEE heterogeneity.   
  To avoid Sybil attacks~\cite{douceur2002sybil}, \sys
can leverage several techniques, such as requiring treasuries to provide a
proof-of-stake~\cite{bentov2014activity}, operate in a
permissioned setting~\cite{Androulaki2018hyperledger}, or use a reputation system.

Payment channels may contain multiple deposits, each with a separate committee
chain. These chains do not have to be updated atomically: for payments that
span multiple deposits, the committee chains must be identical, and thus the
state updates can be batched. If a large payment spans deposits of multiple
committee chains, the payment is broken down into smaller payments, only
affecting one deposit at a time. Having many deposits, each with distinct
committee members, affects performance (see~\Cref{sec:eval:channel}).

\subsection{Committee chains with secure persistent storage}
\label{subsec:counters}

In addition to committee chains, \sys also supports the optional use of secure
persistent storage for crash fault tolerance. After a failure, a treasury
can reload its state, settle channels and return deposits. To overcome
roll-back attacks, state freshness must be guaranteed by the TEE
hardware~\cite{Anati2013innovative}, \eg through hardware monotonic
counters~\cite{monotoniccounter}.

Current Intel SGX implementations throttle accesses to hardware monotonic
counters to tens of increments per second~\cite{Strackx2016Ariadne,
  Matetic2017ROTE}, which limits performance. As a mitigation, \sys batches
transactions at the client side, similar to other payment
networks~\cite{poon2016bitcoin} that merge payments from the same
sender/recipient pairs.
Current SGX implementations also limit the number of writes for hardware counters
to 1~million~\cite{Matetic2017ROTE}. For the majority of parties in \sys, this should be
high enough. When the limit is reached, \sys forces treasuries to settle
channels and return deposits.

\section{Evaluation}
\label{sec:eval}
\label{sec:implementation}

We explore the performance of payment channels~(\Cref{sec:eval:channel}),
multi-hop payments~(\Cref{sec:eval:chain}), payment
networks~(\Cref{sec:eval:network}), and blockchain storage
costs~(\Cref{sec:cost}).

We implement \sys using Intel SGX for the \bitcoin
\blockchain. We use the Linux Intel SGX SDK
version~2.1~\cite{linux-sgx-sdk-dev-reference} and a subset of \bitcoin
core~\cite{bitcoin-core-0-13-1}. A release of our implementation is available
at: \href{https://teechain.network}{https://teechain.network}.
\sys consists of 20,000~lines of C/C\texttt{++} code inside the TEE, and
65,000~lines of untrusted code.
As the Linux SGX SDK does not support monotonic counters on all
hardware~\cite{linux-sgx-sdk-dev-reference}, we emulate them with a delay of
100\unit{ms}~\cite{Strackx2016Ariadne, Matetic2017ROTE}.

Our implementation is hardened against side-channel attacks. Although TEE
compromises are mitigated by committee
chains~(\Cref{sec:fault}), \sys uses timing and memory-access side-channel
resistant libraries for sensitive data: (i)~secp256k1, a constant time and
memory library for elliptic curve operations~\cite{libsecp256}; (ii)~a
side-channel resistant implementation of Elliptic-Curve
Diffie-Hellman~\cite{linux-sgx}; and (iii)~AES-GCM using
AES-NI~\cite{linux-sgx}, immune to software side channels~\cite{linux-sgx-sdk-dev-reference}.

To measure performance, we define throughput as the number of
transactions sent per second, and latency as the time from when a payment is
issued until an acknowledgement is received.
At the time of writing, the only payment network with a public implementation
is the Lightning Network~(LN)~\cite{poon2016bitcoin}. We compare \sys against
the Lightning Network Daemon~(LND)~\cite{lightning2017source}. Both \sys and LN
can optionally batch transactions at the client side, merging multiple payments
into a single payment with increased latency.

\subsection{Performance of payment channels} 
\label{sec:eval:channel}

\begin{table}[t]
      \caption{Channel performance}\label{tab:paymentchannelperf}
\label{tbl:channel}
    \resizebox{.9\linewidth}{!}{%
    \begin{tabular}{lrrr}
    \toprule
    \textbf{Payment} & \textbf{Throughput} & \textbf{Latency} & \textbf{[99th\,\%]}\\
    \textbf{channel} & (\textbf{tx/sec}) & \textbf{(ms)} & \\
    \midrule
    \textit{Lightning Network} & 1,000 & 387 & [420] \\ 
    \midrule
     \textit{\sys} & & \\
     \quad $n=1$ & 130,311 & 86 & [93]\\
     \quad $n=2$ (IL) & 34,115 & 292 & [301] \\
     \quad $n=3$ (IL, UK) & 33,180 & 415 & [432] \\
     \quad $n=4$ (IL, US, UK) & 33,178 & 672 & [691] \\
     \quad $n=1$ (batching) & 150,311 & 191 & [196] \\
     \quad $n=3$ (batching) & 135,331 & 516 & [530] \\
     \quad $n=3$ (outsourced) & 33,178 & 483 & [494] \\

     \textit{Persistent storage} & & \\
     \quad  $n=1$ & 10 & 288 & [294] \\
     \quad  $n=1$ (batching) & 145,786 & 401 & [408] \\

	  \bottomrule
      \end{tabular}}
\end{table}

\begin{table}[t]
      \caption{API performance}
      \label{tab:paymentchannelperf}
\label{tbl:operations}
    \resizebox{.9\linewidth}{!}{%
    \begin{tabular}[t]{lrrrl}
    \toprule
        & \multicolumn{4}{c}{\textbf{Latency (ms) [99th\,\%]}} \\
    \textbf{API operation} & \multicolumn{2}{c}{\textbf{Local}} & \multicolumn{2}{c}{\textbf{Outsourced}} \\
    \midrule
     \textit{Lightning Network} && \\
     \quad \cmdNewPaymentChannel & 60 min. & [N/A] &&\\
    \midrule
    \textit{\sys} && \\
     \quad \cmdNewPaymentChannel & 2,810 & [4,205] & 4,322 & [5,201] \\
     \quad \cmdAssignAsBackupFor & 2,765 & [3,910] & 2,852 & [3,993] \\ 
    
     \quad \cmdAssociateMyDeposit & & &&\\   
     \qquad $n=1$ & 101 & [110] & &\\
     \qquad $n=2$ (IL) & 289 & [297] & &\\
     \qquad $n=3$ (IL \& UK) & 422 & [429] & 489 & [514]\\
     \qquad $n=4$ (IL, US \& UK) & 677 & [681] &&\\
     \qquad Persistent storage & 302 & [309] && \\
      \bottomrule   
      \end{tabular}}
\end{table}

We want to answer three questions: (i)~what is the throughput of a payment
channel? (ii)~how do committee chains affect performance? (iii)~what is the
benefit of transaction batching?

We deploy \sys on 33~SGX-capable machines in the UK, the US and
Israel. \Cref{fig:latency_bw} shows the network topology and hardware
set-up. We construct a payment channel between \us and \ukone. To evaluate
treasury outsourcing, \isone acts as a non-SGX client using \us as a remote
treasury.

\begin{figure}[tb]
    \centering
    \includegraphics[width=0.9\columnwidth]{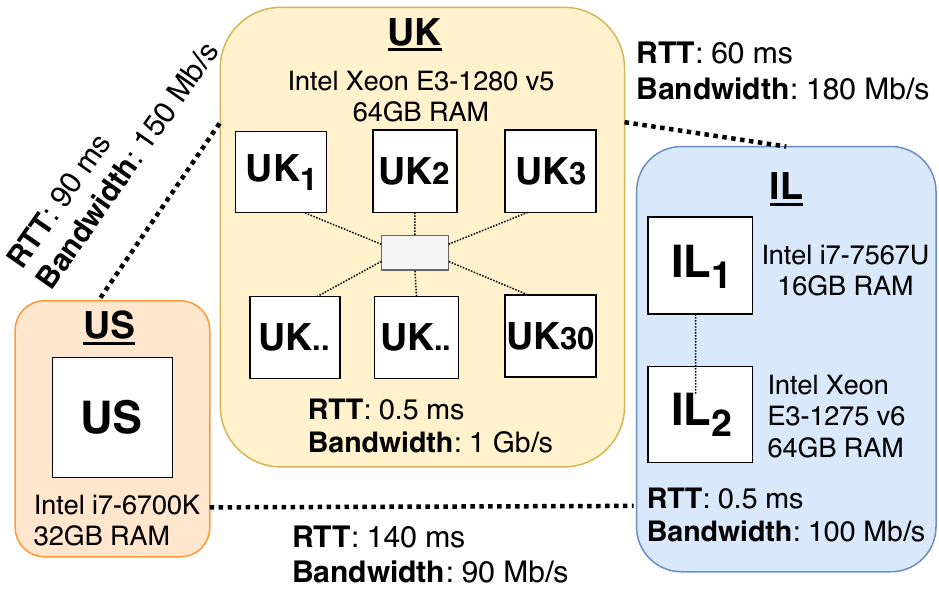}
    \caption{Evaluation setup}
    \label{fig:latency_bw}
\end{figure}

In all experiments, committee chains have the same length, as the performance
is bound by the slowest party. We vary~$n$ for $m$-out-of-$n$~committee
chains. Note that~$m$ does not affect channel throughput because all
$n$~committee members must replicate the state regardless. When batching
transactions, we batch for 100\unit{ms} before sending a transaction. \sys
requires one round-trip for a payment, while LN requires
two~\cite{poon2016bitcoin}. \sys can pipeline payments but LN only supports
sequential transactions and must batch by default.

\Cref{tbl:channel} shows the observed throughput and latency. LN achieves a
maximum throughput of \txs{1,000} with a latency of \ms{387} (99th percentile
at \ms{420}). With a committee chain of
$n$$=$$1$, \sys has two orders of magnitude higher throughput with a latency of
\ms{86}~(no batching). With
$n$$=$$2$ (\ie an extra committee member in Israel), 
the throughput of \sys is 34$\times$ compared to LN, with similar
latencies. Adding more members to each party's committee chain only increases
latency, and throughput remains unchanged. Using persistent storage,
performance is capped by the TEE hardware counters, resulting in \txs{10},
which can be overcome by transaction batching. \sys achieves between
135--150$\times$ better performance than LN when batching.

\Cref{tbl:operations} shows the performance of different API calls. LN channel
creation takes approx.~60\unit{mins}, as a transaction must be placed onto the
blockchain and confirmation takes 6~Bitcoin blocks. Since \sys decouples
channels and deposits, channel creation takes only 2.8\unit{secs}; we assume
deposits are already on the \blockchain.
Creation of an outsourced payment channel takes 4.3\unit{secs},
as the client~(\isone) must also verify the integrity of the outsourced
treasury~(\us). Adding new members to a committee chain incurs similar
latencies as channel creation. The latency for associating deposits depends on
the committee length~$n$, and dissociation is similar.

In summary, channel throughput is affected by committee chains: (i)~without
batching, committee chains with $n$=$1$ achieve the best performance, and
persistent storage performs worst due to hardware counters; (ii)~with batching,
\sys achieves higher throughput for committee chains and persistent storage
hides the delay for counters. The latency depends on the network, committee
length and batching delay.

\begin{figure}[tb]
	\centering
    \includegraphics[width=0.9\linewidth]{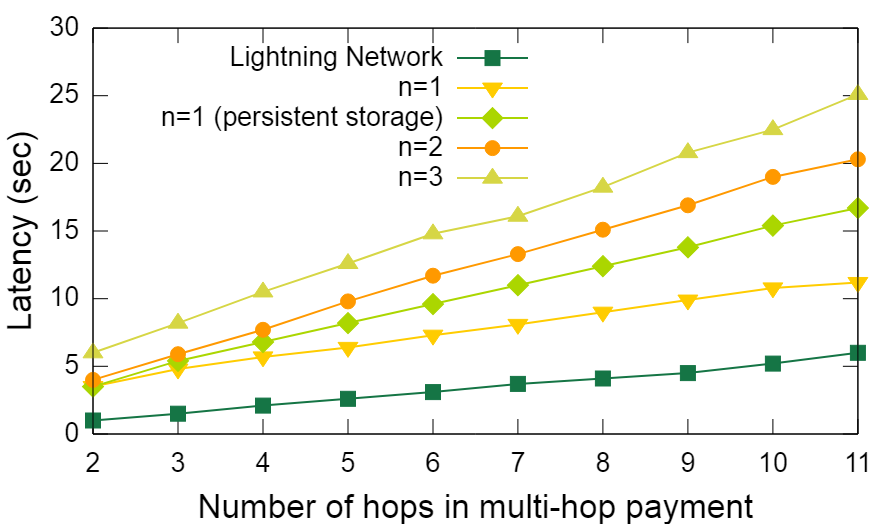}
    \caption{Multi-hop performance}
    \label{fig:chain_perf}
\end{figure}
\begin{figure*}[!]
	\begin{minipage}{.3\linewidth}
    \centering
    \includegraphics[width=\columnwidth]{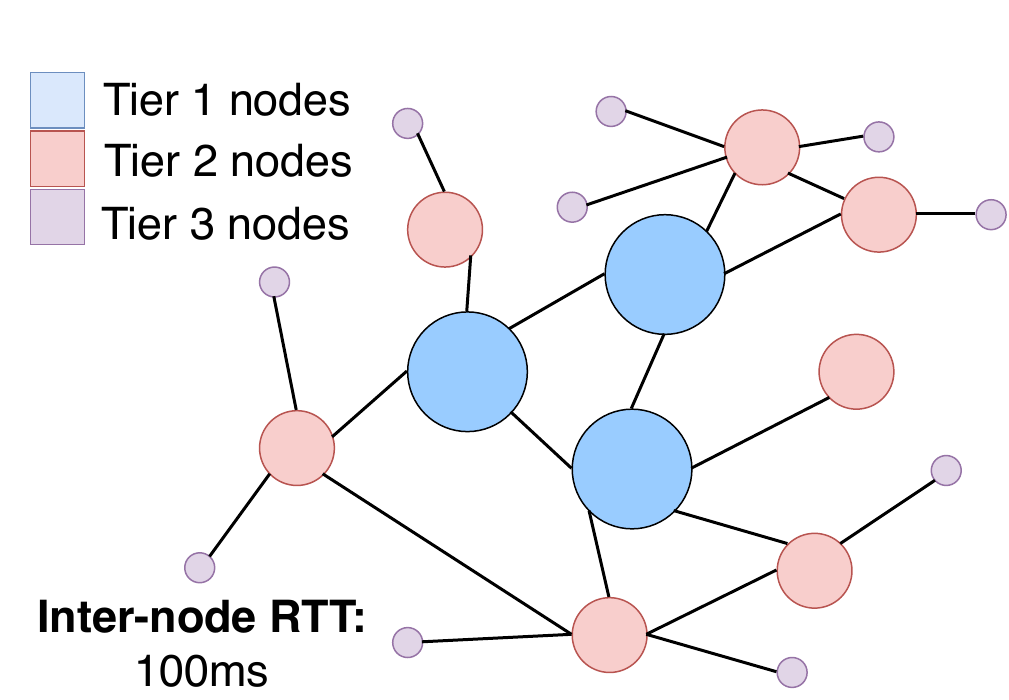}
    \captionsetup{margin=0.5cm}
    \caption{Hub-and-spoke \\ network topology}
    \label{fig:hubandspoke}
	\end{minipage}
    \begin{minipage}{.33\linewidth}
    \centering
    \includegraphics[width=\columnwidth]{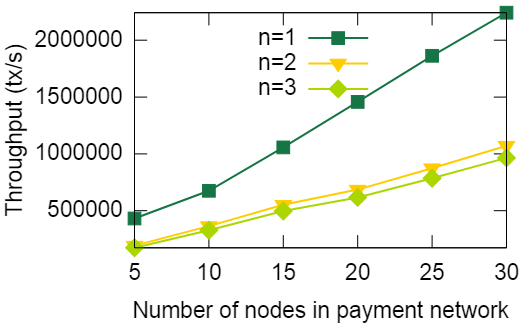}
	\captionsetup{margin=1cm}    
    \caption{Throughput for \\ complete topology}
    \label{fig:complete-throughput}
    \end{minipage}
    \begin{minipage}{.33\linewidth}
    \centering
    \includegraphics[width=\columnwidth]{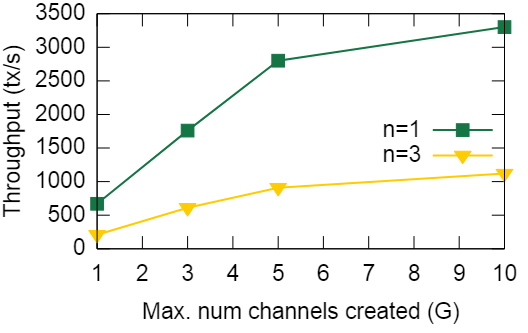}
    \captionsetup{margin=0.7cm}
    \caption{Throughput for \\ hub-and-spoke topology}
    \label{fig:ghostchannels}
    \end{minipage}
\end{figure*}

\subsection{Performance of multi-hop payments}
\label{sec:eval:chain}

Next we evaluate the performance of multi-hop payments and investigate: (i)~how
does latency increase with the number of hops in a payment path? and (ii)~how
do committee chains affect multi-hop performance?

For our experiments, we limit the maximum number of hops in a payment
  path to 11, as longer payment paths are unlikely to be seen in
  practice. Recent work~\cite{seres2019topological} studying LN shows that the
  average number of hops between two parties is approximately 3.

We construct the 11~payment channels, all of which are transatlantic in the
topology from \Cref{fig:latency_bw}. We send transactions along the path
$\uk \to \us \to \is \to \uk$. For \uk and \is, we split the payment channels
equally between the machines to spread load. All experiments use the same
payment channels and committee chains of the same length. Committee members are
deployed in different failure domains.

We measure the latency of multi-hop payments. We vary the number of hops and
the number of committee members per committee chain for each
node. \Cref{fig:chain_perf} shows that LN scales linearly with chain length,
taking 1\unit{sec} to complete a payment across 2~hops (2~channels) and
6\unit{secs} for 11~hops. \sys also scales linearly but with a different slope:
with $n$=$1$, the latency is about 2$\times$ that of LN; using
$n$$=$$3$, payments across 2~hops take 5\unit{secs}; payments across 11~hops
take 26\unit{secs}. The 3--4$\times$ overhead compared to LN is a due to the
extra network round trips for multi-hop payments.

To update all channels in a multi-hop payment consistently, both \sys and LN do
not pipeline payments. Therefore, throughput is $1/\mathrm{latency}$. \sys and
LN batch transactions: throughput becomes the batch size divided by the latency
to complete the payment. We compare the throughput for \sys and LN, with each
\sys node using committee chains of
$n$$=$$3$. \sys batches \txs{135,000}; LN batches \txs{1,000}
(see~\cref{sec:eval:channel}). With this, the throughput of \sys for 2~hops is
\txs{14,062}, and for 11~hops is \txs{3,649}. For LN, throughput for 2~hops is
\txs{862}, and \txs{139} for 11~hops. \sys thus outperforms LN by
16--26$\times$.

In summary, \sys requires three network round trips to complete a payment,
while LN requires only 1.5. \sys must synchronize nodes off-chain with extra
messages to support asynchronous blockchain access. In addition, \sys is
network-bound: chain replication increases latency.

\subsection{Performance of payment networks}
\label{sec:eval:network}

We evaluate the performance of a complete \sys payment network and investigate
how its throughput is affected by (i)~the network topology and (ii)~the
committee chains.

We use 30~machines located in the UK~(see~\Cref{fig:latency_bw}). As there
exist no public micro-payment datasets, we use the transactions from the
Bitcoin blockchain. We filter out transactions that we cannot replay, such as
those that spend to/from multi-signature addresses. For transactions with multiple inputs and outputs, we
choose only one. The resulting dataset has over 150~million payments from a
source to a recipient address.

We construct two payment network topologies: (i)~a \emph{complete} graph, in
which all node pairs have direct payment channels; and (ii)~a
\emph{hub-and-spoke} topology~(see~\Cref{fig:hubandspoke}), in which the nodes
are connected with 3~tiers of connectivity: tier~1 nodes have the highest
connectivity and tier~3 nodes the lowest. We emulate wide-area network links by
adding 100\unit{ms} latency between machines.

To execute payments, we assign Bitcoin addresses to the machines.
For the complete graph, we randomly and evenly assign Bitcoin addresses; for
the hub-and-spoke graph, we distribute the addresses in a skewed fashion, with
larger nodes being assigned more addresses than smaller nodes (50\% of
addresses to tier~1 nodes, 35\% to tier~2, and 15\% to tier~3).
For each graph deployment, we compare the throughput with differently-sized
committee chains, for $n$$=$$1$ to $n$$=$$3$~committee members per deposit.
We vary the number of nodes in the deployment from 5~to
30~machines.

\Cref{fig:complete-throughput} shows the throughput for the complete graph
topology. For all committee chain lengths, throughput scales linearly with the
node number. Committee chains of length~$n = 1$ perform best (\txs{2.2 million}
with 30~machines); committee chains with $n > 1$ limit throughput (\txs{1
  million}). There is little difference (9\%) between $n = 2$~committee members
and $n = 3$; throughput is bottlenecked by
the time to replicate state.

Next we consider the hub-and-spoke graph topology. Multi-hop payments use the
shortest path---if there are multiple paths, only one is chosen. As multi-hop
payments lock channels during execution, payments compete with one another. To
overcome this, \sys uses dynamic channel creation to allow concurrent payments
between endpoints (see~\cref{sec:chain:lock}).

\Cref{fig:ghostchannels} shows how the throughput increases as intermediate
nodes (\ie tier~1 and~2) are permitted to create more dynamic channels.
Without dynamic channels, \ie $G = 1$, with $n = 3$~committee chains, the
network achieves around \txs{210}, with an average latency of
720\unit{ms}. With $G > 1$, throughput scales almost linearly with the number
of channels, for both $n = 1$ and $n = 3$. We obtain diminishing returns as $G$
increases further because tier~3 nodes become congested.

In summary, payment throughput is lower in the hub-and-spoke topology
compared to the complete topology by several orders of magnitude. This is
a result of locking channels for multi-hop payments: while dynamic channel
creation alleviates contention, best performance requires high connectivity.

Given that \sys and LN exhibit different performance for single and multi-hop payments, any in-depth comparison requires careful treatment of many aspects, including the employed payment routing algorithm, the choice of transaction batching interval in LN, the number of dynamic channels created in \sys, and the used contention avoidance algorithm~\cite{malavolta2017concurrency}. We defer more experiments to future work.


\subsection{Blockchain cost}
\label{sec:cost}

We evaluate and compare: (i)~the number of transactions placed on the
blockchain; and (ii)~the blockchain cost.
We define the \emph{blockchain cost} as the amount of data placed on the
blockchain to open and close a payment channel. Unlike existing solutions, \sys
can assign multiple deposits to a single channel. For a fair comparison, we assume
at most 2~deposits per channel, and we abstract from particular blockchains by
counting the pairs of public keys and signatures~\cite{burchert2017scalable}.
We compare with the Lightning Network~(LN)~\cite{poon2016bitcoin}, Duplex
Micropayment Channels~(DMC)~\cite{decker2015duplex} and Scalable Funding of
Micropayment Channels~(SFMC)~\cite{burchert2017scalable}.

\Cref{tbl:cost} shows the number of transactions and the blockchain cost. For
all solutions but LN, the cost is higher if one party unilaterally closes the
channel. For DMC, the number of transactions required ranges from $2$ to
$3 + d$, where $d \geq 1$ defines the DMC transaction chain length.
In LN, 4~transactions must be placed onto the blockchain, which
result in a cost of 6~across bilateral and unilateral termination.
For SFMC, the number of transactions ranges from $2/n$ to
$(1 + i)/n + (3 + d)$, where $n$ is the total number of constructed payment
channels and $i$ and $d$ define the funding and transaction chain's
lengths. respectively.
Since each SFMC transaction requires $p$~signatures and is shared between the
$n$~payment channels, the blockchain cost is $2p/n$ if all parties
collaboratively close payment channels; $(1 + i)(p/n) + 2(3 + d)$ if closed
unilaterally.

\begin{table}[tb]
	 \vspace{1.5em}
     \caption{Number of transactions and blockchain costs}\label{tbl:cost} 
     \footnotesize\centering
	\resizebox{.9\linewidth}{!}{%
    \begin{tabular}{lrrrr}
    \toprule
    \textbf{Payment} & \multicolumn{2}{c}{\textbf{Bilateral}} & \multicolumn{2}{c}{\textbf{Unilateral}} \\
    \textbf{channel} & \textbf{No. txs} & \textbf{Cost} & \textbf{No. txs} & \textbf{Cost}\\
    \midrule
     LN~\cite{poon2016bitcoin} & $4$ & $6$ &  $4$ & $6$  \\
    \midrule
     DMC~\cite{decker2015duplex} & $2$ & $4$  & $3 + d$ & $2 (3 + d)$ \\
    \midrule
     SFMC~\cite{burchert2017scalable} & $2/n$ & $2p/n$  & $(1 + i)/n$ & $(1 + i)(p/n)$  \\
     & & & $+ (3 + d)$ & $+ 2 (3 + d)$  \\
	\midrule
     \sys & $1$ & $1 + (n/2)$ & $3$ & $2 + (n_{1}/2) + (n_{2}/2)$\\
     & & & & $ + m_{1} + m_{2}$\\
    \bottomrule
     \end{tabular}}
\end{table}

\sys constructs funding deposits using $m$-out-of-$n$ transactions. If the
channel has a single deposit and is settled off-chain, only one transaction is
required, with a cost of $1 + (n/2)$, \ie the cost of a signature and public
key to spend funds into the treasury address, and $n$~public keys for committee
members;
otherwise, with 2~deposits assigned to a channel, \sys requires 3~transactions,
with the cost including the two funding deposits and the settlement
transaction.

We observe that, with a $2$-out-of-$3$ multi-signature for each funding
deposit, \sys places 25\%--75\% fewer transactions on the blockchain than LN,
and is up to 58\% more efficient for bilateral termination.
For DMC and bilateral closures, \sys places 50\% fewer transactions and 37\%
less data on the blockchain than DMC.
While \sys outperforms SMFC when closing channels unilaterally, SMFC uses fewer
transactions under bilateral closure if $n = 1$ and $p/n > 1$. SFMC amortises
transactions across multiple parties and channels at the cost of having to
trust all involved parties. \sys does not make this assumption.


\section{Related Work}
\label{sec:rel_work}

\mypar{Payment channels and networks} Unilateral payment channels were first  discussed in~\cite{hearn2015contracts}.
Duplex Micropayment Channels~\cite{decker2015duplex} use time-locked
transactions to prevent old channel states from being published. Lightning
Network~(LN)~\cite{poon2016bitcoin} supports multi-hop payments but requires
users to monitor the blockchain. Pisa~\cite{mccorry2018pisa}
builds on LN and allows third parties to monitor the blockchain on behalf of
other users. REVIVE~\cite{khalil200r} rebalances payment channels, but locks the funds during the rebalancing
process. Sprites~\cite{miller2017sprites} can add and remove funds to
channels dynamically, but requires smart-contracts~\cite{szabo1997idea}. State
channels~\cite{dziembowski2018general,mccorryyou} is a generalization
of payment networks, but also requires smart-contracts. Fulgor and
Rayo~\cite{malavolta2017concurrency} attempt to add concurrency and privacy to
existing payment networks.

All of these proposals assume synchronous blockchain access. To the best of our
knowledge, \sys is the first system to avoid this assumption.

\noindent
\mypar{Blockchain layer scaling} Prior work addresses the scalability
  and performance limitations of blockchains by departing from chain
  structures~\cite{lewenberg2015inclusive, ethereum2015white,
    sompolinsky2015ghost}, changing block generation~\cite{eyal2016ng,
    pass2016hybrid, poon2017plasma}, operating in a permissioned
  setting~\cite{Androulaki2018hyperledger, greenspan2015multichain} and using
  classical consensus~\cite{castro1999practical, mazieres2015stellar,
    miller2016honeybadger}. Other approaches operate global
  committees~\cite{pass2018thunderella, gilad2017algorand, avalanche} or shard
  transactions to concurrent blockchains~\cite{kokoris2018omniledger,
    kogias2016byzcoin} in order to scale. Unlike these, \sys executes payments
  without the blockchain, and users can choose whether or not to use \sys in
  conjunction with a blockchain.

Fundamentally, on-chain protocols must reach consensus (global or per
  shard)~\cite{vukolic2015quest} for each transaction and thus cannot achieve
  the performance of \sys: by operating multiple concurrent and independent
  committees, \sys can scale throughput with the number of users and committees
  in the network.
  As with any second-layer solutions, \sys
  places deposit and settlement transactions on the blockchain and thus
  benefits from improved blockchain performance.

\label{sec:background:tee}
\label{sec:tee:isolation}
\label{sec:tee:remoteattestation}

\mypar{Trusted hardware and blockchains} 
Prior work proposes electronic payment systems~\cite{stepney2000electronic}
based on secure co-processors~\cite{dyer2001building}, smart
cards~\cite{clemons1996reengineering}, and trusted hardware
modules~\cite{boly1994esprit}. They utilize dedicated hardware to enforce
double-spending protection. However, these solutions do not integrate
asynchronously with a blockchain and make weaker security assumptions, such as
assuming no hardware compromises.

Microsoft's Confidential Consortium Framework~(CCF)~\cite{shamis2019ccf} operates
  a permissioned blockchain using TEEs to enable high throughput and
  confidentiality for transactions. Unlike \sys, CCF does not operate on top of an existing permissionless blockchain, but instead assumes a permissioned setting in which the identities of all members of the CCF consortium are known.

TEEChan~\cite{lind2016teechan} uses TEEs to realize single-hop payment channels
with limited lifetimes. It provides limited fault tolerance, requires
synchronous blockchain access, does not support multi-hop payments, and cannot
create payment channels instantly or dynamically assign
deposits. TownCrier~\cite{zhang2016town} enables a secure data-feed for
blockchain contracts; Tesseract~\cite{bentov2017tesseract} is a secure
multi-blockchain cryptocurrency exchange; Ekiden~\cite{cheng2018ekiden} offers
a platform for privacy-preserving smart contracts;
Obscuro~\cite{tran2017obscuro} constructs a Bitcoin privacy mechanism; and
Paralysis Proofs~\cite{zhangparalysis} allows consensus reconfiguration with a
blockchain. Apart from the different goals, \sys uses a more refined security
model: clients use a remote TEE to prevent fraud and a local TEE for
availability. 


\section{Conclusion}
\label{sec:conclusion}

Teechain is the first payment network to operate with asynchronous blockchain
access and offer dynamic deposits. \sys mitigates against TEE compromises
through a novel combination of force-freeze replication and $m$-out-of-$n$
signatures to construct committee chains. We evaluate \sys using Intel SGX on
Bitcoin; our results show orders of magnitude performance gains compared to the
state of the art.

\section{Acknowledgements}

We thank the anonymous reviewers and our shepherd, David Andersen, for
  their feedback and suggestions. This project received funding from the
  European Union Horizon 2020 research and innovation programme under the
  SecureCloud project~(690111); the Israel Science Foundation; the US-Israel
  Binational Science Foundation~(BSF); the US National Science
  Foundation~(NSF); the Israel Cyber Bureau; Engima MPC Inc; and a Mel Berlin
  Cyber-Security Scholarship. We also thank Intel for their donation of SGX
  servers.

\vspace{1em}
\renewcommand{\UrlFont}{\footnotesize}

{\bibliographystyle{ACM-Reference-Format}
\bibliography{references,btc}}


\begin{thebibliography}{99}


\ifx \showCODEN    \undefined \def \showCODEN     #1{\unskip}     \fi
\ifx \showDOI      \undefined \def \showDOI       #1{#1}\fi
\ifx \showISBNx    \undefined \def \showISBNx     #1{\unskip}     \fi
\ifx \showISBNxiii \undefined \def \showISBNxiii  #1{\unskip}     \fi
\ifx \showISSN     \undefined \def \showISSN      #1{\unskip}     \fi
\ifx \showLCCN     \undefined \def \showLCCN      #1{\unskip}     \fi
\ifx \shownote     \undefined \def \shownote      #1{#1}          \fi
\ifx \showarticletitle \undefined \def \showarticletitle #1{#1}   \fi
\ifx \showURL      \undefined \def \showURL       {\relax}        \fi
\providecommand\bibfield[2]{#2}
\providecommand\bibinfo[2]{#2}
\providecommand\natexlab[1]{#1}
\providecommand\showeprint[2][]{arXiv:#2}

\bibitem[\protect\citeauthoryear{Ali, Clarke, and McCorry}{Ali
  et~al\mbox{.}}{2017}]%
        {ali2017nuts}
\bibfield{author}{\bibinfo{person}{Syed~Taha Ali}, \bibinfo{person}{Dylan
  Clarke}, {and} \bibinfo{person}{Patrick McCorry}.}
  \bibinfo{year}{2017}\natexlab{}.
\newblock \showarticletitle{{The Nuts and Bolts of Micropayments: a Survey}}.
\newblock \bibinfo{journal}{\emph{Preprint arXiv:1710.02964}}.
\newblock


\bibitem[\protect\citeauthoryear{{Amazon}}{{Amazon}}{2019}]%
        {Amazon}
\bibfield{author}{\bibinfo{person}{{Amazon}}.} \bibinfo{year}{2019}\natexlab{}.
\newblock \bibinfo{howpublished}{\url{https://www.amazon.com/}}.
\newblock


\bibitem[\protect\citeauthoryear{Anati, Gueron, Johnson, and Scarlata}{Anati
  et~al\mbox{.}}{2013}]%
        {Anati2013innovative}
\bibfield{author}{\bibinfo{person}{Ittai Anati}, \bibinfo{person}{Shay Gueron},
  \bibinfo{person}{Simon Johnson}, {and} \bibinfo{person}{Vincent Scarlata}.}
  \bibinfo{year}{2013}\natexlab{}.
\newblock \showarticletitle{{Innovative Technology for {CPU} Based Attestation
  and Sealing}}. In \bibinfo{booktitle}{\emph{Proceedings of the 2nd
  International Workshop on Hardware and Architectural Support for Security and
  Privacy, HASP}}, Vol.~\bibinfo{volume}{13}.
\newblock


\bibitem[\protect\citeauthoryear{Androulaki, Barger, Bortnikov, Cachin,
  Christidis, De~Caro, Enyeart, Ferris, Laventman, Manevich, Muralidharan,
  Murthy, Nguyen, Sethi, Singh, Smith, Sorniotti, Stathakopoulou, Vukoli\'{c},
  Cocco, and Yellick}{Androulaki et~al\mbox{.}}{2018}]%
        {Androulaki2018hyperledger}
\bibfield{author}{\bibinfo{person}{Elli Androulaki}, \bibinfo{person}{Artem
  Barger}, \bibinfo{person}{Vita Bortnikov}, \bibinfo{person}{Christian
  Cachin}, \bibinfo{person}{Konstantinos Christidis}, \bibinfo{person}{Angelo
  De~Caro}, \bibinfo{person}{David Enyeart}, \bibinfo{person}{Christopher
  Ferris}, \bibinfo{person}{Gennady Laventman}, \bibinfo{person}{Yacov
  Manevich}, \bibinfo{person}{Srinivasan Muralidharan}, \bibinfo{person}{Chet
  Murthy}, \bibinfo{person}{Binh Nguyen}, \bibinfo{person}{Manish Sethi},
  \bibinfo{person}{Gari Singh}, \bibinfo{person}{Keith Smith},
  \bibinfo{person}{Alessandro Sorniotti}, \bibinfo{person}{Chrysoula
  Stathakopoulou}, \bibinfo{person}{Marko Vukoli\'{c}},
  \bibinfo{person}{Sharon~Weed Cocco}, {and} \bibinfo{person}{Jason Yellick}.}
  \bibinfo{year}{2018}\natexlab{}.
\newblock \showarticletitle{{Hyperledger Fabric: A Distributed Operating System
  for Permissioned Blockchains}}. In \bibinfo{booktitle}{\emph{EuroSys}}.
\newblock


\bibitem[\protect\citeauthoryear{{ARM Ltd.}}{{ARM Ltd.}}{2017}]%
        {trustzone}
\bibfield{author}{\bibinfo{person}{{ARM Ltd.}}}
  \bibinfo{year}{2017}\natexlab{}.
\newblock \bibinfo{title}{{TrustZone}}.
\newblock
  \bibinfo{howpublished}{\url{https://www.arm.com/products/security-on-arm/trustzone}.
  Accessed May 2017}.
\newblock


\bibitem[\protect\citeauthoryear{Barbosa, Portela, Scerri, and
  Warinschi}{Barbosa et~al\mbox{.}}{2016}]%
        {barbosa2016foundations}
\bibfield{author}{\bibinfo{person}{Manuel Barbosa}, \bibinfo{person}{Bernardo
  Portela}, \bibinfo{person}{Guillaume Scerri}, {and} \bibinfo{person}{Bogdan
  Warinschi}.} \bibinfo{year}{2016}\natexlab{}.
\newblock \showarticletitle{Foundations of hardware-based attested computation
  and application to SGX}. In \bibinfo{booktitle}{\emph{Security and Privacy
  (EuroS\&P), 2016 IEEE European Symposium on}}. IEEE,
  \bibinfo{pages}{245--260}.
\newblock


\bibitem[\protect\citeauthoryear{Bentov, Ji, Zhang, Li, Zhao, Breidenbach,
  Daian, and Juels}{Bentov et~al\mbox{.}}{2017}]%
        {bentov2017tesseract}
\bibfield{author}{\bibinfo{person}{Iddo Bentov}, \bibinfo{person}{Yan Ji},
  \bibinfo{person}{Fan Zhang}, \bibinfo{person}{Yunqi Li},
  \bibinfo{person}{Xueyuan Zhao}, \bibinfo{person}{Lorenz Breidenbach},
  \bibinfo{person}{Philip Daian}, {and} \bibinfo{person}{Ari Juels}.}
  \bibinfo{year}{2017}\natexlab{}.
\newblock \showarticletitle{Tesseract: Real-Time Cryptocurrency Exchange using
  Trusted Hardware.}
\newblock \bibinfo{journal}{\emph{IACR Cryptology ePrint Archive}}
  \bibinfo{volume}{2017}, \bibinfo{pages}{1153}.
\newblock


\bibitem[\protect\citeauthoryear{Bentov, Lee, Mizrahi, and Rosenfeld}{Bentov
  et~al\mbox{.}}{2014}]%
        {bentov2014activity}
\bibfield{author}{\bibinfo{person}{Iddo Bentov}, \bibinfo{person}{Charles Lee},
  \bibinfo{person}{Alex Mizrahi}, {and} \bibinfo{person}{Meni Rosenfeld}.}
  \bibinfo{year}{2014}\natexlab{}.
\newblock \bibinfo{title}{Proof of Activity: Extending Bitcoin’s Proof of
  Work via Proof of Stake}.
\newblock \bibinfo{howpublished}{ePrint Archive, Report 2014/452}.
\newblock
\newblock
\shownote{\url{http://eprint.iacr.org/2014/452}.}


\bibitem[\protect\citeauthoryear{{blockchain.info}}{{blockchain.info}}{2018}]%
        {BTCConfirmationTimes2018}
\bibfield{author}{\bibinfo{person}{{blockchain.info}}.}
  \bibinfo{year}{2018}\natexlab{}.
\newblock \bibinfo{title}{{Average Confirmation Time}}.
\newblock
  \bibinfo{howpublished}{\url{https://blockchain.info/charts/avg-confirmation-time?timespan=all&daysAverageString=7}.
  Accessed May 2018}.
\newblock


\bibitem[\protect\citeauthoryear{Boly, Bosselaers, Cramer, Michelsen,
  Mj{\o}lsnes, Muller, Pedersen, Pfitzmann, De~Rooij, Schoenmakers,
  et~al\mbox{.}}{Boly et~al\mbox{.}}{1994}]%
        {boly1994esprit}
\bibfield{author}{\bibinfo{person}{Jean-Paul Boly}, \bibinfo{person}{Antoon
  Bosselaers}, \bibinfo{person}{Ronald Cramer}, \bibinfo{person}{Rolf
  Michelsen}, \bibinfo{person}{Stig Mj{\o}lsnes}, \bibinfo{person}{Frank
  Muller}, \bibinfo{person}{Torben Pedersen}, \bibinfo{person}{Birgit
  Pfitzmann}, \bibinfo{person}{Peter De~Rooij}, \bibinfo{person}{Berry
  Schoenmakers}, {et~al\mbox{.}}} \bibinfo{year}{1994}\natexlab{}.
\newblock \showarticletitle{The ESPRIT project CAFE—High security digital
  payment systems}. In \bibinfo{booktitle}{\emph{European Symposium on Research
  in Computer Security}}. Springer, \bibinfo{pages}{217--230}.
\newblock


\bibitem[\protect\citeauthoryear{Brandenburger, Cachin, Lorenz, and
  Kapitza}{Brandenburger et~al\mbox{.}}{2017}]%
        {brandenburger2017rollback}
\bibfield{author}{\bibinfo{person}{Marcus Brandenburger},
  \bibinfo{person}{Christian Cachin}, \bibinfo{person}{Matthias Lorenz}, {and}
  \bibinfo{person}{R{\"u}diger Kapitza}.} \bibinfo{year}{2017}\natexlab{}.
\newblock \showarticletitle{Rollback and forking detection for trusted
  execution environments using lightweight collective memory}. In
  \bibinfo{booktitle}{\emph{47th Annual IEEE/IFIP International Conference on
  Dependable Systems and Networks (DSN)}}. IEEE, \bibinfo{pages}{157--168}.
\newblock


\bibitem[\protect\citeauthoryear{Brasser, M{\"u}ller, Dmitrienko, Kostiainen,
  Capkun, and Sadeghi}{Brasser et~al\mbox{.}}{2017}]%
        {brasser2017software}
\bibfield{author}{\bibinfo{person}{Ferdinand Brasser}, \bibinfo{person}{Urs
  M{\"u}ller}, \bibinfo{person}{Alexandra Dmitrienko}, \bibinfo{person}{Kari
  Kostiainen}, \bibinfo{person}{Srdjan Capkun}, {and}
  \bibinfo{person}{Ahmad-Reza Sadeghi}.} \bibinfo{year}{2017}\natexlab{}.
\newblock \showarticletitle{Software grand exposure: SGX cache attacks are
  practical}.
\newblock \bibinfo{journal}{\emph{arXiv:1702.07521}}, \bibinfo{pages}{33}.
\newblock


\bibitem[\protect\citeauthoryear{Burchert, Decker, and Wattenhofer}{Burchert
  et~al\mbox{.}}{2017}]%
        {burchert2017scalable}
\bibfield{author}{\bibinfo{person}{Conrad Burchert}, \bibinfo{person}{Christian
  Decker}, {and} \bibinfo{person}{Roger Wattenhofer}.}
  \bibinfo{year}{2017}\natexlab{}.
\newblock \showarticletitle{{Scalable Funding of Bitcoin Micropayment Channel
  Networks}}. In \bibinfo{booktitle}{\emph{International Symposium on
  Stabilization, Safety, and Security of Distributed Systems}}. Springer,
  \bibinfo{pages}{361--377}.
\newblock


\bibitem[\protect\citeauthoryear{Canetti}{Canetti}{2001}]%
        {canetti2001universally}
\bibfield{author}{\bibinfo{person}{Ran Canetti}.}
  \bibinfo{year}{2001}\natexlab{}.
\newblock \showarticletitle{Universally composable security: A new paradigm for
  cryptographic protocols}. In \bibinfo{booktitle}{\emph{Foundations of
  Computer Science, 2001. Proceedings. 42nd IEEE Symposium on}}. IEEE,
  \bibinfo{pages}{136--145}.
\newblock


\bibitem[\protect\citeauthoryear{Canetti}{Canetti}{2004}]%
        {canetti2004universally}
\bibfield{author}{\bibinfo{person}{Ran Canetti}.}
  \bibinfo{year}{2004}\natexlab{}.
\newblock \showarticletitle{Universally composable signature, certification,
  and authentication}. In \bibinfo{booktitle}{\emph{Computer Security
  Foundations Workshop, 2004. Proceedings. 17th IEEE}}. IEEE,
  \bibinfo{pages}{219--233}.
\newblock


\bibitem[\protect\citeauthoryear{Castro, Liskov, et~al\mbox{.}}{Castro
  et~al\mbox{.}}{1999}]%
        {castro1999practical}
\bibfield{author}{\bibinfo{person}{Miguel Castro}, \bibinfo{person}{Barbara
  Liskov}, {et~al\mbox{.}}} \bibinfo{year}{1999}\natexlab{}.
\newblock \showarticletitle{{Practical Byzantine Fault Tolerance}}. In
  \bibinfo{booktitle}{\emph{OSDI}}, Vol.~\bibinfo{volume}{99}.
  \bibinfo{pages}{173--186}.
\newblock


\bibitem[\protect\citeauthoryear{Cheng, Zhang, Kos, He, Hynes, Johnson, Juels,
  Miller, and Song}{Cheng et~al\mbox{.}}{2018}]%
        {cheng2018ekiden}
\bibfield{author}{\bibinfo{person}{Raymond Cheng}, \bibinfo{person}{Fan Zhang},
  \bibinfo{person}{Jernej Kos}, \bibinfo{person}{Warren He},
  \bibinfo{person}{Nicholas Hynes}, \bibinfo{person}{Noah Johnson},
  \bibinfo{person}{Ari Juels}, \bibinfo{person}{Andrew Miller}, {and}
  \bibinfo{person}{Dawn Song}.} \bibinfo{year}{2018}\natexlab{}.
\newblock \showarticletitle{Ekiden: A Platform for Confidentiality-Preserving,
  Trustworthy, and Performant Smart Contract Execution}.
\newblock \bibinfo{journal}{\emph{Preprint arXiv:1804.05141}}.
\newblock


\bibitem[\protect\citeauthoryear{Clemons, Croson, and Weber}{Clemons
  et~al\mbox{.}}{1996}]%
        {clemons1996reengineering}
\bibfield{author}{\bibinfo{person}{Eric~K Clemons}, \bibinfo{person}{David~C
  Croson}, {and} \bibinfo{person}{Bruce~W Weber}.}
  \bibinfo{year}{1996}\natexlab{}.
\newblock \showarticletitle{Reengineering money: the Mondex stored value card
  and beyond}.
\newblock \bibinfo{journal}{\emph{International Journal of Electronic
  Commerce}} \bibinfo{volume}{1}, \bibinfo{number}{2}, \bibinfo{pages}{5--31}.
\newblock


\bibitem[\protect\citeauthoryear{Costan, Lebedev, and Devadas}{Costan
  et~al\mbox{.}}{2016}]%
        {costan2016sanctum}
\bibfield{author}{\bibinfo{person}{Victor Costan}, \bibinfo{person}{Ilia
  Lebedev}, {and} \bibinfo{person}{Srinivas Devadas}.}
  \bibinfo{year}{2016}\natexlab{}.
\newblock \showarticletitle{Sanctum: Minimal hardware extensions for strong
  software isolation}. In \bibinfo{booktitle}{\emph{25th USENIX Security
  Symposium (USENIX Security 16)}}. \bibinfo{pages}{857--874}.
\newblock


\bibitem[\protect\citeauthoryear{Decker and Wattenhofer}{Decker and
  Wattenhofer}{2015}]%
        {decker2015duplex}
\bibfield{author}{\bibinfo{person}{Christian Decker} {and}
  \bibinfo{person}{Roger Wattenhofer}.} \bibinfo{year}{2015}\natexlab{}.
\newblock \showarticletitle{{A Fast and Scalable Payment Network with {Bitcoin}
  {Duplex} {Micropayment} {Channels}}}. In
  \bibinfo{booktitle}{\emph{Stabilization, Safety, and Security of Distributed
  Systems - 17th International Symposium}}.
\newblock
\urldef\tempurl%
\url{https://doi.org/10.1007/978-3-319-21741-3_1}
\showDOI{\tempurl}


\bibitem[\protect\citeauthoryear{Douceur}{Douceur}{2002}]%
        {douceur2002sybil}
\bibfield{author}{\bibinfo{person}{John~R Douceur}.}
  \bibinfo{year}{2002}\natexlab{}.
\newblock \showarticletitle{The sybil attack}. In
  \bibinfo{booktitle}{\emph{International workshop on peer-to-peer systems}}.
  Springer, \bibinfo{pages}{251--260}.
\newblock


\bibitem[\protect\citeauthoryear{Dryja}{Dryja}{2015}]%
        {tadgescalability}
\bibfield{author}{\bibinfo{person}{Tadge Dryja}.}
  \bibinfo{year}{2015}\natexlab{}.
\newblock \showarticletitle{{Scalability of lightning with different bips and
  some back-of-the-envelope calculations}}.
  \bibinfo{howpublished}{\url{http://diyhpl.us/wiki/transcripts/scalingbitcoin/hong-kong/overview-of-bips-necessary-for-lightning/}}.
\newblock


\bibitem[\protect\citeauthoryear{Dryja}{Dryja}{2016}]%
        {dryja2016monitor}
\bibfield{author}{\bibinfo{person}{Thaddeus Dryja}.}
  \bibinfo{year}{2016}\natexlab{}.
\newblock \bibinfo{title}{Unlinkable outsourced channel monitoring}.
\newblock \bibinfo{howpublished}{\url{https://youtu.be/Gzg_u9gHc5Q?t=2875}}.
\newblock


\bibitem[\protect\citeauthoryear{{DwarfPool}}{{DwarfPool}}{2016}]%
        {dwarfpool2016emptyblocks}
\bibfield{author}{\bibinfo{person}{{DwarfPool}}.}
  \bibinfo{year}{2016}\natexlab{}.
\newblock \bibinfo{title}{{Why DwarfPool mines mostly empty blocks and only few
  ones with transactions}}.
\newblock
  \bibinfo{howpublished}{\url{https://www.reddit.com/r/ethereum/comments/57c1yn/why_dwarfpool_mines_mostly_empty_blocks_and_only/}.
  Accessed Feb 2018}.
\newblock


\bibitem[\protect\citeauthoryear{Dyer, Lindemann, Perez, Sailer, Van~Doorn, and
  Smith}{Dyer et~al\mbox{.}}{2001}]%
        {dyer2001building}
\bibfield{author}{\bibinfo{person}{Joan~G Dyer}, \bibinfo{person}{Mark
  Lindemann}, \bibinfo{person}{Ronald Perez}, \bibinfo{person}{Reiner Sailer},
  \bibinfo{person}{Leendert Van~Doorn}, {and} \bibinfo{person}{Sean~W Smith}.}
  \bibinfo{year}{2001}\natexlab{}.
\newblock \showarticletitle{Building the IBM 4758 secure coprocessor}.
\newblock \bibinfo{journal}{\emph{Computer}} \bibinfo{volume}{34},
  \bibinfo{number}{10}, \bibinfo{pages}{57--66}.
\newblock


\bibitem[\protect\citeauthoryear{Dziembowski, Faust, and
  Host{\'a}kov{\'a}}{Dziembowski et~al\mbox{.}}{2018}]%
        {dziembowski2018general}
\bibfield{author}{\bibinfo{person}{Stefan Dziembowski},
  \bibinfo{person}{Sebastian Faust}, {and} \bibinfo{person}{Kristina
  Host{\'a}kov{\'a}}.} \bibinfo{year}{2018}\natexlab{}.
\newblock \showarticletitle{General state channel networks}. In
  \bibinfo{booktitle}{\emph{Proceedings of 2018 SIGSAC Conference on Computer
  and Communications Security}}. ACM, \bibinfo{pages}{949--966}.
\newblock


\bibitem[\protect\citeauthoryear{{Ebay}}{{Ebay}}{2019}]%
        {ebay}
\bibfield{author}{\bibinfo{person}{{Ebay}}.} \bibinfo{year}{2019}\natexlab{}.
\newblock \bibinfo{howpublished}{\url{https://www.ebay.com/}}.
\newblock


\bibitem[\protect\citeauthoryear{Eyal, Gencer, Sirer, and Van~Renesse}{Eyal
  et~al\mbox{.}}{2016}]%
        {eyal2016ng}
\bibfield{author}{\bibinfo{person}{Ittay Eyal}, \bibinfo{person}{Adem~Efe
  Gencer}, \bibinfo{person}{Emin~G{\"u}n Sirer}, {and} \bibinfo{person}{Robbert
  Van~Renesse}.} \bibinfo{year}{2016}\natexlab{}.
\newblock \showarticletitle{Bitcoin-{NG}: A Scalable Blockchain Protocol}. In
  \bibinfo{booktitle}{\emph{13th USENIX Symposium on Networked Systems Design
  and Implementation (NSDI 2016)}}.
\newblock


\bibitem[\protect\citeauthoryear{Gilad, Hemo, Micali, Vlachos, and
  Zeldovich}{Gilad et~al\mbox{.}}{2017}]%
        {gilad2017algorand}
\bibfield{author}{\bibinfo{person}{Yossi Gilad}, \bibinfo{person}{Rotem Hemo},
  \bibinfo{person}{Silvio Micali}, \bibinfo{person}{Georgios Vlachos}, {and}
  \bibinfo{person}{Nickolai Zeldovich}.} \bibinfo{year}{2017}\natexlab{}.
\newblock \showarticletitle{Algorand: Scaling byzantine agreements for
  cryptocurrencies}. In \bibinfo{booktitle}{\emph{Proceedings of the 26th
  Symposium on Operating Systems Principles}}. ACM, \bibinfo{pages}{51--68}.
\newblock


\bibitem[\protect\citeauthoryear{Greenspan}{Greenspan}{2015}]%
        {greenspan2015multichain}
\bibfield{author}{\bibinfo{person}{Gideon Greenspan}.}
  \bibinfo{year}{2015}\natexlab{}.
\newblock \bibinfo{title}{MultiChain private blockchain—White paper}.
\newblock
  \bibinfo{howpublished}{\url{http://www.multichain.com/download/MultiChain-White-Paper.pdf}}.
\newblock


\bibitem[\protect\citeauthoryear{Gudgeon, Moreno-Sanchez, Roos, McCorry, and
  Gervais}{Gudgeon et~al\mbox{.}}{2019}]%
        {gudgeon2019SoKpaymentchannels}
\bibfield{author}{\bibinfo{person}{Lewis Gudgeon}, \bibinfo{person}{Pedro
  Moreno-Sanchez}, \bibinfo{person}{Stefanie Roos}, \bibinfo{person}{Patrick
  McCorry}, {and} \bibinfo{person}{Arthur Gervais}.}
  \bibinfo{year}{2019}\natexlab{}.
\newblock \bibinfo{title}{SoK: Off The Chain Transactions}.
\newblock \bibinfo{howpublished}{ePrint Archive, Report 2019/360}.
\newblock
\newblock
\shownote{\url{https://eprint.iacr.org/2019/360}.}


\bibitem[\protect\citeauthoryear{Hearn and Spilman}{Hearn and Spilman}{2015}]%
        {hearn2015contracts}
\bibfield{author}{\bibinfo{person}{Mike Hearn} {and} \bibinfo{person}{Jeremy
  Spilman}.} \bibinfo{year}{2015}\natexlab{}.
\newblock \bibinfo{title}{Rapidly-adjusted micropayments to a pre-determined
  party}.
\newblock \bibinfo{howpublished}{\url{https://en.bitcoin.it/wiki/Contract}}.
\newblock


\bibitem[\protect\citeauthoryear{Heilman, Kendler, Zohar, and Goldberg}{Heilman
  et~al\mbox{.}}{2015}]%
        {heilman2015eclipse}
\bibfield{author}{\bibinfo{person}{Ethan Heilman}, \bibinfo{person}{Alison
  Kendler}, \bibinfo{person}{Aviv Zohar}, {and} \bibinfo{person}{Sharon
  Goldberg}.} \bibinfo{year}{2015}\natexlab{}.
\newblock \showarticletitle{Eclipse Attacks on {Bitcoin}'s Peer-to-Peer
  Network}. In \bibinfo{booktitle}{\emph{24th {USENIX} Security Symposium,
  {USENIX} Security 15, Washington, D.C., USA, August 12-14, 2015.}}
  \bibinfo{pages}{129--144}.
\newblock
\urldef\tempurl%
\url{https://www.usenix.org/conference/usenixsecurity15/technical-sessions/presentation/heilman}
\showURL{%
\tempurl}


\bibitem[\protect\citeauthoryear{{Hex-Five Security}}{{Hex-Five
  Security}}{2018}]%
        {multizone}
\bibfield{author}{\bibinfo{person}{{Hex-Five Security}}.}
  \bibinfo{year}{2018}\natexlab{}.
\newblock \bibinfo{title}{Multizone: The first Trusted Execution Environment
  for RISC-V}.
\newblock \bibinfo{howpublished}{\url{https://hex-five.com/}}.
\newblock


\bibitem[\protect\citeauthoryear{Intel}{Intel}{2015}]%
        {sgx}
\bibfield{author}{\bibinfo{person}{Intel}.} \bibinfo{year}{2015}\natexlab{}.
\newblock \bibinfo{title}{{Product Change Notification}}.
\newblock
  \bibinfo{howpublished}{\url{https://qdms.intel.com/dm/i.aspx/5A160770-FC47-47A0-BF8A-062540456F0A/PCN114074-00.pdf}.
  Accessed May 2018}.
\newblock


\bibitem[\protect\citeauthoryear{{Intel}}{{Intel}}{2017}]%
        {linux-sgx-sdk-dev-reference}
\bibfield{author}{\bibinfo{person}{{Intel}}.} \bibinfo{year}{2017}\natexlab{}.
\newblock \bibinfo{title}{{Intel SGX SDK for Linux}}.
\newblock
  \bibinfo{howpublished}{\url{https://download.01.org/intel-sgx/linux-1.8/docs/Intel_SGX_SDK_Developer_Reference_Linux_1.8_Open_Source.pdf}.
  Accessed May 2017}.
\newblock


\bibitem[\protect\citeauthoryear{{Intel Corp.}}{{Intel Corp.}}{2014}]%
        {sgx14}
\bibfield{author}{\bibinfo{person}{{Intel Corp.}}}
  \bibinfo{year}{2014}\natexlab{}.
\newblock \bibinfo{title}{{Software Guard Extensions Programming Reference,
  Ref. 329298-002US}}.
\newblock
  \bibinfo{howpublished}{\url{https://software.intel.com/sites/default/files/managed/48/88/329298-002.pdf}}.
\newblock
\urldef\tempurl%
\url{https://software.intel.com/sites/default/files/managed/48/88/329298-002.pdf}
\showURL{%
\tempurl}


\bibitem[\protect\citeauthoryear{{Intel Inc.}}{{Intel Inc.}}{2016}]%
        {Intel2016RemoteAttesation}
\bibfield{author}{\bibinfo{person}{{Intel Inc.}}}
  \bibinfo{year}{2016}\natexlab{}.
\newblock \bibinfo{title}{{Intel Software Guard Extensions Remote Attestation
  End-to-End Example}}.
\newblock
  \bibinfo{howpublished}{\url{https://software.intel.com/en-us/articles/intel-software-guard-extensions-remote-attestation-end-to-end-example}.
  Accessed May 2017}.
\newblock


\bibitem[\protect\citeauthoryear{{Intel Inc.}}{{Intel Inc.}}{2017}]%
        {monotoniccounter}
\bibfield{author}{\bibinfo{person}{{Intel Inc.}}}
  \bibinfo{year}{2017}\natexlab{}.
\newblock \bibinfo{title}{{sgx\_create\_monotonic\_counter}}.
\newblock
  \bibinfo{howpublished}{\url{https://software.intel.com/en-us/node/709160}.
  Accessed May 2017}.
\newblock


\bibitem[\protect\citeauthoryear{{Johnson, Simon et al.}}{{Johnson, Simon et
  al.}}{2016}]%
        {johnson2016intel}
\bibfield{author}{\bibinfo{person}{{Johnson, Simon et al.}}}
  \bibinfo{year}{2016}\natexlab{}.
\newblock \bibinfo{title}{{Intel{\textregistered} Software Guard Extensions:
  {EPID} {P}rovisioning and {A}ttestation {S}ervices}}.
\newblock
  \bibinfo{howpublished}{\url{https://software.intel.com/en-us/blogs/2016/03/09/intel-sgx-epid-provisioning-and-attestation-services}}.
\newblock


\bibitem[\protect\citeauthoryear{{Jordan Pearson}}{{Jordan Pearson}}{2015}]%
        {pearson2015wikileaksAttacks}
\bibfield{author}{\bibinfo{person}{{Jordan Pearson}}.}
  \bibinfo{year}{2015}\natexlab{}.
\newblock \bibinfo{title}{{WikiLeaks Is Now a Target In the Massive Spam Attack
  on Bitcoin}}.
\newblock
  \bibinfo{howpublished}{\url{https://motherboard.vice.com/en_us/article/ezvw7z/wikileaks-is-now-a-target-in-the-massive-spam-attack-on-bitcoin}.
  Accessed Feb 2018}.
\newblock


\bibitem[\protect\citeauthoryear{{Joseph Young}}{{Joseph Young}}{2017}]%
        {young2017mempool}
\bibfield{author}{\bibinfo{person}{{Joseph Young}}.}
  \bibinfo{year}{2017}\natexlab{}.
\newblock \bibinfo{title}{{Analyst: Suspicious Bitcoin Mempool Activity,
  Transaction Fees Spike to 16}}.
\newblock
  \bibinfo{howpublished}{\url{https://cointelegraph.com/news/analyst-suspicious-bitcoin-mempool-activity-transaction-fees-spike-to-16}.
  Accessed Feb 2018}.
\newblock


\bibitem[\protect\citeauthoryear{{JP Buntinx}}{{JP Buntinx}}{2017}]%
        {buntinx2017ICO}
\bibfield{author}{\bibinfo{person}{{JP Buntinx}}.}
  \bibinfo{year}{2017}\natexlab{}.
\newblock \bibinfo{title}{{F2Pool Allegedly Prevented Users From Investing in
  Status ICO}}.
\newblock
  \bibinfo{howpublished}{\url{https://themerkle.com/f2pool-allegedly-prevented-users-from-investing-in-status-ico/}.
  Accessed Feb 2018}.
\newblock


\bibitem[\protect\citeauthoryear{Kaplan, Powell, and Woller}{Kaplan
  et~al\mbox{.}}{2016}]%
        {kaplan2016amd}
\bibfield{author}{\bibinfo{person}{David Kaplan}, \bibinfo{person}{Jeremy
  Powell}, {and} \bibinfo{person}{Tom Woller}.}
  \bibinfo{year}{2016}\natexlab{}.
\newblock \showarticletitle{{AMD Memory Encryption}}.
\newblock \bibinfo{journal}{\emph{White paper}}.
\newblock


\bibitem[\protect\citeauthoryear{{Keystone Project}}{{Keystone
  Project}}{2018}]%
        {keystoneEnclave}
\bibfield{author}{\bibinfo{person}{{Keystone Project}}.}
  \bibinfo{year}{2018}\natexlab{}.
\newblock \bibinfo{title}{Keystone: Open-source Secure Hardware Enclave}.
\newblock \bibinfo{howpublished}{\url{https://keystone-enclave.org/}}.
\newblock


\bibitem[\protect\citeauthoryear{Khalil and Gervais}{Khalil and
  Gervais}{2017}]%
        {khalil200r}
\bibfield{author}{\bibinfo{person}{Rami Khalil} {and} \bibinfo{person}{Arthur
  Gervais}.} \bibinfo{year}{2017}\natexlab{}.
\newblock \showarticletitle{{Revive: Rebalancing Off-Blockchain Payment
  Networks}}.
\newblock \bibinfo{journal}{\emph{Gas}}  \bibinfo{volume}{200},
  \bibinfo{pages}{400}.
\newblock


\bibitem[\protect\citeauthoryear{Kogias, Jovanovic, Gailly, Khoffi, Gasser, and
  Ford}{Kogias et~al\mbox{.}}{2016}]%
        {kogias2016byzcoin}
\bibfield{author}{\bibinfo{person}{Eleftherios~Kokoris Kogias},
  \bibinfo{person}{Philipp Jovanovic}, \bibinfo{person}{Nicolas Gailly},
  \bibinfo{person}{Ismail Khoffi}, \bibinfo{person}{Linus Gasser}, {and}
  \bibinfo{person}{Bryan Ford}.} \bibinfo{year}{2016}\natexlab{}.
\newblock \showarticletitle{{Enhancing Bitcoin Security and Performance with
  Strong Consistency via Collective Signing}}. In
  \bibinfo{booktitle}{\emph{25th USENIX Security Symposium (USENIX Security
  16)}}.
\newblock


\bibitem[\protect\citeauthoryear{Kokoris-Kogias, Jovanovic, Gasser, Gailly,
  Syta, and Ford}{Kokoris-Kogias et~al\mbox{.}}{2018}]%
        {kokoris2018omniledger}
\bibfield{author}{\bibinfo{person}{Eleftherios Kokoris-Kogias},
  \bibinfo{person}{Philipp Jovanovic}, \bibinfo{person}{Linus Gasser},
  \bibinfo{person}{Nicolas Gailly}, \bibinfo{person}{Ewa Syta}, {and}
  \bibinfo{person}{Bryan Ford}.} \bibinfo{year}{2018}\natexlab{}.
\newblock \showarticletitle{Omniledger: A secure, scale-out, decentralized
  ledger via sharding}. In \bibinfo{booktitle}{\emph{2018 IEEE Symposium on
  Security and Privacy (SP)}}. IEEE, \bibinfo{pages}{583--598}.
\newblock


\bibitem[\protect\citeauthoryear{Krawczyk}{Krawczyk}{2003}]%
        {krawczyk2003sigma}
\bibfield{author}{\bibinfo{person}{Hugo Krawczyk}.}
  \bibinfo{year}{2003}\natexlab{}.
\newblock \showarticletitle{SIGMA: The ‘SIGn-and-MAc’approach to
  authenticated Diffie-Hellman and its use in the IKE protocols}. In
  \bibinfo{booktitle}{\emph{Annual International Cryptology Conference}}.
  Springer, \bibinfo{pages}{400--425}.
\newblock


\bibitem[\protect\citeauthoryear{Lamport et~al\mbox{.}}{Lamport
  et~al\mbox{.}}{2001}]%
        {lamport2001paxos}
\bibfield{author}{\bibinfo{person}{Leslie Lamport} {et~al\mbox{.}}}
  \bibinfo{year}{2001}\natexlab{}.
\newblock \showarticletitle{Paxos made simple}.
\newblock \bibinfo{journal}{\emph{ACM Sigact News. Dec 2001}}
  \bibinfo{volume}{32}, \bibinfo{number}{4}, \bibinfo{pages}{18--25}.
\newblock


\bibitem[\protect\citeauthoryear{Lewenberg, Sompolinsky, and Zohar}{Lewenberg
  et~al\mbox{.}}{2015}]%
        {lewenberg2015inclusive}
\bibfield{author}{\bibinfo{person}{Yoad Lewenberg}, \bibinfo{person}{Yonatan
  Sompolinsky}, {and} \bibinfo{person}{Aviv Zohar}.}
  \bibinfo{year}{2015}\natexlab{}.
\newblock \showarticletitle{{Inclusive Block Chain Protocols}}. In
  \bibinfo{booktitle}{\emph{Financial Cryptography}}. \bibinfo{address}{Puerto
  Rico}.
\newblock


\bibitem[\protect\citeauthoryear{{Lightning Network community}}{{Lightning
  Network community}}{2017}]%
        {lightning2017source}
\bibfield{author}{\bibinfo{person}{{Lightning Network community}}.}
  \bibinfo{year}{2017}\natexlab{}.
\newblock \bibinfo{title}{Lightning Network Daemon}.
\newblock \bibinfo{howpublished}{\url{https://github.com/lightningnetwork/lnd}.
  Accessed May~2017}.
\newblock


\bibitem[\protect\citeauthoryear{{Linaro}}{{Linaro}}{2014}]%
        {optee}
\bibfield{author}{\bibinfo{person}{{Linaro}}.} \bibinfo{year}{2014}\natexlab{}.
\newblock \bibinfo{title}{Open Portable Trusted Execution Environment}.
\newblock \bibinfo{howpublished}{\url{https://www.op-tee.org/}}.
\newblock


\bibitem[\protect\citeauthoryear{Lind, Eyal, Pietzuch, and Sirer}{Lind
  et~al\mbox{.}}{2016}]%
        {lind2016teechan}
\bibfield{author}{\bibinfo{person}{Joshua Lind}, \bibinfo{person}{Ittay Eyal},
  \bibinfo{person}{Peter Pietzuch}, {and} \bibinfo{person}{Emin~G{\"u}n
  Sirer}.} \bibinfo{year}{2016}\natexlab{}.
\newblock \showarticletitle{{Teechan: Payment channels using trusted execution
  environments}}.
\newblock \bibinfo{journal}{\emph{Preprint arXiv:1612.07766}}.
\newblock


\bibitem[\protect\citeauthoryear{Lind, Naor, Kelbert, Eyal, Sirer, and
  Pietzuch}{Lind et~al\mbox{.}}{2019}]%
        {teechain_sosp}
\bibfield{author}{\bibinfo{person}{Joshua Lind}, \bibinfo{person}{Oded Naor},
  \bibinfo{person}{Florian Kelbert}, \bibinfo{person}{Ittay Eyal},
  \bibinfo{person}{Emin~G{\"u}n Sirer}, {and} \bibinfo{person}{Peter
  Pietzuch}.} \bibinfo{year}{2019}\natexlab{}.
\newblock \showarticletitle{Teechain: A Secure Payment Network with
  Asynchronous Blockchain Access}. In \bibinfo{booktitle}{\emph{Proceedings of
  the 27th Symposium on Operating Systems Principles}}.
\newblock


\bibitem[\protect\citeauthoryear{Malavolta, Moreno-Sanchez, Kate, Maffei, and
  Ravi}{Malavolta et~al\mbox{.}}{2017}]%
        {malavolta2017concurrency}
\bibfield{author}{\bibinfo{person}{Giulio Malavolta}, \bibinfo{person}{Pedro
  Moreno-Sanchez}, \bibinfo{person}{Aniket Kate}, \bibinfo{person}{Matteo
  Maffei}, {and} \bibinfo{person}{Srivatsan Ravi}.}
  \bibinfo{year}{2017}\natexlab{}.
\newblock \bibinfo{title}{Concurrency and privacy with payment-channel
  networks}.
\newblock
\newblock


\bibitem[\protect\citeauthoryear{Marcus, Heilman, and Goldberg}{Marcus
  et~al\mbox{.}}{2018}]%
        {marcus2018low}
\bibfield{author}{\bibinfo{person}{Yuval Marcus}, \bibinfo{person}{Ethan
  Heilman}, {and} \bibinfo{person}{Sharon Goldberg}.}
  \bibinfo{year}{2018}\natexlab{}.
\newblock \showarticletitle{Low-Resource Eclipse Attacks on Ethereum's
  Peer-to-Peer Network.}
\newblock \bibinfo{journal}{\emph{IACR Cryptology ePrint Archive}}
  \bibinfo{volume}{2018}, \bibinfo{pages}{236}.
\newblock


\bibitem[\protect\citeauthoryear{Matetic, Ahmed, Kostiainen, Dhar, Sommer,
  Gervais, Juels, and Capkun}{Matetic et~al\mbox{.}}{2017}]%
        {Matetic2017ROTE}
\bibfield{author}{\bibinfo{person}{Sinisa Matetic}, \bibinfo{person}{Mansoor
  Ahmed}, \bibinfo{person}{Kari Kostiainen}, \bibinfo{person}{Aritra Dhar},
  \bibinfo{person}{David Sommer}, \bibinfo{person}{Arthur Gervais},
  \bibinfo{person}{Ari Juels}, {and} \bibinfo{person}{Srdjan Capkun}.}
  \bibinfo{year}{2017}\natexlab{}.
\newblock \bibinfo{title}{{ROTE: Rollback Protection for Trusted Execution}}.
\newblock \bibinfo{howpublished}{Cryptology ePrint Archive, Report 2017/048}.
\newblock
\newblock
\shownote{\url{http://eprint.iacr.org/2017/048}.}


\bibitem[\protect\citeauthoryear{Mazieres}{Mazieres}{2015}]%
        {mazieres2015stellar}
\bibfield{author}{\bibinfo{person}{David Mazieres}.}
  \bibinfo{year}{2015}\natexlab{}.
\newblock \bibinfo{title}{{The {Stellar} Consensus Protocol: A Federated Model
  for {Internet-level} Consensus}}.
\newblock
  \bibinfo{howpublished}{\url{https://www.stellar.org/papers/stellar-consensus-protocol.pdf}}.
\newblock


\bibitem[\protect\citeauthoryear{McCorry, Bakshi, Bentov, Miller, and
  Meiklejohn}{McCorry et~al\mbox{.}}{2018a}]%
        {mccorry2018pisa}
\bibfield{author}{\bibinfo{person}{Patrick McCorry}, \bibinfo{person}{Surya
  Bakshi}, \bibinfo{person}{Iddo Bentov}, \bibinfo{person}{Andrew Miller},
  {and} \bibinfo{person}{Sarah Meiklejohn}.} \bibinfo{year}{2018}\natexlab{a}.
\newblock \showarticletitle{Pisa: Arbitration Outsourcing for State Channels.}
\newblock \bibinfo{journal}{\emph{IACR Cryptology ePrint Archive}}
  \bibinfo{volume}{2018}, \bibinfo{pages}{582}.
\newblock


\bibitem[\protect\citeauthoryear{McCorry, Buckland, Bakshi, W{\"u}st, and
  Miller}{McCorry et~al\mbox{.}}{2018b}]%
        {mccorryyou}
\bibfield{author}{\bibinfo{person}{Patrick McCorry}, \bibinfo{person}{Chris
  Buckland}, \bibinfo{person}{Surya Bakshi}, \bibinfo{person}{Karl W{\"u}st},
  {and} \bibinfo{person}{Andrew Miller}.} \bibinfo{year}{2018}\natexlab{b}.
\newblock \showarticletitle{You sank my battleship! A case study to evaluate
  state channels as a scaling solution for cryptocurrencies}.
\newblock


\bibitem[\protect\citeauthoryear{Miller, Bentov, Kumaresan, and McCorry}{Miller
  et~al\mbox{.}}{2017}]%
        {miller2017sprites}
\bibfield{author}{\bibinfo{person}{Andrew Miller}, \bibinfo{person}{Iddo
  Bentov}, \bibinfo{person}{Ranjit Kumaresan}, {and} \bibinfo{person}{Patrick
  McCorry}.} \bibinfo{year}{2017}\natexlab{}.
\newblock \showarticletitle{Sprites: Payment channels that go faster than
  lightning}.
\newblock \bibinfo{journal}{\emph{CoRR abs/1702.05812}}.
\newblock


\bibitem[\protect\citeauthoryear{Miller, Xia, Croman, Shi, and Song}{Miller
  et~al\mbox{.}}{2016}]%
        {miller2016honeybadger}
\bibfield{author}{\bibinfo{person}{Andrew Miller}, \bibinfo{person}{Yu Xia},
  \bibinfo{person}{Kyle Croman}, \bibinfo{person}{Elaine Shi}, {and}
  \bibinfo{person}{Dawn Song}.} \bibinfo{year}{2016}\natexlab{}.
\newblock \showarticletitle{{The Honey Badger of BFT Protocols}}. In
  \bibinfo{booktitle}{\emph{Proceedings of the 2016 ACM SIGSAC Conference on
  Computer and Communications Security}}.
\newblock


\bibitem[\protect\citeauthoryear{Moghimi, Irazoqui, and Eisenbarth}{Moghimi
  et~al\mbox{.}}{2017}]%
        {moghimi2017cachezoom}
\bibfield{author}{\bibinfo{person}{Ahmad Moghimi}, \bibinfo{person}{Gorka
  Irazoqui}, {and} \bibinfo{person}{Thomas Eisenbarth}.}
  \bibinfo{year}{2017}\natexlab{}.
\newblock \showarticletitle{CacheZoom: How SGX amplifies the power of cache
  attacks}. In \bibinfo{booktitle}{\emph{International Conference on
  Cryptographic Hardware and Embedded Systems}}. Springer,
  \bibinfo{pages}{69--90}.
\newblock


\bibitem[\protect\citeauthoryear{Nakamoto}{Nakamoto}{2008}]%
        {nakamoto2008bitcoin}
\bibfield{author}{\bibinfo{person}{Satoshi Nakamoto}.}
  \bibinfo{year}{2008}\natexlab{}.
\newblock \bibinfo{title}{{Bitcoin: A Peer-to-Peer Electronic Cash System}}.
\newblock \bibinfo{howpublished}{\url{http://www.bitcoin.org/bitcoin.pdf}}.
\newblock


\bibitem[\protect\citeauthoryear{O’Keeffe, Muthukumaran, Aublin, Kelbert,
  Priebe, Lind, Zhu, and Pietzuch}{O’Keeffe et~al\mbox{.}}{2018}]%
        {o2018spectre}
\bibfield{author}{\bibinfo{person}{Dan O’Keeffe}, \bibinfo{person}{Divya
  Muthukumaran}, \bibinfo{person}{Pierre-Louis Aublin},
  \bibinfo{person}{Florian Kelbert}, \bibinfo{person}{Christian Priebe},
  \bibinfo{person}{Josh Lind}, \bibinfo{person}{Huanzhou Zhu}, {and}
  \bibinfo{person}{Peter Pietzuch}.} \bibinfo{year}{2018}\natexlab{}.
\newblock \bibinfo{title}{Spectre attack against SGX enclave}.
\newblock
\newblock


\bibitem[\protect\citeauthoryear{{Open Enclave SDK Community}}{{Open Enclave
  SDK Community}}{2018}]%
        {openenclave}
\bibfield{author}{\bibinfo{person}{{Open Enclave SDK Community}}.}
  \bibinfo{year}{2018}\natexlab{}.
\newblock \bibinfo{title}{Open Enclave SDK}.
\newblock \bibinfo{howpublished}{\url{https://openenclave.io/sdk/}}.
\newblock


\bibitem[\protect\citeauthoryear{Pass, Seeman, and Shelat}{Pass
  et~al\mbox{.}}{2017a}]%
        {pass2017analysis}
\bibfield{author}{\bibinfo{person}{Rafael Pass}, \bibinfo{person}{Lior Seeman},
  {and} \bibinfo{person}{Abhi Shelat}.} \bibinfo{year}{2017}\natexlab{a}.
\newblock \showarticletitle{Analysis of the blockchain protocol in asynchronous
  networks}. In \bibinfo{booktitle}{\emph{Annual International Conference on
  the Theory and Applications of Cryptographic Techniques}}. Springer,
  \bibinfo{pages}{643--673}.
\newblock


\bibitem[\protect\citeauthoryear{Pass and Shi}{Pass and Shi}{2016}]%
        {pass2016hybrid}
\bibfield{author}{\bibinfo{person}{Rafael Pass} {and} \bibinfo{person}{Elaine
  Shi}.} \bibinfo{year}{2016}\natexlab{}.
\newblock \bibinfo{title}{{Hybrid Consensus: Efficient Consensus in the
  Permissionless Model}}.
\newblock \bibinfo{howpublished}{ePrint Archive, Report 2016/917}.
\newblock


\bibitem[\protect\citeauthoryear{Pass and Shi}{Pass and Shi}{2018}]%
        {pass2018thunderella}
\bibfield{author}{\bibinfo{person}{Rafael Pass} {and} \bibinfo{person}{Elaine
  Shi}.} \bibinfo{year}{2018}\natexlab{}.
\newblock \showarticletitle{Thunderella: Blockchains with optimistic instant
  confirmation}. In \bibinfo{booktitle}{\emph{International Conference on the
  Theory and Applications of Cryptographic Techniques}}. Springer,
  \bibinfo{pages}{3--33}.
\newblock


\bibitem[\protect\citeauthoryear{Pass, Shi, and Tramer}{Pass
  et~al\mbox{.}}{2017b}]%
        {pass2017formal}
\bibfield{author}{\bibinfo{person}{Rafael Pass}, \bibinfo{person}{Elaine Shi},
  {and} \bibinfo{person}{Florian Tramer}.} \bibinfo{year}{2017}\natexlab{b}.
\newblock \showarticletitle{Formal abstractions for attested execution secure
  processors}. In \bibinfo{booktitle}{\emph{Annual International Conference on
  the Theory and Applications of Cryptographic Techniques}}. Springer,
  \bibinfo{pages}{260--289}.
\newblock


\bibitem[\protect\citeauthoryear{Poon and Buterin}{Poon and Buterin}{2017}]%
        {poon2017plasma}
\bibfield{author}{\bibinfo{person}{Joseph Poon} {and} \bibinfo{person}{Vitalik
  Buterin}.} \bibinfo{year}{2017}\natexlab{}.
\newblock \showarticletitle{Plasma: Scalable autonomous smart contracts}.
\newblock


\bibitem[\protect\citeauthoryear{Poon and Dryja}{Poon and Dryja}{2016}]%
        {poon2016bitcoin}
\bibfield{author}{\bibinfo{person}{Joseph Poon} {and} \bibinfo{person}{Thaddeus
  Dryja}.} \bibinfo{year}{2016}\natexlab{}.
\newblock \bibinfo{title}{{The Bitcoin Lightning Network: Scalable off-chain
  instant payments}}.
\newblock \bibinfo{howpublished}{Technical Report (draft 0.5.9.1).
  \url{https://lightning.network}. Accessed May 2017}.
\newblock


\bibitem[\protect\citeauthoryear{Sasson, Chiesa, Garman, Green, Miers, Tromer,
  and Virza}{Sasson et~al\mbox{.}}{2014}]%
        {sasson2014zerocash}
\bibfield{author}{\bibinfo{person}{Eli~Ben Sasson}, \bibinfo{person}{Alessandro
  Chiesa}, \bibinfo{person}{Christina Garman}, \bibinfo{person}{Matthew Green},
  \bibinfo{person}{Ian Miers}, \bibinfo{person}{Eran Tromer}, {and}
  \bibinfo{person}{Madars Virza}.} \bibinfo{year}{2014}\natexlab{}.
\newblock \showarticletitle{Zerocash: Decentralized anonymous payments from
  bitcoin}. In \bibinfo{booktitle}{\emph{Security and Privacy (SP), 2014 IEEE
  Symposium on}}. IEEE, \bibinfo{pages}{459--474}.
\newblock


\bibitem[\protect\citeauthoryear{Schuster, Costa, Fournet, Gkantsidis, Peinado,
  Mainar-Ruiz, and Russinovich}{Schuster et~al\mbox{.}}{2015}]%
        {schuster2015vc3}
\bibfield{author}{\bibinfo{person}{Felix Schuster}, \bibinfo{person}{Manuel
  Costa}, \bibinfo{person}{C{\'e}dric Fournet}, \bibinfo{person}{Christos
  Gkantsidis}, \bibinfo{person}{Marcus Peinado}, \bibinfo{person}{Gloria
  Mainar-Ruiz}, {and} \bibinfo{person}{Mark Russinovich}.}
  \bibinfo{year}{2015}\natexlab{}.
\newblock \showarticletitle{VC3: Trustworthy data analytics in the cloud using
  SGX}. In \bibinfo{booktitle}{\emph{Security and Privacy (SP), 2015 IEEE
  Symposium on}}. IEEE, \bibinfo{pages}{38--54}.
\newblock


\bibitem[\protect\citeauthoryear{{SECBIT}}{{SECBIT}}{2018}]%
        {secbitfomo3d}
\bibfield{author}{\bibinfo{person}{{SECBIT}}.} \bibinfo{year}{2018}\natexlab{}.
\newblock \bibinfo{title}{{How the winner got Fomo3D prize — A Detailed
  Explanation}}.
\newblock
  \bibinfo{howpublished}{\url{https://medium.com/coinmonks/how-the-winner-got-fomo3d-prize-a-detailed-explanation-b30a69b7813f}.
  Accessed Sep 2018}.
\newblock


\bibitem[\protect\citeauthoryear{Seres, Guly{\'a}s, Nagy, and Burcsi}{Seres
  et~al\mbox{.}}{2019}]%
        {seres2019topological}
\bibfield{author}{\bibinfo{person}{Istv{\'a}n~Andr{\'a}s Seres},
  \bibinfo{person}{L{\'a}szl{\'o} Guly{\'a}s}, \bibinfo{person}{D{\'a}niel~A
  Nagy}, {and} \bibinfo{person}{P{\'e}ter Burcsi}.}
  \bibinfo{year}{2019}\natexlab{}.
\newblock \showarticletitle{Topological Analysis of Bitcoin's Lightning
  Network}.
\newblock \bibinfo{journal}{\emph{Preprint arXiv:1901.04972}}.
\newblock


\bibitem[\protect\citeauthoryear{Shamis, Chamayou, Avanessians, Wintersteiger,
  Ashton, Schuster, Fournet, Maffre, Nayak, Russinovich, Kerner, Castro,
  Moscibroda, Vrousgou, Schwartz, Krishna, Clebsch, and Ohrimenko}{Shamis
  et~al\mbox{.}}{2019}]%
        {shamis2019ccf}
\bibfield{author}{\bibinfo{person}{Alex Shamis}, \bibinfo{person}{Amaury
  Chamayou}, \bibinfo{person}{Christine Avanessians},
  \bibinfo{person}{Christoph~M. Wintersteiger}, \bibinfo{person}{Edward
  Ashton}, \bibinfo{person}{Felix Schuster}, \bibinfo{person}{Cédric Fournet},
  \bibinfo{person}{Julien Maffre}, \bibinfo{person}{Kartik Nayak},
  \bibinfo{person}{Mark Russinovich}, \bibinfo{person}{Matthew Kerner},
  \bibinfo{person}{Miguel Castro}, \bibinfo{person}{Thomas Moscibroda},
  \bibinfo{person}{Olga Vrousgou}, \bibinfo{person}{Roy Schwartz},
  \bibinfo{person}{Sid Krishna}, \bibinfo{person}{Sylvan Clebsch}, {and}
  \bibinfo{person}{Olya Ohrimenko}.} \bibinfo{year}{2019}\natexlab{}.
\newblock \bibinfo{booktitle}{\emph{CCF: A Framework for Building Confidential
  Verifiable Replicated Services}}.
\newblock \bibinfo{type}{{T}echnical {R}eport} MSR-TR-2019-16.
  \bibinfo{institution}{Microsoft}.
\newblock
\urldef\tempurl%
\url{https://www.microsoft.com/en-us/research/publication/ccf-a-framework-for-building-confidential-verifiable-replicated-services/}
\showURL{%
\tempurl}


\bibitem[\protect\citeauthoryear{Shi, Zhang, Pass, Devadas, Song, and Liu}{Shi
  et~al\mbox{.}}{2015}]%
        {shi2015trusted}
\bibfield{author}{\bibinfo{person}{Elaine Shi}, \bibinfo{person}{Fan Zhang},
  \bibinfo{person}{Rafael Pass}, \bibinfo{person}{Srini Devadas},
  \bibinfo{person}{Dawn Song}, {and} \bibinfo{person}{Chang Liu}.}
  \bibinfo{year}{2015}\natexlab{}.
\newblock \showarticletitle{Trusted hardware: Life, the composable universe,
  and everything}.
\newblock \bibinfo{journal}{\emph{manuscript}} (\bibinfo{year}{2015}).
\newblock


\bibitem[\protect\citeauthoryear{Sompolinsky and Zohar}{Sompolinsky and
  Zohar}{2015}]%
        {sompolinsky2015ghost}
\bibfield{author}{\bibinfo{person}{Yonatan Sompolinsky} {and}
  \bibinfo{person}{Aviv Zohar}.} \bibinfo{year}{2015}\natexlab{}.
\newblock \showarticletitle{{Accelerating {Bitcoin's} Transaction Processing.
  Fast Money Grows on Trees, Not Chains}}. In
  \bibinfo{booktitle}{\emph{Financial Cryptography}}. \bibinfo{address}{Puerto
  Rico}.
\newblock


\bibitem[\protect\citeauthoryear{Stepney, Cooper, and Woodcock}{Stepney
  et~al\mbox{.}}{2000}]%
        {stepney2000electronic}
\bibfield{author}{\bibinfo{person}{Susan Stepney}, \bibinfo{person}{David
  Cooper}, {and} \bibinfo{person}{Jim Woodcock}.}
  \bibinfo{year}{2000}\natexlab{}.
\newblock \showarticletitle{An electronic purse: Specification, refinement and
  proof}. \bibinfo{publisher}{Oxford University}.
\newblock


\bibitem[\protect\citeauthoryear{Strackx and Piessens}{Strackx and
  Piessens}{2016}]%
        {Strackx2016Ariadne}
\bibfield{author}{\bibinfo{person}{Raoul Strackx} {and} \bibinfo{person}{Frank
  Piessens}.} \bibinfo{year}{2016}\natexlab{}.
\newblock \showarticletitle{{Ariadne: A Minimal Approach to State Continuity}}.
  In \bibinfo{booktitle}{\emph{25th USENIX Security Symposium (USENIX Security
  16)}}. \bibinfo{publisher}{USENIX Association}, \bibinfo{address}{Austin,
  TX}, \bibinfo{pages}{875--892}.
\newblock
\showISBNx{978-1-931971-32-4}
\urldef\tempurl%
\url{https://www.usenix.org/conference/usenixsecurity16/technical-sessions/presentation/strackx}
\showURL{%
\tempurl}


\bibitem[\protect\citeauthoryear{{superfreek}}{{superfreek}}{2017}]%
        {superfreek2017steemit}
\bibfield{author}{\bibinfo{person}{{superfreek}}.}
  \bibinfo{year}{2017}\natexlab{}.
\newblock \bibinfo{title}{{BTC Spam attack. 200,000 unconfirmed transactions
  halts bitcoin.}}
\newblock
  \bibinfo{howpublished}{\url{https://steemit.com/cryptocurrency/@superfreek/btc-spam-attack-200-000-unconfirmed-transactions-halts-bitcoin.}
  Accessed Feb 2018}.
\newblock


\bibitem[\protect\citeauthoryear{Szabo}{Szabo}{1997}]%
        {szabo1997idea}
\bibfield{author}{\bibinfo{person}{Nick Szabo}.}
  \bibinfo{year}{1997}\natexlab{}.
\newblock \showarticletitle{The idea of smart contracts}.
\newblock \bibinfo{journal}{\emph{Nick Szabo’s Papers and Concise Tutorials}}
   \bibinfo{volume}{6}.
\newblock


\bibitem[\protect\citeauthoryear{{Team Rocket}}{{Team Rocket}}{2018}]%
        {avalanche}
\bibfield{author}{\bibinfo{person}{{Team Rocket}}.}
  \bibinfo{year}{2018}\natexlab{}.
\newblock \bibinfo{title}{Snowflake to Avalanche: A Novel Metastable Consensus
  Protocol Family for Cryptocurrencies}.
\newblock
  \bibinfo{howpublished}{\url{https://ipfs.io/ipfs/QmUy4jh5mGNZvLkjies1RWM4YuvJh5-o2FYopNPVYwrRVGV}}.
\newblock


\bibitem[\protect\citeauthoryear{{The Bitcoin Community}}{{The Bitcoin
  Community}}{2013}]%
        {libsecp256}
\bibfield{author}{\bibinfo{person}{{The Bitcoin Community}}.}
  \bibinfo{year}{2013}\natexlab{}.
\newblock \bibinfo{title}{{libsecp256k1}}.
\newblock
  \bibinfo{howpublished}{\url{https://github.com/bitcoin-core/secp256k1}}.
\newblock


\bibitem[\protect\citeauthoryear{{The Bitcoin Community}}{{The Bitcoin
  Community}}{2016}]%
        {bitcoin-core-0-13-1}
\bibfield{author}{\bibinfo{person}{{The Bitcoin Community}}.}
  \bibinfo{year}{2016}\natexlab{}.
\newblock \bibinfo{title}{{Bitcoin Core version 0.13.1 released}}.
\newblock \bibinfo{howpublished}{\url{https://bitcoin.org/en/release/v0.13.1}.
  Accessed May 2017}.
\newblock


\bibitem[\protect\citeauthoryear{{The Bitcoin community}}{{The Bitcoin
  community}}{2017}]%
        {bitcoin-multisig}
\bibfield{author}{\bibinfo{person}{{The Bitcoin community}}.}
  \bibinfo{year}{2017}\natexlab{}.
\newblock \bibinfo{title}{{M-of-N Multisig, Multisig Output}}.
\newblock
  \bibinfo{howpublished}{\url{https://bitcoin.org/en/glossary/multisig}.
  Accessed May 2017}.
\newblock


\bibitem[\protect\citeauthoryear{{The Ethereum community}}{{The Ethereum
  community}}{2017}]%
        {ethereum2015white}
\bibfield{author}{\bibinfo{person}{{The Ethereum community}}.}
  \bibinfo{year}{2017}\natexlab{}.
\newblock \bibinfo{title}{{Ethereum White Paper}}.
\newblock
  \bibinfo{howpublished}{\url{https://github.com/ethereum/wiki/wiki/White-Paper}.
  Accessed May 2017}.
\newblock


\bibitem[\protect\citeauthoryear{{The Linux-SGX community}}{{The Linux-SGX
  community}}{2016}]%
        {linux-sgx}
\bibfield{author}{\bibinfo{person}{{The Linux-SGX community}}.}
  \bibinfo{year}{2016}\natexlab{}.
\newblock \bibinfo{title}{{Intel(R) Software Guard Extensions for Linux OS}}.
\newblock \bibinfo{howpublished}{\url{https://github.com/intel/linux-sgx}}.
\newblock


\bibitem[\protect\citeauthoryear{{The Raiden Network community}}{{The Raiden
  Network community}}{2017}]%
        {raiden2017source}
\bibfield{author}{\bibinfo{person}{{The Raiden Network community}}.}
  \bibinfo{year}{2017}\natexlab{}.
\newblock \bibinfo{title}{The Raiden Network}.
\newblock \bibinfo{howpublished}{\url{https://raiden.network/}. Accessed
  October~2017}.
\newblock


\bibitem[\protect\citeauthoryear{Tramer, Zhang, Lin, Hubaux, Juels, and
  Shi}{Tramer et~al\mbox{.}}{2016}]%
        {tramersealed}
\bibfield{author}{\bibinfo{person}{Florian Tramer}, \bibinfo{person}{Fan
  Zhang}, \bibinfo{person}{Huang Lin}, \bibinfo{person}{Jean-Pierre Hubaux},
  \bibinfo{person}{Ari Juels}, {and} \bibinfo{person}{Elaine Shi}.}
  \bibinfo{year}{2016}\natexlab{}.
\newblock \bibinfo{title}{{Sealed-Glass Proofs: Using Transparent Enclaves to
  Prove and Sell Knowledge}}.
\newblock \bibinfo{howpublished}{Cryptology ePrint Archive, Report 2016/635}.
\newblock
\newblock
\shownote{\url{http://eprint.iacr.org/2016/635}.}


\bibitem[\protect\citeauthoryear{Tran, Luu, Kang, Bentov, and Saxena}{Tran
  et~al\mbox{.}}{2017}]%
        {tran2017obscuro}
\bibfield{author}{\bibinfo{person}{Muoi Tran}, \bibinfo{person}{Loi Luu},
  \bibinfo{person}{Min~Suk Kang}, \bibinfo{person}{Iddo Bentov}, {and}
  \bibinfo{person}{Prateek Saxena}.} \bibinfo{year}{2017}\natexlab{}.
\newblock \showarticletitle{Obscuro: A Bitcoin Mixer using Trusted Execution
  Environments.}
\newblock \bibinfo{journal}{\emph{IACR Cryptology ePrint Archive}}
  \bibinfo{volume}{2017}, \bibinfo{pages}{974}.
\newblock


\bibitem[\protect\citeauthoryear{Van~Bulck, Minkin, Weisse, Genkin, Kasikci,
  Piessens, Silberstein, Wenisch, Yarom, and Strackx}{Van~Bulck
  et~al\mbox{.}}{2018}]%
        {van2018foreshadow}
\bibfield{author}{\bibinfo{person}{Jo Van~Bulck}, \bibinfo{person}{Marina
  Minkin}, \bibinfo{person}{Ofir Weisse}, \bibinfo{person}{Daniel Genkin},
  \bibinfo{person}{Baris Kasikci}, \bibinfo{person}{Frank Piessens},
  \bibinfo{person}{Mark Silberstein}, \bibinfo{person}{Thomas~F Wenisch},
  \bibinfo{person}{Yuval Yarom}, {and} \bibinfo{person}{Raoul Strackx}.}
  \bibinfo{year}{2018}\natexlab{}.
\newblock \showarticletitle{FORESHADOW: Extracting the Keys to the Intel SGX
  Kingdom with Transient Out-of-Order Execution}. In
  \bibinfo{booktitle}{\emph{27th USENIX Security Symposium (USENIX Security
  18)}}.
\newblock


\bibitem[\protect\citeauthoryear{Van~Renesse and Schneider}{Van~Renesse and
  Schneider}{2004}]%
        {vanrenesse2004chain}
\bibfield{author}{\bibinfo{person}{Robbert Van~Renesse} {and}
  \bibinfo{person}{Fred~B Schneider}.} \bibinfo{year}{2004}\natexlab{}.
\newblock \showarticletitle{{Chain Replication for Supporting High Throughput
  and Availability.}}. In \bibinfo{booktitle}{\emph{6th Symposium on Operating
  Systems Design and Implementation}}, Vol.~\bibinfo{volume}{4}.
  \bibinfo{pages}{91--104}.
\newblock


\bibitem[\protect\citeauthoryear{Vukoli{\'c}}{Vukoli{\'c}}{2015}]%
        {vukolic2015quest}
\bibfield{author}{\bibinfo{person}{Marko Vukoli{\'c}}.}
  \bibinfo{year}{2015}\natexlab{}.
\newblock \showarticletitle{The quest for scalable blockchain fabric:
  {Proof-of-Work} vs. {BFT} replication}. In
  \bibinfo{booktitle}{\emph{International Workshop on Open Problems in Network
  Security}}. Springer, \bibinfo{pages}{112--125}.
\newblock


\bibitem[\protect\citeauthoryear{Wood}{Wood}{2016}]%
        {wood2016ethereum}
\bibfield{author}{\bibinfo{person}{Gavin Wood}.}
  \bibinfo{year}{2016}\natexlab{}.
\newblock \bibinfo{title}{Ethereum: A Secure Decentralised Generalised
  Transaction Ledger ({EIP}-150)}.
\newblock \bibinfo{howpublished}{\url{http://gavwood.com/Paper.pdf}}.
\newblock


\bibitem[\protect\citeauthoryear{Zhang, Cecchetti, Croman, Juels, and
  Shi}{Zhang et~al\mbox{.}}{2016}]%
        {zhang2016town}
\bibfield{author}{\bibinfo{person}{Fan Zhang}, \bibinfo{person}{Ethan
  Cecchetti}, \bibinfo{person}{Kyle Croman}, \bibinfo{person}{Ari Juels}, {and}
  \bibinfo{person}{Elaine Shi}.} \bibinfo{year}{2016}\natexlab{}.
\newblock \showarticletitle{Town crier: An authenticated data feed for smart
  contracts}. In \bibinfo{booktitle}{\emph{Proceedings of the 2016 ACM SIGSAC
  Conference on Computer and Communications Security}}. ACM,
  \bibinfo{pages}{270--282}.
\newblock


\bibitem[\protect\citeauthoryear{Zhang, Daian, Kaptchuk, Bentov, Miers, and
  Juels}{Zhang et~al\mbox{.}}{2017}]%
        {zhangparalysis}
\bibfield{author}{\bibinfo{person}{Fan Zhang}, \bibinfo{person}{Philip Daian},
  \bibinfo{person}{Gabriel Kaptchuk}, \bibinfo{person}{Iddo Bentov},
  \bibinfo{person}{Ian Miers}, {and} \bibinfo{person}{Ari Juels}.}
  \bibinfo{year}{2017}\natexlab{}.
\newblock \showarticletitle{Paralysis Proofs: Secure Dynamic Access Structures
  for Cryptocurrencies and More}.
\newblock


\end{thebibliography}
\clearpage
\appendix
\makeatletter
\newcommand{\removelatexerror}{\let\@latex@error\@gobble}
\makeatother

\setlength{\abovedisplayskip}{3pt}
\setlength{\belowdisplayskip}{3pt}

\newcommand{\Llist}{\textit{L}\xspace}
\newcommand{\event}[1]{\ensuremath{\textit{ev}_{#1}}\xspace}
\newcommand{\ret}[1]{\ensuremath{\textit{ret}_{#1}}\xspace}
\newcommand{\op}[1]{\ensuremath{\textit{op}_{#1}}\xspace}
\newcommand{\LuT}[2][]{\ensuremath{\Llist_{#2}\left({\ifthenelse{\isempty{#1}}{u}{#1}}\right)}\xspace}

\newcommand{\paymentsSumU}[1][]{\ensuremath{\textit{paid}_{\ifthenelse{\isempty{#1}}{t}{#1}}(u)}\xspace}
\newcommand{\receiveSumU}[1][]{\ensuremath{\textit{rcvd}_{\ifthenelse{\isempty{#1}}{t}{#1}}(u)}\xspace}
\newcommand{\balanceU}[2][]{\ensuremath{\textit{perceivedBal}_{\ifthenelse{\isempty{#1}}{t}{#1}}(#2)}\xspace}

\newcommand{\getLedgerBalance}{\ensuremath{\textsf{getLedgerBalance}}\xspace}
\newcommand{\acceptLedgerPayment}{\ensuremath{\textsf{acceptLedgerPayment}}\xspace}
\newcommand{\addDeposit}{\ensuremath{\textsf{addDeposit}}\xspace}
\newcommand{\removeDeposit}{\ensuremath{\textsf{removeDeposit}}\xspace}
\newcommand{\openChannel}{\ensuremath{\textsf{openChannel}}\xspace}
\newcommand{\acceptChannelOpen}{\ensuremath{\textsf{acceptChannelOpen}}\xspace}
\newcommand{\associateDeposit}{\ensuremath{\textsf{associateDeposit}}\xspace}
\newcommand{\acceptAssociateDeposit}{\ensuremath{\textsf{acceptAssociateDeposit}}\xspace}
\newcommand{\dissociateDeposit}{\ensuremath{\textsf{dissociateDeposit}}\xspace}
\newcommand{\acceptDissociate}{\ensuremath{\textsf{acceptDissociate}}\xspace}
\newcommand{\ackDissociate}{\ensuremath{\textsf{ackDissociate}}\xspace}
\newcommand{\pay}{\ensuremath{\textsf{pay}}\xspace}
\newcommand{\receivePayment}{\ensuremath{\textsf{receivePayment}}\xspace}
\newcommand{\settleChannel}{\ensuremath{\textsf{settleChannel}}\xspace}

\newcommand{\Clist}{\textit{C}\xspace}
\newcommand{\Pendlist}{\textit{PendingPayments}\xspace}
\newcommand{\PendSettleDeplist}{\textit{PendingSetlledDeposit}\xspace}
\newcommand{\PendLlist}{\textit{PendingLedgerPayments}\xspace}
\newcommand{\lockedList}{\textit{LockedChannels}\xspace}
\newcommand{\PendDlist}{\textit{PendingDeposits}\xspace}
\newcommand{\PendClist}{\textit{PendingChannels}\xspace}
\newcommand{\Dlist}{\ensuremath{ \textit{D} }\xspace}

\newcommand{\Cuv}{\ensuremath{c_{\textit{id}}}\xspace}
\newcommand{\depositId}{\ensuremath{\textit{deposit\_id}}\xspace}
\newcommand{\pendingPaymentId}{\ensuremath{\textit{pending\_payment\_id}}\xspace}
\newcommand{\ledgerId}{\ensuremath{\textit{ledger\_payment\_id}}\xspace}

\newcommand{\amountU}{\ensuremath{{\textit{amountU}}}\xspace}
\newcommand{\amountV}{\ensuremath{{\textit{amountV}}}\xspace}
\newcommand{\isSymmetric}{\ensuremath{{\textit{isSymmetric}}}\xspace}

\newcommand{\chainBalance}[2]{\ensuremath{\textit{perceivedBal}_{#1}^{\textit{chain}}(#2)}\xspace}

\newcommand{\innerBalanceU}[1]{\ensuremath{\textit{stateBalance}_{#1}(u)\xspace}}
\newcommand{\TeeSK}{\ensuremath{\textit{SK}_{\textit{TEE}}}\xspace}
\newcommand{\TeePK}{\ensuremath{\textit{PK}_{\textit{TEE}}}\xspace}
\newcommand{\aliceSK}{\ensuremath{\textit{SK}^{A}_{\textit{Ledger}}}\xspace}
\newcommand{\alicePK}{\ensuremath{\textit{PK}^{A}_{\textit{Ledger}}}\xspace}
\newcommand{\bobSK}{\ensuremath{\textit{SK}^{B}_{\textit{Ledger}}}\xspace}
\newcommand{\bobPK}{\ensuremath{\textit{PK}^{B}_{\textit{Ledger}}}\xspace}
\newcommand{\aliceTeeLedgerSK}{\ensuremath{\textit{SK}^{A*}_{\textit{Ledger}}}\xspace}
\newcommand{\aliceTeeLedgerPK}{\ensuremath{\textit{PK}^{A*}_{\textit{Ledger}}}\xspace}
\newcommand{\bobTeeLedgerSK}{\ensuremath{\textit{SK}^{B*}_{\textit{Ledger}}}\xspace}
\newcommand{\aliceTeeSK}{\ensuremath{\textit{SK}^{A*}_{\textit{TEE}}}\xspace}
\newcommand{\bobTeeSK}{\ensuremath{\textit{SK}^{B*}_{\textit{TEE}}}\xspace}
\newcommand{\aliceTeeCommSK}{\ensuremath{\textit{SK}^{A*}_{\textit{Comm}}}\xspace}
\newcommand{\aliceTeeCommPK}{\ensuremath{\textit{PK}^{A*}_{\textit{Comm}}}\xspace}
\newcommand{\bobTeeCommSK}{\ensuremath{ \textit{SK}^{B*}_{\textit{Comm}}}\xspace}
\newcommand{\bobTeeCommPK}{\ensuremath{\textit{PK}^{B*}_{\textit{Comm}}}\xspace}

\newcommand{\prog}{\ensuremath{\textit{prog}}\xspace}
\newcommand{\signedMsg}{\ensuremath{\overline{\sigma}}\xspace}
\newcommand{\aliceTEE}{\ensuremath{\textit{TEE}_A}\xspace}
\newcommand{\bobTEE}{\ensuremath{\textit{TEE}_B}\xspace}
\newcommand{\Prot}{\ensuremath{\textit{Prot}_{\textit{TEEChain}}}\xspace}

\newcommand{\Blist}{\textit{B}\xspace}
\newcommand{\success}{\ensuremath{\textit{success}}\xspace}
\newcommand{\fail}{\ensuremath{\textit{fail}}\xspace}

\newcommand{\paymentsU}{\ensuremath{\textit{payments}_{t}(u)}\xspace}

\newcommand{\receivePaymentsU}[1][]{%
	\ifthenelse{\isempty{#1}}%
	{\ensuremath{\textit{receivedPayments}_{t}(u)}\xspace}
	{\ensuremath{\textit{receivedPayments}_{#1}(u)}\xspace}
}

\newcommand{\depositsChannel}[1][]{%
	\ifthenelse{\isempty{#1}}%
	{\ensuremath{\textit{deposits}_{t}(u)}\xspace}
	{\ensuremath{\textit{deposits}_{#1}(u)}\xspace}
}
\newcommand{\depositsChannelSum}[1][]{%
	\ifthenelse{\isempty{#1}}%
	{\ensuremath{\textit{depositsSum}_{t}(\Cuv)}\xspace}
	{\ensuremath{\textit{depositsSum}_{#1}(\Cuv)}\xspace}
}
\newcommand{\channelCapacity}[1][]{%
	\ifthenelse{\isempty{#1}}%
	{\ensuremath{\textit{channelCapacity}_{t}(\Cuv)}\xspace}
	{\ensuremath{\textit{channelCapacity}_{#1}(\Cuv)}\xspace}
}

\newcommand{\RET}[1]{\ensuremath{\textit{RET}_{#1}}\xspace}
\newcommand{\OPS}[1]{\ensuremath{\textit{OPS}_{#1}}\xspace}
\newcommand{\seq}[1]{\ensuremath{\textit{seq}_{#1}}\xspace}

\newcommand{\depositEntry}{\ensuremath{\textit{depositEntry}}\xspace}
\newcommand{\depositCounter}{\ensuremath{\textit{deposit\_counter}}\xspace}
\newcommand{\algRef}[2]{(Alg.~#1, Line~#2)}

\newcommand{\transferCommand}{\ensuremath{\textsf{transfer}}\xspace}
\newcommand{\TeeUserU}{\ensuremath{\mathcal{U}}\xspace}
\newcommand{\TeeInp}{\ensuremath{\textit{inp}}\xspace}
\newcommand{\TeeOutput}{\ensuremath{\textit{outp}}\xspace}
\newcommand{\TeeMem}{\ensuremath{\textit{mem}}\xspace}

\newcommand{\remark}[1]{\textcolor{red}{\textbf{#1}}}
\newcommand{\codeComment}[1]{{//#1}}
\newcommand{\sidenote}[1]{\marginnote{\textcolor{red}{#1}}}
\newcommand{\execution}[2]{\ensuremath{\sigma_{#1}(#2)}\xspace}
\section{\sys Protocol Correctness}
\label{app:formalProof}

We first intuitively define the security guarantees a payment network should provide~(\Cref{app:formalProof:securityInterest}) and describe the framework we use to construct our proofs~(\Cref{app:formalProof:UC} and~\Cref{app:formalProof:indistinguishability}). We then formally prove that \sys achieves the desired security properties for both channels~(\Cref{app:formalProof:idealWorldBalanaceCorrectness}) and multi-hop payments~(\Cref{app:formalProof:chainPayments}).

\subsection{Security Guarantees}
\label{app:formalProof:securityInterest}

\sys protects the funds of all participants in the network; despite what others may do, funds cannot be stolen or double spent.
We define \emph{balance security} to express intuitively: at any point during an execution, any party can unilaterally reclaim the channels' balances and unassociated deposits on the underlying blockchain, correctly reflecting all payments. Participants must be able to do so despite other's actions.

To formalize balance security, we first define an execution trace $\sigma$ as a time-ordered series of events, where each event represents an operation and its return value; $\sigma_t$ denotes the prefix of $\sigma$ until time~$t$.
For execution trace~$\sigma$, user~$u$, and time~$t$, denote by 
\LuT{t} the balance of~$u$ in~$\sigma$ at time~$t$ on the blockchain, \ie, the sum of all the funds~$u$ has access to on the blockchain.
In particular, \LuT{0} is the initial balance of~$u$ in~$\sigma$.
Denote by \paymentsSumU[t] the accumulated sum of all payments made by $u$ in~$\sigma_t$ using \sys channels; and by \receiveSumU[t] the accumulated sum of payments received by $u$ in~$\sigma_t$.
We thus define the \emph{perceived balance} of~$u$ in $\sigma$ at time~$t$ as $\balanceU[t]{u} = \LuT{0} + \receiveSumU[t] - \paymentsSumU[t]$. 
We therefore define \emph{balance security} as:

\begin{definition} [Balance security] \label{def:balanceCorrectnessReal}
	A protocol satisfies \emph{balance security} if for any prefix~$\sigma_t$, any well-behaved user~$u$ can unilaterally perform a series of operations, possibly interleaved with operations of other users, that will complete at finite time~$t' \geq t$, after which at any time~$t'' \ge t': {\LuT{t''} \ge \balanceU[t]{u}}$.
\end{definition}

A well-behaved user, is one that faithfully follows the \sys protocol, stores private keys used to communicate with the ledger securly, and does not leak them to other users in the system.
\subsection{Ideal Functionalities and Simulation Based Security}
\label{app:formalProof:UC}

\mypar{Simulation Based Proofs in the Universal-Composability Framework}
Our formal proof for \sys's channel protocol is based on the simulation based security framework Universal Composability~(UC)~\cite{canetti2001universally}, which is a general purpose framework for modeling and constructing secure protocols.
The model is based on a system of interactive Turing machines (ITMs), which are described based upon how they behave when receiving messages from other ITMs.

The UC framework includes several ITMs: an environment~\env, which represents the external world.
The environment chooses the inputs given to each party in the system, and observes the outputs.
The framework also includes honest parties which follow the protocol, and a byzantine adversary~\adversary, that can corrupt users at will.
Our model deals with an adaptive adversary, i.e., once a user is corrupted by~\adversary, it cannot be uncorrupted again until the end of the execution.
We only define the security guarantees of honest users who follow the \sys protocol.

The model also includes ideal functionalities, which act as idealized third parties, and implement some target specifications.
We describe the behavior of such ideal functionalities based on an exposed API.
This API exhibits the desired properties of the protocol.
Ideal functionalities are also used in the real-world in order to represent network primitives and setup assumptions, and we use them to also model TEEs and the blockchain.

The proof that some protocol captures a specific property in the UC framework consists of the following stages:
\begin{enumerate}
	\item Showing that any real-world execution is indistinguishable to the external environment~\env from an equivalent ideal-world execution.
	The proof is based on describing a simulator~\simulator in the ideal-world, which translates every adversary~\adversary in the real-world into a simulated attacker, which is indistinguishable to the environment.
	We do so in hybrid steps, and in each step we prove indistinguishability to the environment from the previous hybrid step.
	
	\item Proving that the desired property of the protocol is maintained by the ideal functionality in the ideal-world.
\end{enumerate}

Since the real-world and ideal-world are indistinguishable, if an attacker breaks a security guarantee in the real-world, then it will also break in the ideal-one.
Thus, to prove a security guarantee holds in the real-world, it is sufficient to show it holds in the ideal one.

\mypar{Real-World Execution} The real-world \sys channel protocol is identical to the one described in~\Cref{sec:channelprotocol}, except that we model the TEE and the ledger as ideal functionalities in the UC framework.
This model is based on previous works that formalized the model, such as~\cite{shi2015trusted,schuster2015vc3, barbosa2016foundations,zhang2016town,malavolta2017concurrency}.

\mypar{Ideal functionality \FTEE} \FTEE is an ideal functionality that models a TEE.
This model is based on a version of the ideal functionality of Shi et al.~\cite{shi2015trusted}.
User \TeeUserU is a \sys user equipped with a TEE; \prog is some program to run in an enclave; \TeeInp, \TeeOutput, \TeeMem are the \prog input, output and memory tape respectively.
We further let \textit{sid} as the session identifier and \textit{id} is the enclave identifier.
$\lambda$ is a security parameter, and $\Sigma, \textsf{KGen}$ are a signature scheme and its key generation algorithm respectively.
Lastly, let \TeePK, \TeeSK be a TEE's public-key and secret-key generated on the TEE's initialization.

An enclave is an isolated software container loaded with some program, in our case the \sys protocol. 
\FTEE abstracts an enclave as a third party trusted for execution, confidentiality and authenticity, with respect to any user that is part of the system, and in particular \TeeUserU that owns the TEE.

We now describe the API of \FTEE.
When initialized, \FTEE generates a public secret key pair, and publicizes the public key, \ie, other users or TEEs can verify messages signed by other \FTEE.
This corresponds to an attestation service provided in the real-world execution.
In addition, \FTEE includes two calls, one for the installation of a new program \prog, and the other one is a resume call for \prog with some input.

The full API of \FTEE is:

\begin{figure}[H]
\removelatexerror
\begin{algorithm}[H] \label{alg:FTEE}
	\caption{\FTEE's API}
	\KwIn{$\left( \TeePK, \TeeSK \right) \gets \textit{KGen}\left(1^{\lambda} \right)$} 
	\KwOn({\cmdReceive[\scriptsize]$(\textit{sid, idx}, \textsf{install}, \textit{inp})$ from $\TeeUserU_i$:})
	{
		\eIf{$\left( \textit{idx}, \TeeUserU_i, \_, \_ \right)$ is not stored}
		{
			store $\left( \textit{idx}, \TeeUserU_i, \textit{prog, inp} \right)$	
		}
		{
			return
		}
	} 
	
	\BlankLine
	
	\KwOn({\cmdReceive[\scriptsize]$ (\textit{sid}, \textit{idx}, \textsf{resume}, \textit{inp})$ from $\TeeUserU_i$:})
	{
		\eIf{$\left( \textit{idx}, \TeeUserU_i, \textit{prog, mem} \right)$ is stored}
		{
			$\textit{outp}, \overline{\textit{mem}} \gets \textit{prog} \left( \textit{inp, mem} \right)$ \\
			store $\left( \textit{idx}, \TeeUserU_i, \textit{prog}, \overline{\textit{mem}} \right)$ \\
			$\signedMsg \gets \Sigma\textit{.Sign} \left( \TeeSK, \left( \textit{prog, outp} \right) \right)$ \\
			return $\left( \textit{sid, idx, outp}, \signedMsg \right)$ to $\TeeUserU_i$
		}
		{
			return $\bot$
		}
	
	}
\end{algorithm}
\end{figure}

\FTEE is a \textit{setup assumption}~\cite{canetti2001universally} that models the functionality offered by real-world TEEs, and in particular Intel's SGX.
Due to this, \FTEE uses a "real" signature scheme $\Sigma$, rather than an ideal version of it~\cite{canetti2004universally}.
We assume that the signature scheme $\Sigma$ used by \FTEE is unforgeable under chosen attacks, and that all parties $\TeeUserU_i$ know $\TeePK$ of all the other TEEs at the start of the execution.
We note that in order to deal with real-world SGX vulnerabilities, we use different methods to mitigate such attacks~(see~\cref{sec:security:overview}). 

Each user \TeeUserU is identified by unique id (simply denoted by~\TeeUserU, or Alice and Bob), and a session id \textit{sid}, obtained from the environment \env~\cite{canetti2004universally}.
Parties send messages to each other via authenticated channels, and the adversary \adversary observes all messages sent over the network.
We use the standard "delayed messages" terminology~\cite{canetti2004universally}: when a message \textit{msg} is sent between users, \textit{msg} is first sent to \adversary (the simulator \simulator), and forwarded to the intended user \TeeUserU, after the acknowledgment by \simulator.
Furthermore, \simulator can delay messages between parties (i.e., asynchronous network), but eventually delivers them.

Note that \FTEE is a local ideal functionality: a user \TeeUserU can talk to its \FTEE without the messages being leaked to \adversary.
However, when \adversary corrupts a user it gets full access to the user's software and hardware.
\adversary can fully observe all the calls made to \FTEE, but cannot tamper with the hardware's confidentiality, integrity and authenticity guarantees.

\newcommand{\publicKey}[1]{\ensuremath{\textit{PK}_{#1}}\xspace}
\newcommand{\secretKey}[1]{\ensuremath{\textit{SK}_{#1}}\xspace}
\newcommand{\BuT}[2]{\ensuremath{\textit{B}_{#1}(#2)}\xspace}

\mypar{Ideal functionality \Fledger} An ideal functionality that represents the ledger, \ie, the underlying blockchain. It maintains a single list, \Blist, that represents the balances of all the users in the system.
Each entry has a public key \publicKey{} and any instruction to edit an entry requires the message to be signed by the corresponding private key, \secretKey{}.
Entries in \Blist are in the form of $(\publicKey{}, \amount)$, where \amount is the current balance associated with the public key.

At the beginning of the execution, \ie at $t=0$, \Blist contains initial entries with balances for the different public keys in the system.
Also, some of the users also have the corresponding secret keys matching to the above public keys.
\Fledger exposes two types of API calls, one is designed to transfer money from one entry in \Blist to another, and one is designed to query \Fledger on existing entries.

\begin{figure}[H]
\removelatexerror
\begin{algorithm}[H] \label{alg:Fledger}
	\caption{\Fledger's API}
	\KwIn{The initial balances of all the public keys are stored in \Blist} 
	\KwOn({\cmdReceive[\scriptsize]$(\transferCommand, \publicKey{u}, \amount, \signedMsg, \publicKey{v} )$ from~$u$:})
	{
		\eIf{$\Sigma.\textit{verify}(\publicKey{u}, \signedMsg) \wedge \exists \textit{bal}: \Blist(\publicKey{u}) = \textit{bal} \wedge \textit{bal} \ge \amount$}
		{
			$\Blist	(\publicKey{v}) \gets \Blist(\publicKey{v}) + \amount$ \\
			$\Blist(\publicKey{u}) \gets \Blist(\publicKey{u}) - \amount$
		}
		{
			return $\bot$
		}
	} 

	\BlankLine

	\KwOn({\cmdReceive[\scriptsize]$(\getLedgerBalance, \publicKey{u})$})
	{
		\eIf{$\exists \textit{bal}: \Blist(\publicKey{u}) = \textit{bal})$}
		{
			return $\textit{bal}$
		}
		{
			return $\bot$
		}	
	}
\end{algorithm}
\end{figure}

We model the real-world execution of \sys in the $(\FTEE, \Fledger)$ hybrid-model.

We are now ready to formally define balance security and perceived balance of any user using \sys.

\mypar{Ideal-World Model} We define the ideal functionality~\Fchannel which captures \sys's protocol. 
\Fchannel is defined by an internal state of its internal variables, and by an API, which users can invoke.
An invocation of an API call \textit{call} by user~$u$ with arguments ${(\textit{arg1}, \textit{arg2}, \dots)}$ is denoted by: ${\textit{call}_u(\textit{arg1}, \textit{arg2}, \dots)}$.

The internal variables \Fteechain maintains are described in~\cref{fig:Fteechain-variables}, a description of each call is described in~\cref{fig:Fteechain-calls}, and the complete algorithm is described in~\cref{alg:Fteechain}.

\begin{figure*}[h!] \small
\begin{framed}
\begin{multicols}{2}
\Fchannel maintains the following internal variables:
\begin{description}
	\item [\Llist] 	Set of funds for the users using \Fchannel. \Llist captures the same functionality as \Fledger in the real-world.
	
	Entries are in the form of $(u, \amount)$, where~\amount is~$u$'s amount on the ledger.
	
	\item [\Clist] Set of all open channels. 
	We denote as~$u$ and~$v$ as two users who have an open channel between them, \amountU and \amountV as $u$'s and $v$'s balances on the channel, \Cuv as a unique channel identifier agreed between $u$ and $v$ prior to the opening, and \isSymmetric as a boolean variable indicating the two different states a channel between two users can be in, i.e., whether the channel is open for both users or just the channel opening initiator.
	
	Entries in \Clist are in the form of \\
	$(\Cuv, u, v, \textit{amountU}, \textit{amountV}, \isSymmetric)$.
	
	\item [\Dlist] Set of deposits associated with an open channel. 
	We denote \depositId as a unique identifier given to each new deposit by \Fchannel, \amount as the amount of the deposit, $u$ as the user who added the deposit, \Cuv as the channel identifier the deposit is associated to (or~$\bot$ if the deposit is not associated with any channel), and \isSymmetric as a boolean variable indicating if the deposit is associated to a channel, and both users are aware of this association.
	
	Entries in \Dlist are in the form of \\
	$(\depositId, \amount, u, \Cuv, \isSymmetric)$.
	
	\item [\PendDlist] This is a set of deposits in the process of dissociation from a channel, \ie, a user~$u$ who initiated a deposit dissociation and the dissociation is not approved yet by the other party of the channel.
	
	Entries in \PendDlist: are in the format of $(\depositId, u, \isSymmetric)$. 
	
	\item [\Pendlist] Set of pending payments, \ie, if user~$u$ sent a payment to~$v$, and~$v$ did not accept the payment yet.
	
	Entries in \Pendlist: are in the form of $(\Cuv, v, \amount)$.
	\Cuv is the channel id where the payment with amount~\amount is pending.
	
	\item [\PendClist] Set of channels in the process of being settled. In order to fully settle a channel both of its users need to settle it separately. 
	If only one of the users settled the channel and other user did not, then this set will reflect it.
	
	Entries in \PendClist: are in the form of $(\Cuv, v)$, were \Cuv is the channel identifier and~$v$ is the party that did not settle the channel yet.

	\item [\PendLlist] Set of pending payments waiting to be added to the ledger \Llist.
	
	Entries in \PendLlist are in the form of $(\ledgerId, u, \amount)$, where \ledgerId is a unique id of the payment given by~\Fchannel, $u$ is the recipient of the payment of amount~\amount. 
\end{description}
\end{multicols}
\vspace{-1em}
\end{framed}
\vspace{-1em}
\caption{\Fchannel's Internal Variables}
\label{fig:Fteechain-variables}
\end{figure*}

\begin{figure*}[h!] \small
\begin{framed}
\begin{multicols}{2}
The following calls are for user interaction with the ledger:
\begin{description}
	\item [\getLedgerBalance] Upon receiving $\getLedgerBalance_v(u)$ from any user~$v$ in the system, \Fchannel returns the current entry in $\Llist$ indicating $u$'s balance on the ledger.

	\item [\acceptLedgerPayment] Upon receiving this call from a user~$u$ in the system, \Fchannel checks if~$u$ has a pending payment waiting to be reflected on the ledger \Llist.
	If so, \Fchannel changes \Llist to reflect the payment.

\end{description}

The following calls capture the unilateral deposit handling of users:

\begin{description}
	
	\item[\addDeposit] When receiving this call \Fchannel adds a new deposit for user~$u$.
	
	\item[\removeDeposit] User~$u$ can call this function to remove an unassociated deposit from the system and add the deposit's amount back on the ledger.

\end{description}

The last set of calls captures the logic of user interaction with other users and handle channels in the system:
\begin{description}
	\item[\openChannel] $u$ invokes this call in order to initiate a channel opening between her and another user in the system.

	\item[\acceptChannelOpen] $v$ can invoke this call to complete the channel opening process.
	This call can be invoked by the $v$ after \openChannel is invoked.

	\item[\associateDeposit] $u$ can invoke this call in order to start the process of associating a deposit with a specific channel.

	\item[\acceptAssociateDeposit] $v$ invokes this call to complete the process of associating the deposit \depositId.
	In the real-world execution~$v$ needs to make sure at this stage that the deposit is on the ledger.

	\item[\dissociateDeposit] Invoked by~$u$ to start dissociating deposit \depositId from a channel it is associated to.

	\item[\acceptDissociate] The second phase of dissociating a deposit from a channel, indicating that the other party in the channel accepted the dissociation.

	\item[\ackDissociate] Completes the dissociation of a deposit from a channel.
	
	After the call ends successfully the deposit can be removed or associated again with another channel by~$u$.

	\item[\pay] User~$u$ invokes this function to pay \amount on channel~\Cuv to user~$v$.
	When the call ends, \Fchannel returns to~$u$ a payment id~\pendingPaymentId.
	
	\item[\receivePayment] $v$ invokes this function to accept a payment with payment id \pendingPaymentId made by another user on a channel.
	 After the call ends, $v$'s balance on the channel is increased by~\amount.

	\item[\settleChannel] $u$ invokes this call to settle channel~\Cuv, \ie, receive his current balance from the channel on the ledger.
	When invoked, \Fchannel generates a pending payment \pendingPaymentId which reflects~$u$'s balance on the channel and returns it to~$u$.
	
\end{description}
\end{multicols}
\vspace{-1em}
\end{framed}
\vspace{-1em}
\caption{\Fchannel's Calls}
\label{fig:Fteechain-calls}
\end{figure*}

\begin{algorithm*}[ht!] \scriptsize
	\caption{\Fteechain's API} 
	\label{alg:Fteechain}
	\label{alg:getLedgerBalance}
	\label{alg:acceptLedgerPayment}
	\label{alg:addDeposit}
	\label{alg:removeDeposit}
	\label{alg:openChannel}
	\label{alg:acceptChannelOpen}
	\label{alg:associateDeposit}
	\label{alg:acceptDepositAssociate}
	\label{alg:dissociateDeposit}
	\label{alg:acceptDissociate}
	\label{alg:ackDissociate}
	\label{alg:pay}
	\label{alg:receivePayment}
	\label{alg:settleChannel}
	\SetAlgoNoEnd
	\DontPrintSemicolon 
	\SetNoFillComment
	\SetKw{assert}{assert}
	\SetInd{0.4em}{0.4em}
\begin{multicols}{3}

$\forall u: \Llist(u) \gets \bot$ \\
$\forall \Cuv: \Clist(\Cuv) \gets \bot$ \\
$\forall \depositId: \Dlist(\depositId) \gets \bot$ \\
$\forall \depositId: \PendDlist(\depositId) \gets \bot$ \\
$\forall \pendingPaymentId: \Pendlist(\pendingPaymentId) \gets \bot$ \\
$\forall \Cuv: \PendClist(\Cuv) \gets \bot$ \\
$\forall \ledgerId: \PendLlist(\ledgerId) \gets \bot$ \\
		
\KwOn(${\getLedgerBalance_v(u)}$ \commentx{The Ledger is public, any user can get the balance of any other user in the system}) {
\eIf{$\Llist(u) \ne \bot$}
{
	return $\left( \success, \Llist(u) \right)$
} {
	return $\left( \fail \right)$
}
}

\BlankLine
	
\KwOn(${\acceptLedgerPayment_u(\ledgerId)}$) {
\eIf{$\exists \amount: \PendLlist(\ledgerId) = (u, amount)$}
{
	$\Llist(u) \gets \Llist(u) + \amount$ \label{alg:acceptLedgerPayment.addAmout}\\
	$\PendLlist(\ledgerId) \gets \bot$ \\ 
	return $\left( \success \right)$
} {
	return $\left( \fail \right)$
} 
}

\BlankLine

\KwOn(${\addDeposit_u(amount)}$) {
\eIf{$\Llist(u) \geq amount$ \commentx{If $u$ has enough money on the ledger to create the deposit} \label{alg:addDeposit.ifStatement}\\}
{
	$\depositId \gets textit \depositCounter$ \label{alg:addDeposit.deposit_id}\\
	$\depositCounter \gets \depositCounter + 1$ \\
	$\Dlist(\depositId) \gets (\textit{amount, u, } \bot, \bot)$ \label{alg:addDeposit.depositEntry}\\
	$\Llist(u) \gets \Llist(u)-\amount$ \label{alg:addDeposit.deductLedger}\\
	return $\left( \textit{success, \depositId} \right)$
} {
	return $\left( \fail \right)$
} 
}

\BlankLine

\KwOn(${\removeDeposit_u(\depositId)}$) {
\eIf{$\exists \amount, \isSymmetric: \Dlist(\depositId) = \left( \textit{amount, u, } \bot, \isSymmetric \right)$ \commentx{deposit entry is not associated with any channel} \label{alg:removeDeposit.ifDepositStatement} \linebreak}
{
	$\PendLlist(\ledgerId) \gets (\textit{u,amount})$ \\
	$\ledgerId \gets \ledgerId + 1$ \\
	$\Dlist(\depositId) \gets \bot$ \label{alg:removeDeposit.removeDepositEntry}\\
	return $\left( \textit{success, \ledgerId} \right)$ \label{alg:removeDeposit.returnSuccess}
} {
	return $\left( \fail \right)$
} 
}

\BlankLine

\KwOn(${\openChannel_u(\Cuv, v)}$) {
\eIf{$\Clist(\Cuv) = \bot$ \commentx{there is not a channel entry with \Cuv} \linebreak }
{
	$\Clist(\Cuv) \gets \left(u, v, 0, 0, \bot \right)$ \commentx{initialize an entry for the channel with capacity 0 for both sides}\\
	return $\left( \success \right)$
} {
	return $\left( \fail \right)$
} 
}

\BlankLine

\KwOn(${\acceptChannelOpen_v(\Cuv)}$) {
\eIf{$\exists \amountU, \amountV: \Clist(\Cuv) = \left( u, v, \amountU, \amountV, \bot \right)$}
{
	$\Clist(\Cuv) \gets \left(u, v, \amountU, \amountV, \top \right)$ \\
	return $\left( \success \right)$
} {
	return $\left( \fail \right)$
} 
}

\BlankLine

\KwOn(${\associateDeposit_u(\Cuv, \depositId)}$) {
\eIf{$\exists \amount: \Dlist(\depositId) = \left( \textit{amount, u, } \bot , \bot  \right) \wedge \exists x,y, \textit{amountX, amountY, isSymmetric}: \Clist(\Cuv)= \left( x,y, \amountU, \amountV, \isSymmetric \right) \wedge (x = u \vee y = u)$  \label{alg:associateDeposit.ifStatement}}
{
	\If{$y = u \wedge \isSymmetric = \bot$ \commentx{the channel is not accepted by both parties}}
	{
		return $\left( \fail \right)$\\
	}	
	$\Dlist(\depositId) \gets \left( \textit{amount, u, \Cuv, } \bot  \right)$ \commentx{assign the deposit to channel \Cuv} \label{alg:associateDeposit.depositEntry}		
	\eIf{$\textit{x = u}$}
	{
		$\Clist(\Cuv) \gets \left( \textit{x, y, amountX + amount, amountY, isSymmetric} \right)$\label{alg:associateDeposit.addAmountU}
	}
	{
		$\Clist(\Cuv) \gets \left( \textit{x, y, amountX, amountY + amount, isSymmetric} \right) $
	}
	return $\left( \success \right)$	} {
	return $\left( \fail \right)$
} 
}

\BlankLine

\KwOn(${\acceptAssociateDeposit_v(\Cuv, \depositId)}$) {
\eIf{$\exists \amount: \Dlist(\depositId) = \left( \textit{amount, u, \Cuv, } \bot  \right)$ \label{alg:acceptDepositAssociate.ifStatement}}
{
	$\Dlist(\depositId) \gets \left( \textit{amount, u, \Cuv, } \top  \right)$ \label{alg:acceptDepositAssociate.depositEntry}\\
	return $\left( \success \right)$
} {
	return $\left( \fail \right)$
} 
}

\BlankLine

\KwOn(${\dissociateDeposit_u(\depositId)}$) {
\eIf{$\exists \amount, \Cuv, \isSymmetric: \Dlist(\depositId) = \left( \textit{amount, u, \Cuv}, \isSymmetric  \right) \wedge \exists v, \amountU, \amountV, \isSymmetric: \Clist(\Cuv) = \left( \Cuv, u, v, \amountU, \amountV, \isSymmetric \right) \wedge \amount \leq \amountU$ \commentx{The deposit is accepted by both sides, and $u$ has enough capacity in the channel to dissociate} \label{alg:dissociateDeposit.ifStatement}\\} 
{
	$\Clist(\Cuv) \gets \left(  \Cuv, u, v, \amountU - \amount, \amountV, \isSymmetric \right) $ \commentx{deduct the deposit's amount from $u$'s capacity in the channel}\\
	$\PendDlist(\depositId) \gets \left(u, \bot \right)$ \commentx{entry indicating \depositId is about to be dissociated} \label{alg:dissociateDeposit.pendListEntry}\\ 
	return $\left( \success \right)$
}{
	return $\left( \fail \right)$
} 
}

\BlankLine

\KwOn(${\acceptDissociate_v(\depositId)}$) {
\eIf{$\exists u \textit{ s.t. }\PendDlist(\depositId) = \left(u, \bot \right)$ \label{alg:acceptDissociate.ifStatement}} 
{
	$\PendDlist(\depositId) \gets (u, \top)$  \label{alg:acceptDissociate.pendDListEntry}\\ 
	return $\left( \success \right)$
} {
	return $\left( \fail \right)$
} 
}

\BlankLine

\KwOn(${\ackDissociate_u(\depositId)}$) {
\eIf{$\PendDlist(\depositId) = \left( u, \top \right)$ \label{alg:ackDissociate.ifStatement}} 
{
	$\PendDlist(\depositId) \gets \bot$ \label{alg:ackDissociate.PendDListEntry}\\
	$\Dlist(\depositId) \gets \left( \textit{deposit id, amount, u, } \bot, \bot \right)$ \codeComment{change third argument of the entry to indicate the dissociation is completed} \label{alg:ackDissociate.depositEntry}\\ 
	return $\left( \success \right)$
} {
	return $\left( \fail \right)$
} 
}

\BlankLine

\KwOn(${\pay_u(\Cuv, amount)}$) {
\eIf{$\exists \textit{x,y, amountX, amountY, isSymmetric}: \Clist(\Cuv) = \left( \textit{x, y, amountX, amountY, isSymmetric} \right) \wedge \left( \left( isSymmetric = \top \right) \vee \left( isSymmetric = \bot \wedge x = u \right)\right)$} 
{
	\If{$x = u \wedge \amount \leq \textit{amountX}$\label{alg:pay.ifAmountU}}{
		$\Clist(\Cuv) \gets \left( \textit{x, y, amountX - amount, amountY, isSymmetric} \right)$\label{alg:pay.amountUdeduction} \\
		$\Pendlist(\pendingPaymentId) \gets \left( \textit{y, \Cuv, amount} \right)$
	}
	\If{$y = u \wedge \amount \leq \textit{amountY} $\label{alg:pay.ifAmountV}}{
		$\Clist(\Cuv) \gets \left( \textit{x, y, amountX, amountY - amount, isSymmetric} \right)$\label{alg:pay.amountVdeduction}\\
		$\Pendlist(\pendingPaymentId) \gets \left( \textit{x, \Cuv, amount} \right)$
	}
	$\pendingPaymentId \gets \pendingPaymentId + 1$ \\
	return $\left( \success, \pendingPaymentId \right)$
} 
{
	return $\left( \fail \right)$
}
}

\BlankLine

\KwOn(${\receivePayment_v(\pendingPaymentId)}$) {
\eIf{$\exists \textit{\Cuv, amount}: \Pendlist(\pendingPaymentId) = (v, \Cuv, \amount)$ \label{alg:receivePayment.ifStatementEntryExists}} 
{
	$\Clist(\Cuv) \gets \left( u, v, \amountV + \amount, \isSymmetric \right)$ \label{alg:receivePayment.addAmountToChannel}\\
	return $\left( \success \right)$ \\
	$\Pendlist(\pendingPaymentId) \gets \bot$
} {
	return $\left( \fail \right)$
} 
}

\BlankLine

\KwOn(${\settleChannel_u(\Cuv)}$) {
\eIf{$\exists v, \amountU, \amountV, \isSymmetric: \Clist(\Cuv) = \left( u, v, \amountU, \amountV, \isSymmetric \right)$ \label{alg:settleChannel.ifCheckPendingList}} 
{
	$\textit{pendingDeposits} \gets \{ \depositId | \newline \PendDlist(\depositId) = (u,\bot) \vee (u, \top) \}$ \label{alg:settle.pedingDepositsGroup}\\
	$\textit{pendingDepositsSum} \gets \sum_{\textit{x} \in \textit{pendingDeposits}}x$ \commentx{Need to remove all the deposits that their dissociation is not completed}\\
	$\PendLlist(\ledgerId) \gets (u, \amountU + \textit{pendingDepositsSum})$ \label{alg:settle.createLedgerPayment}\\
	$\ledgerId \gets \ledgerId + 1 $ \\ 
	foreach $(\depositId \in \textit{pendingDeposits})$: $\PendDlist(\depositId) = \bot$ \\
	foreach $(\depositId \in \textit{pendingDeposits}): \Dlist(\depositId) \gets \bot$\\ 
	$\Clist(\Cuv) \gets (u, v, 0, \amountV, \isSymmetric)$ \\
	\eIf{$\PendClist(\Cuv) = \bot$ \commentx{If this is the first settle message} \newline} {
		$\PendClist(\Cuv) \gets v $ \\
	}{
		$ \textit{deposits} \gets \{ \depositId | \exists \textit{amount, isSymmetric}, x: {\Dlist(\depositId) = \left(\textit{amount, x, \Cuv, isSymmetric} \right) }\}$ \\
		foreach $(\depositId \in \textit{deposits})$: $\Dlist(\depositId) \gets \bot$ \label{alg:settleChannel.removeDepositEntry1} \commentx{remove all the deposits associated with the channel}\\
		$\Clist(\Cuv) \gets \bot$ \\	
		$\PendClist (\Cuv) \gets \bot$ \\
	}
	return $\left( \textit{success, \ledgerId} \right)$ \label{alg:settleChannel.returnLedgeId}
} {
	return $\left( \fail \right)$
} 
}

\end{multicols}
\vspace{0.5em}	
\end{algorithm*}

\subsection{Indistinguishability of the real-world and ideal-world}
\label{app:formalProof:indistinguishability}

We now move to show that the real-world execution is indistinguishable to the external environment from the ideal-world one.
\begin{lemma}
	The \sys channel protocol in the $(\FTEE, \Fledger)$ hybrid model UC-realizes the ideal functionality \Fchannel.
\end{lemma}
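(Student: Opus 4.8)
The plan is to exhibit a probabilistic polynomial-time simulator $\simulator$ that, interacting only with $\Fchannel$ in the ideal world, reproduces for every environment $\env$ and real-world adversary $\adversary$ a view that is computationally indistinguishable from the real-world execution in the $(\FTEE, \Fledger)$-hybrid model. Because both $\FTEE$ and $\Fledger$ are setup assumptions, in the ideal world they are under $\simulator$'s control, so the simulator can emulate them internally: at setup it runs $\textsf{KGen}$ to generate each honest TEE's key pair $(\TeePK, \TeeSK)$ and maintains its own copy of the ledger list $\Blist$. The simulator's task then reduces to a syntactic translation between the message-passing channel protocol of \Cref{sec:channelprotocol} and the API of $\Fchannel$.

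First I would specify $\simulator$'s behaviour on honest parties. Whenever $\env$ delivers an input to an honest user $u$ (e.g. $\cmdNewPaymentChannel$, $\cmdAssociateMyDeposit$, $\cmdPay$, $\cmdSettle$), $\simulator$ invokes the matching $\Fchannel$ call ($\openChannel_u$, $\associateDeposit_u$, $\pay_u$, $\settleChannel_u$, and so on) and, using the TEE secret keys it holds, fabricates exactly the signed network messages that an honest TEE would have emitted, handing them to $\adversary$ for (possibly delayed) delivery. Conversely, for every corrupted party $\simulator$ faithfully exposes the $\FTEE$ $\textsf{install}/\textsf{resume}$ interface, so the adversary's local view of its own enclave is identical to the real world; any signed output the adversary causes a corrupted TEE to produce is also reflected into the corresponding $\Fchannel$ call, so that $\Fchannel$'s state tracks the real protocol. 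Adaptive corruption is handled by reconstructing, at corruption time, a TEE memory consistent with the $\Fchannel$ variables $\Clist$, $\Dlist$, and $\PendDlist$ for that user.

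I would then prove indistinguishability through a short hybrid sequence $\hybrid{0}, \dots, \hybrid{k}$, with $\hybrid{0}$ the real execution and $\hybrid{k}$ the ideal one. In $\hybrid{1}$ I replace the external $\FTEE$ and $\Fledger$ by $\simulator$'s internal emulation; this step is perfect, since it merely relocates identical functionalities. In the intermediate hybrids I reroute each honest state update from the local TEE memory and $\Blist$ into the corresponding $\Fchannel$ internal variables, appealing to a structural invariant (below) to argue that each step preserves the distribution exactly. The only event that can separate a real and simulated view is $\adversary$ presenting to some honest $\FTEE$ (or to $\Fledger$) a signed message $\signedMsg$ that no honest TEE produced but that nonetheless verifies; a standard reduction bounds the probability of this event by the advantage of breaking the existential unforgeability under chosen-message attacks of $\Sigma$, which is negligible.

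The hard part will be establishing and maintaining the structural invariant that the internal state of $\Fchannel$ stays in lock-step with the aggregate of all honest TEE memories and the ledger, through every stage of \cref{alg:teechainChannelA} and under arbitrary adversarial scheduling. The delicate cases are the two-phase operations---channel open/accept, deposit associate/accept, dissociate/accept/ack, and the two-sided settlement---where exactly one endpoint has acted: here I must show that the real protocol's half-completed states (a pending acknowledgement, a private key not yet destroyed, a deposit spent into an $m$-out-of-$n$ address but not yet approved) correspond precisely to the $\isSymmetric = \bot$ flags and to the $\PendDlist$, $\PendClist$, and $\Pendlist$ entries in $\Fchannel$. Once this case analysis is complete, indistinguishability of adjacent hybrids follows because each transition is a deterministic relabelling, and the overall claim follows by the triangle inequality over the constant number of hybrids.
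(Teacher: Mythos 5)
Your proposal follows essentially the same route as the paper's proof: construct a simulator that internally emulates the setup functionalities \FTEE and \Fledger, argue indistinguishability through a sequence of hybrids whose only non-trivial distinguishing event is a signature forgery (bounded by the EU-CMA security of $\Sigma$), and finish with a syntactic mapping of each real-world protocol message to the corresponding \Fchannel API call. The differences are purely organizational---you merge the paper's two emulation hybrids into one, make the final translation step an explicit lock-step invariant over the pending-state variables, and add a treatment of adaptive corruption that the paper's proof leaves implicit---but the underlying argument is the same.
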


\begin{proof}
	Let \adversary be an adversary against the \sys protocol. 
	We construct an ideal-world adversary (the simulator) \simulator, such that any environment \env cannot distinguish between interacting with the adversary \adversary and the \sys protocol or with the simulator \simulator and the ideal functionality \Fchannel.
	
	We do so by a series of hybrid steps starting at the real-world execution of the protocol in the $(\FTEE, \Fledger)$ hybrid - $H_0$, and eventually reaching the ideal world.
	At each step we prove indistinguishability. 

	\begin{description} [font=$\bullet \;$]
	\item [Hybrid \hybrid{0}] is the real-world execution in the $(\FTEE, \Fledger)$ hybrid model.
	
	\item [Hybrid \hybrid{1}] behaves the same as \hybrid{0} except that \simulator generates a secret-public key pair (SK, PK) for the signing scheme $\Sigma$ and publishes the public key PK.
	When \adversary wants to communicate with its \FTEE, \simulator faithfully emulates \FTEE's behavior, and records \adversary's messages.
	
	As \simulator's simulation of the real-world protocol is done perfectly, the environment \env cannot distinguish between \hybrid{0} (the real-world execution) and \hybrid{1}.
	
	\item [Hybrid \hybrid{2}] behaves the same as \hybrid{1} except for the following difference:
	whenever \adversary wants to communicate with \Fledger, \simulator faithfully emulates \Fledger's behavior for \adversary.
	As \adversary's view in \hybrid{2} are perfectly emulated for him when interacting with the ledger, no environment \env can distinguish between \hybrid{2} and \hybrid{1}.
	
	\item [Hybrid \hybrid{3}] behaves the same as \hybrid{2} except for the following difference:
	If \adversary invoked its \FTEE with a correct \textsf{install} message with program \prog, then for every correct \textsf{resume} message \simulator records the tuple (\textit{msg},$\sigma$) from \FTEE, where \textit{msg} is the output of running \prog in \FTEE, and $\sigma$ is the signature generated inside the \FTEE, using the SK generated in \hybrid{1}.
	
	Let $\Omega$ denote all such possible tuples. If $(\textit{msg}, \sigma)\notin\Omega$ then \simulator aborts, otherwise, \simulator delivers the message to Bob (the party in the protocol that is not controlled by \adversary).
	
	We can argue that \hybrid{2} is indistinguishable from \hybrid{1} by reducing the problem to the EU-CMA of the signing scheme.
	If \adversary does not send one of the correct tuples  to the other party, then the other party's attestation (verification mechanism) will fail (but with negligible probability).
	Otherwise, \env and \adversary can be leveraged to construct an adversary that succeeds in a signature forgery.
	
	\item [Hybrid \hybrid{4}] is the same as \hybrid{3} except that for the following difference:
	Whenever \adversary delivers a signed message (by himself or from his \FTEE) to other users in the system signed by a secret key of \sys's protocol, \simulator behaves as in \hybrid{3}, i.e., \simulator records the tuple $(\textit{msg}, \sigma)$, and as in \hybrid{4}, \simulator aborts if \textit{msg} is not signed correctly by the corresponding secret key.
	
	\hybrid{4} is indistinguishable from \hybrid{3} for the same reasons \hybrid{3} is indistinguishable from \hybrid{2}, \ie, otherwise \env and \adversary will be able to construct an adversary that can succeed in signature forgery.
	
	\item [Hybrid \hybrid{5}] is the ideal world execution, i.e., we map the calls in the simulated real-world to calls of \simulator to \Fchannel.
	
	\begin{description}  [font=$\circ \;$]
		\item[\cmdNewPaymentChannel] Whenever a user sends \cmdNewPaymentChannel[] to \FTEE to create a new channel, \simulator sends a \openChannel message to \Fchannel in the ideal world.
		
		\item[\textit{newChannelAck}] Whenever \msgNewChannelAck is delivered to the recipient in the real-world, \simulator invokes \acceptChannelOpen to \Fchannel.
		\msgNewChannelAck is delivered when \cmdNewPaymentChannel arrives to the counterpart's \FTEE. 
		
		\item[\cmdNewDeposit] Whenever a user invokes his \FTEE with \cmdNewDeposit after correctly transferring money on the ledger by invoking \transferCommand to \Fledger, \simulator invokes \Fchannel with \addDeposit, and records the returned \depositId.

		\item[\cmdRemoveDeposit] Whenever a user invokes \cmdRemoveDeposit to his \FTEE, \simulator calls \removeDeposit to \Fchannel with the corresponding \pendingPaymentId, and when the user calls \transferCommand to \Fledger in order to place the tx on the ledger, \simulator calls \acceptLedgerPayment with the corresponding \pendingPaymentId.
		
		\item[\cmdAssociateMyDeposit] Whenever a user sends \cmdAssociateMyDeposit to another user in the system, \simulator sends \associateDeposit to \Fchannel on that user's behalf.
		
		\item[\cmdApproveMyDeposit] Whenever a user invokes \FTEE with \cmdApproveMyDeposit, the simulator \simulator calls \acceptAssociateDeposit with the corresponding deposit id \depositId.
		
		\item[\cmdDissociateMyDeposit] When a user correctly invokes his \FTEE with \cmdDissociateMyDeposit, \simulator calls \dissociateDeposit.
		
		\item[\msgDissociatedDeposit] When a user passes to his \FTEE \msgDissociatedDeposit message, \simulator invokes \dissociateDeposit to \Fchannel.
		The \msgDissociatedDeposit is the message send to \FTEE by the counterpart upon receiving the \cmdDissociateMyDeposit.
		
		\item[\msgDissociatedDepositAck] When a user invokes his \FTEE with the message \msgDissociatedDepositAck, \simulator invokes \ackDissociate in the ideal world.
		\msgDissociatedDepositAck is delivered to \FTEE upon receiving the ack from \msgDissociatedDeposit.
		
		\item[\cmdPay] When a user invokes \cmdPay to \FTEE, \simulator invokes \pay to \Fteechain.
		
		\item[\msgPaid] Whenever a user passes \msgPaid to his \FTEE, then \simulator calls \receivePayment.
		This message is delivered when the counterpart receives the \cmdPay and passes it to his \FTEE.
		
		\item[\cmdSettle] Whenever a user invokes \cmdSettle to his \FTEE, \simulator invokes \settleChannel commands on behalf of the user of the channel, i.e., for a channel between~$u$ and~$v$ them \simulator will invoke $\settleChannel_u$ and $\settleChannel_v$ with the channel id \Cuv.

	\end{description}

	In \hybrid{4}, \simulator can faithfully interact with \Fchannel, while faithfully emulating \adversary's view of the real-world.
	\simulator can then output to \env exactly \adversary's output in the real-world.
	Thus, there does not exist any environment \env that can distinguish between interaction with \adversary and the \sys channel protocol, from interaction with \simulator and \Fchannel. 
\end{description}
\end{proof}

Next, we define balance security in the ideal-world, and prove that~\Fchannel captures it.

\subsection{Balance security in the Ideal-World}
\label{app:formalProof:idealWorldBalanaceCorrectness}

We proved that for any environment~\env the ideal-world and the real-world executions are indistinguishable.
Therefore, we can define the security interest of a payment channel in the ideal-world, and prove that the ideal-world execution achieves this desired property.
Since both the ideal and the real world are indistinguishable to any environment, then both the definition of balance security, and the proof that it holds in the ideal-world will also hold in the real-world, thus concluding the proof.

\mypar{Notations and Definitions}
We define our security goal to be \emph{balance security}.
Intuitively, this means that at any point in time, an honest user $u$ can choose to settle a channel \Cuv with another user $v$, even if $v$ is corrupt or crashes arbitrarily, and receive her balance on the ledger without the ability of~$v$ or any other user $w$ to affect the outcome. 

To formally discuss \emph{balance security} we first define a few notations and definitions:

An \emph{execution} $\sigma = \left( \event{1}, \event{2}, \ldots, \event{n} \right) = \left((\op{1}, \ret{1}),(\op{2}, \ret{2}), \ldots, (\op{n}, \ret{n}) \right)$ is a series of events \event{}, each consists of a call \op{} to \Fchannel and its return value from \Fchannel, \ret{}.
Each event might also result in the change of the state, i.e., the internal variables maintained by \Fchannel.

\LuT{t} is the balance on the Ledger for user~$u$ at time~$t$, \ie, if any user~$v$ calls~$\getLedgerBalance_v(u)$ at time~$t$, then \LuT[u]{t} is the returned value.

The initial time~$0$ is a time in an execution in the ideal-world prior to any calls to \Fchannel, with some initial value on the Ledger for the users in the run, e.g. users $u, v, w$ with some initial values $\Llist_0(u)$, $\Llist_0(v)$, $\Llist_0(w)$ respectively. 
$\sigma_t$ is the prefix of the execution $\sigma$ from time~$0$ to~$t$.

We denote \paymentsU be the set of all the amounts of successful \pay calls to \Fchannel from $ u $ that returned \success during $\sigma_t$:
\begin{multline*}
	\paymentsU = \{ \amount | \\
	 \exists i, \Cuv, v, \amount, \pendingPaymentId: \\ \event{i} \in \sigma_t, 
	\op{i} = \pay_u( \Cuv, v,\amount ), \\ \ret{i} = (\success, \pendingPaymentId)
	\}
\end{multline*}

Let \paymentsSumU[t] be the sum of all payments in \paymentsU, i.e.,
$$
\paymentsSumU[t] = \sum_{\mathclap{\amount \in \paymentsU}} \amount
$$

We denote \receivePaymentsU[t] to be a set of all the amounts of successful \receivePayment calls to~\Fchannel from~$u$ that returned \success during $\sigma_t$:
\begin{multline*}
	\receivePaymentsU = \{ \amount |\\ \exists i, \pendingPaymentId, \Cuv, \amount: \\
	\event{i} \in \sigma_t, \op{i} = \textsf{receivePayment}_u(\pendingPaymentId), \\
	\exists j < i, \amount: \event{j} \in \sigma_t, \op{j} = \pay_u(\Cuv, \amount),\\
	 \ret{j} = (\success, \pendingPaymentId) \}
\end{multline*} 

Let \receivePaymentsU be the sum of all such received payments in \receivePaymentsU[t], i.e.
$$
\receiveSumU[t] = \sum_{\mathclap{\amount \in \receivePaymentsU[t]}} \amount
$$

We are now ready to define the \emph{perceived balance} in the ideal-world of user~$u$.

\begin{definition}[Perceived Balance in the Ideal-World] \label{def:idealPerceivedBalance}
	The perceived balance of user~$u$ at time~$t$ is defined as
	$$
	\balanceU{u} = \LuT{0} + \receiveSumU[t] - \paymentsSumU[t]
	$$
\end{definition}

We are now ready to formally define balance security.
\begin{definition} [Balance Security] \label{def:balanceCorrectnessIdeal}
	We say an algorithm satisfies \emph{balance security} if for any prefix~$\sigma_t$ for~$t \ge 0$, and for any well-behaved user~$u$ in the system, $u$ can preform a series of operations, possibly interleaved with operations of other users, that will be completed in a finite time~$t'$, after which at any time~$t'' \ge t'$: $\LuT{t''} \ge \balanceU[t]{u}$.
\end{definition}
A user~$u$ is honest or well-behaved if she follows the series of operations of the algorithm.
We note that definitions~\cref{def:balanceCorrectnessIdeal} and~\cref{def:balanceCorrectnessReal} are equal to the environment~\env as the ideal-world and the real-world are indistinguishable to it.

\newcommand{\thmBalanceCorrectness}{The ideal functionality \Fchannel achieves balance security.}

\begin{theorem}\label{thm:balanceCorrectness}
	\begin{restatable}[Balance Security Theorem]{thm}{balanceCorrectness} 
		\thmBalanceCorrectness
	\end{restatable}
\end{theorem}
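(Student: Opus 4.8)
The plan is to prove \Cref{thm:balanceCorrectness} constructively: I would exhibit an explicit \emph{reclaim procedure} that a well‑behaved user $u$ runs on \Fchannel, together with an accounting invariant showing that the procedure deposits at least $\balanceU[t]{u}$ onto the ledger \Llist, no matter how the other (possibly corrupt) users interleave their own calls. The whole argument lives in the ideal world, using only the API of \cref{alg:Fteechain}; since the worlds are indistinguishable (the preceding lemma), this suffices.

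\textbf{The claimable potential and its invariance.} First I would define the claimable potential of $u$ at time $t$ as
\[
  \Phi_t(u) = \LuT{t} + C_t(u) + D_t(u) + P_t(u) + L_t(u),
\]
where $C_t(u)$ is the sum of $u$'s balances over all open channels in \Clist, $D_t(u)$ the sum of $u$'s \emph{unassociated} deposits in \Dlist (channel field $\bot$), $P_t(u)$ the sum of $u$'s deposits sitting in \PendDlist, and $L_t(u)$ the sum of \PendLlist entries addressed to $u$. At $t=0$ all of \Clist, \Dlist, \PendDlist, \PendLlist are empty, so $\Phi_0(u)=\LuT{0}=\balanceU[0]{u}$. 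I would then prove by induction over the execution events that
\[
  \Phi_t(u) = \balanceU[t]{u}\qquad\text{for all }t .
\]
For $u$'s own calls this is a direct bookkeeping check: \addDeposit/\removeDeposit move value between \LuT{} and $D$ (or $L$); \associateDeposit moves $D\!\to\!C$; \dissociateDeposit moves $C\!\to\!P$; \ackDissociate moves $P\!\to\!D$; \settleChannel moves $\amountU$ plus the pending‑dissociation sum out of $C$ and $P$ into $L$; and \acceptLedgerPayment moves $L\!\to\!\LuT{}$ — all leaving $\Phi_t(u)$ unchanged while $\balanceU[t]{u}$ is also unchanged. The only calls that touch $\balanceU[t]{u}$ are \pay and \receivePayment, and these decrease/increase $C_t(u)$ by exactly the amount that \paymentsSumU/\receiveSumU changes, so both sides of the identity move in lockstep.

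\textbf{Robustness to other users (the main obstacle).} The crux is a lemma: no call issued by another user can decrease $\Phi_t(u)$. The deposit‑ and channel‑management calls of a user $v$ read or mutate only $v$'s own deposits and $v$'s own side of a shared channel, never $u$'s balance; a \pay from $v$ to $u$ merely creates a \Pendlist entry, which is excluded from $\Phi$ and does not change $\balanceU[t]{u}$; and $v$ receiving a payment only raises $v$'s channel balance. The delicate case — and where I expect the real work to be — is $v$ calling \settleChannel on a channel it shares with $u$ under adversarial scheduling. Here I would argue that the first settle on a channel zeroes only the \emph{caller's} balance, records its pending ledger payment, and flags the channel in \PendClist, leaving the other party's amount intact; the channel entry is erased only on the second settle, by which time the still‑open party's amount has already been converted into a \PendLlist entry by its own settle. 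Hence $u$'s share can never be destroyed before $u$ secures it, so $\Phi_t(u)\ge\balanceU[t]{u}$ is preserved even against a Byzantine counterparty, which is precisely what the interleaving clause of \Cref{def:balanceCorrectnessIdeal} demands.

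\textbf{Reclaim procedure and conclusion.} Finally I would specify $u$'s procedure: call \settleChannel on every channel $u$ participates in (this also sweeps all of $u$'s \PendDlist deposits into a \PendLlist entry), then \removeDeposit on every remaining unassociated deposit, then \acceptLedgerPayment on every \PendLlist entry addressed to $u$. There are finitely many channels, deposits and pending entries at time $t$, and no other user can create new obligations that $u$ must reclaim, so the procedure terminates at some finite $t'\ge t$; moreover each call returns in bounded steps. By the invariant and the case analysis, these calls transfer the full value $C_t(u)+D_t(u)+P_t(u)+L_t(u)$ onto \Llist, giving $\LuT{t'}\ge\Phi_t(u)=\balanceU[t]{u}$. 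Because the only call that debits $\Llist(u)$ is $u$'s own \addDeposit (in the real world a ledger transfer needs $u$'s secret key, which a well‑behaved $u$ never leaks), no other party can reduce $\LuT{}$ afterwards; thus $\LuT{t''}\ge\LuT{t'}\ge\balanceU[t]{u}$ for every $t''\ge t'$, establishing balance correctness.
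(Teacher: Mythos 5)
Your proposal is correct and follows essentially the same route as the paper: your potential $\Phi_t(u)$ is exactly the paper's \innerBalanceU{t}, your induction is the paper's proposition that $\innerBalanceU{t} = \balanceU[t]{u}$ (with your ``robustness to other users'' lemma folded into the paper's event-by-event case analysis, including the delicate two-phase \settleChannel case, which you handle the same way), and your reclaim procedure is the paper's $\OPS{u}$ modulo the immaterial reordering of \removeDeposit and \settleChannel. The only differences are cosmetic: you make explicit the post-$t'$ monotonicity of \LuT{t''} (which the paper leaves implicit) and omit the paper's auxiliary solvency invariant $\channelCapacity \le \depositsChannelSum$, which the paper's reclaim argument never actually invokes.
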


In order to show that \Fchannel achieves balance security we need to show an algorithm, i.e., a series of operations that an honest user~$u$ has to preform in order to receive her perceived balance \balanceU[t]{u}.

\newcommand{\innerDeposits}[1]{\ensuremath{\textit{innerDeposits}_{#1}(u)\xspace}}
\newcommand{\innerChannels}[1]{\ensuremath{\textit{innerChannels}_{#1}(u)\xspace}}
\newcommand{\innerPendingLedgerPayments}[1]{\ensuremath{\textit{innerPendingLedgerPayment}_{#1}(u)\xspace}}

We define the following sets, based on the internal variables of \Fchannel:
\begin{enumerate}
	\item All the deposit ids in~\Dlist that are not currently associated with any channel at time~$t$:
	\begin{multline*}
	\innerDeposits{t} = \{ \depositId| \\
	\exists \depositId, \isSymmetric: \\
	\Dlist(\depositId) = (\amount, u, \bot, \isSymmetric) \}
	\end{multline*}
	
	\item All the channel ids that have not been settled yet until time~$t$. i.e.:
	\begin{multline*}
		\innerChannels{t}  = \\
		= \{\Cuv | \exists \Cuv, v, \textit{amountV}, \isSymmetric: \\
		\Clist(\Cuv) = (u, v, \textit{amountU}, \textit{amountV}, \isSymmetric) \\
		\vee \Clist(\Cuv) = (v, u, \textit{amountV}, \textit{amountU}, \isSymmetric), \\
		\PendClist(\Cuv) \ne u \}
	\end{multline*}
	
	\item All the ledger payment ids \ledgerId that are associated with~$u$ and have not been placed on the ledger yet:
	\begin{multline*}
		\innerPendingLedgerPayments{t} = \\
		= \{\ledgerId | \exists \ledgerId: \\
		\PendLlist(\ledgerId) = \\ (u, \amount) \}
	\end{multline*}
	
\end{enumerate}

In order for $u$ to receive her perceived balance \balanceU{u}, we describe an algorithm, i.e., a series of calls  to \Fchannel.

\begin{enumerate}
	\item $u$ places a set of \textsf{remove} operations for each $\depositId \in \innerDeposits{t}$:
	\begin{multline*} \label{eq:depositReleaseOp}
	\OPS{1} = \{ \textsf{removeDeposit}_u\left( \depositId \right) | \\
	 \depositId \in \innerDeposits{t} \}
	\end{multline*}
	
	\begin{lemma} \label{lem:correctnessAlgRemoveOps}
		If $u$ places the calls to \Fchannel described in \OPS{1}, then the return value of each call is \success with some ledger payment id value \ledgerId.
	\end{lemma}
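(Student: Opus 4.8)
The plan is to establish the claim via a simple stability invariant: the entry $\Dlist(\depositId)$ of a deposit that is owned by $u$ and carries the unassociated marker $\bot$ in its channel field can be modified only by an operation that $u$ herself invokes. Combined with the hypothesis that a well-behaved $u$ issues exactly the calls in $\OPS{1}$ during this phase, this will force every \removeDeposit guard to hold at the moment its call is evaluated, yielding the claimed $\success$ returns.

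First I would fix an arbitrary $\depositId \in \innerDeposits{t}$ and record that, by the very definition of $\innerDeposits{t}$, at time $t$ we have $\Dlist(\depositId) = (\amount, u, \bot, \isSymmetric)$ for some $\amount$ and $\isSymmetric$. This is precisely the guard checked in the \removeDeposit branch of \cref{alg:Fteechain}, so it suffices to show that this entry is unchanged between time $t$ and the execution of $\textsf{removeDeposit}_u(\depositId)$.

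The core step is an exhaustive case analysis over the API of \Fchannel showing that no call issued by a user $w \neq u$ can alter such an entry. I would argue call by call: \addDeposit writes only to a fresh id drawn from \depositCounter and never to an existing entry; $\textsf{removeDeposit}_w$, $\textsf{associateDeposit}_w$ and $\textsf{dissociateDeposit}_w$ each require the owner field of the matched entry to equal the caller $w$, so they cannot match an entry whose owner is $u$; \acceptAssociateDeposit and \ackDissociate act only on entries whose channel field is already some $\Cuv \neq \bot$, or on \PendDlist entries, and hence leave unassociated $u$-deposits alone; and $\textsf{settleChannel}_w$ deletes only deposits whose channel field equals the settled channel identifier, or \PendDlist entries whose owner is the caller $w$ --- neither category contains an unassociated deposit owned by $u$. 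Consequently $\Dlist(\depositId)$ is stable under every operation except those invoked by $u$.

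Finally, since $u$ is well-behaved and, in this phase, issues only the calls of $\OPS{1}$, the sole $u$-operation able to touch $\Dlist(\depositId)$ is $\textsf{removeDeposit}_u(\depositId)$ itself; and because the deposit ids in $\innerDeposits{t}$ are pairwise distinct, executing a $\textsf{removeDeposit}$ for one id (which only sets its own entry to $\bot$) does not disturb the entries of the others. Hence when $\textsf{removeDeposit}_u(\depositId)$ is evaluated, its guard still holds exactly as at time $t$, so it takes the success branch, inserts a fresh $(u, \amount)$ record into \PendLlist, and returns $(\success, \ledgerId)$. I expect the delicate part to be the exhaustiveness of the case analysis --- in particular ruling out that a counterparty's $\textsf{settleChannel}$ silently removes or re-associates a deposit that $u$ still regards as free --- which rests on the ownership-guarded preconditions that pervade the API.
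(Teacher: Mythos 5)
Your proof is correct, and it is strictly more careful than the paper's. The paper's argument for this lemma is a single sentence: it observes that the guard in the \removeDeposit branch (the existence of an entry $\Dlist(\depositId) = (\amount, u, \bot, \isSymmetric)$) is true for every $\depositId \in \innerDeposits{t}$, and concludes \success is returned with a fresh \ledgerId. That argument implicitly assumes the state seen by each \removeDeposit call is the state at time~$t$, even though the balance-correctness definition explicitly allows $u$'s suffix to be interleaved with operations of other users. Your stability invariant --- that an unassociated $u$-owned entry in \Dlist can only be altered by a call of $u$ herself, established by an exhaustive, ownership-guard case analysis over the API --- is exactly the missing justification for that assumption, plus the observation that $u$'s own \removeDeposit calls on distinct ids do not interfere with one another. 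Two small points: for \acceptAssociateDeposit, the matched entry's channel field is whatever \Cuv the caller supplies, so in principle it could flip the \isSymmetric flag of an unassociated deposit; this does not hurt you because the \removeDeposit guard existentially quantifies over \isSymmetric, but your claim that it ``acts only on entries whose channel field is already some $\Cuv \neq \bot$'' is cleaner if you add that caveat. Likewise, for \settleChannel note that the deposits it deletes via \PendDlist still carry a non-$\bot$ channel field (dissociation only clears that field at \ackDissociate), so they too fall outside $\innerDeposits{t}$; your ownership argument covers this, and it is worth making explicit. Net effect: you prove the same statement by the same ultimate mechanism (the guard succeeds), but your route closes an interleaving gap the paper silently steps over.
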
	
	\begin{proof}
		The if statement in the \textsf{removeDeposit} algorithm~(\Cref{alg:removeDeposit}, \Cref{alg:removeDeposit.ifDepositStatement}) will be true for any deposit id $\depositId \in \innerDeposits{t}$, which means that \Fchannel will create a new ledger payment id \ledgerId and return \success with \ledgerId~(Line~\ref{alg:removeDeposit.returnSuccess}).
	\end{proof}
	\begin{lemma}
		User $u$ has all the deposit ids ${\depositId \in \innerDeposits{t}}$, such that she has the ability to invoke all the calls in \OPS{1}.
	\end{lemma}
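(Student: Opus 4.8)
The plan is to prove this by establishing an invariant on the internal state of \Fchannel and maintaining it across the whole execution by induction on the trace $\sigma_t$. The crux is a mismatch of vantage points: $\innerDeposits{t}$ is defined purely in terms of \Fchannel's private variable \Dlist, whereas to actually place the calls in $\OPS{1}$ the honest user $u$ must herself hold each \depositId. So the real content of the lemma is that every deposit id that \Fchannel records as owned by $u$ was at some earlier step returned to $u$, and hence is available to her as a legal argument.

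Concretely, I would prove the invariant: at every time $t$, for every \depositId whose entry $\Dlist(\depositId)$ has $u$ in its owner field (i.e.\ is of the form $(\amount, u, c, s)$ for some channel-or-$\bot$ value $c$ and boolean $s$, associated or not), the value \depositId was at some earlier step returned to $u$ as the result of an $\addDeposit_u$ call, and a well-behaved $u$ therefore has it recorded. The base case is immediate, since at time $0$ every entry of \Dlist is $\bot$, so the statement holds vacuously. Note that the strengthened invariant ranges over \emph{all} $u$-owned entries, not only the unassociated ones in $\innerDeposits{t}$; this is what lets me carry the property through the phases in which a deposit is temporarily bound to a channel.

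For the inductive step I would case-analyse the only calls that write to \Dlist. The call $\addDeposit_u$ mints a fresh id from \depositCounter, stores $(\amount, u, \bot, \bot)$, and returns that exact id to $u$, so the new entry satisfies the invariant; an $\addDeposit_w$ for $w \ne u$ creates a $w$-owned entry and is irrelevant. The calls $\associateDeposit_u$ and $\acceptAssociateDeposit$ only alter the channel and \isSymmetric fields, leaving the owner equal to $u$, so the id is unchanged and still known by the inductive hypothesis; $\dissociateDeposit$ and $\acceptDissociate$ touch only \Clist and \PendDlist and leave \Dlist alone; $\ackDissociate_u$ rewrites the entry back to an unassociated $u$-owned deposit but is itself invoked by $u$ with \depositId as an argument, so $u$ already holds it; and $\removeDeposit$ and $\settleChannel$ only delete entries. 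Since no operation ever reassigns the owner field or creates a fresh $u$-owned entry without returning its id to $u$, the invariant is preserved. The lemma then follows: each $\depositId \in \innerDeposits{t}$ is in particular a $u$-owned entry, hence known to $u$, hence a legal argument to $\removeDeposit_u$, which is exactly the set of calls $\OPS{1}$ (and, combined with the preceding lemma, each such call then returns $\success$).

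The main obstacle I anticipate is making precise, within the UC/ITM model, what it means for the honest machine $u$ to ``have'' a deposit id, and then verifying the two facts the induction silently relies on: that \depositCounter is genuinely monotone so ids are never re-minted, and that the dissociation path reuses the original id rather than generating a new one. Once that bookkeeping is pinned down and the owner field is confirmed to be write-once (set only in $\addDeposit$), the case analysis itself is routine.
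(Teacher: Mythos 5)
Your proof is correct and rests on the same key observation as the paper's: a deposit entry owned by $u$ can only originate from an $\addDeposit_u$ call, which returns the freshly minted \depositId to $u$, so a well-behaved $u$ holds every id she needs to place the \removeDeposit calls in \OPS{1}. The difference is in how this is established. The paper's proof is a direct, one-step argument: it asserts that the only way an entry $\Dlist(\depositId) = (\amount, u, \bot, \isSymmetric)$ can be generated is by an $\addDeposit_u$ invocation, and concludes immediately. Strictly read, that assertion is incomplete: \ackDissociate also (re)writes entries of exactly this unassociated form when a dissociation completes, so the paper's claim only holds because the id in question was minted earlier by \addDeposit and returned to $u$ at that point --- a step the paper leaves implicit. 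Your strengthened invariant --- every $u$-owned entry in \Dlist, associated or not, carries an id previously returned to $u$ --- together with the induction over all calls that write to \Dlist, makes exactly this missing bookkeeping explicit and closes the \ackDissociate case (that call is itself invoked by $u$ with \depositId as argument, so $u$ already holds it). What the paper's version buys is brevity; what yours buys is rigor at the point where the terse argument is most fragile: the ownership field is verified to be write-once, ids are never re-minted, and the availability claim survives the associate/dissociate round trip rather than applying only to deposits that were never associated with a channel.
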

	\begin{proof}
		\OPS{1} is defined on a set of deposit ids \depositId such that there exists the entry $\Dlist(\depositId) = (\amount, u, \bot, \isSymmetric)$.
		The only option for such an entry to be generated, is if during $\sigma_t$  $\addDeposit_u(\amount)$ was invoked by $u$.
		When the call returns successfully, \Fchannel returns $(\success, \ledgerId)$ to~$u$.
		Thus, at time $t$, $u$ already has all the deposit ids she needs in order to invoke the \removeDeposit calls in \OPS{1}.
	\end{proof}
	
	\item $u$ places a set of \textsf{settle} operations for each channel id $\Cuv \in \innerChannels{t}$:
	\begin{equation*} \label{eq:channelSettleOp}
	\OPS{2} = \{ \textsf{settle}_u\left( \Cuv \right) | \Cuv \in \innerChannels{t} \}
	\end{equation*} 
	\begin{lemma} \label{lem:correctnessAlgSettleOps}
		If~$u$ places the calls to \Fchannel described in \OPS{2}, then the return value of each call is \success with some ledger payment id value \ledgerId.
	\end{lemma}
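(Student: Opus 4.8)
The goal is to show that every call $\settleChannel_u(\Cuv)$ with $\Cuv \in \innerChannels{t}$ enters the success branch of \settleChannel{} and therefore returns $(\success, \ledgerId)$ for a freshly allocated ledger-payment id. By inspection of \Fchannel, such a call returns $(\success, \ledgerId)$ exactly when its guard is met, namely when $\Clist(\Cuv)$ currently holds an entry in which $u$ is one of the two parties (I read the handler symmetrically in the two parties, matching the two disjuncts $\Clist(\Cuv) = (u, v, \amountU, \amountV, \isSymmetric)$ and $\Clist(\Cuv) = (v, u, \amountV, \amountU, \isSymmetric)$ used in the definition of $\innerChannels{t}$). So the plan is to reduce the lemma to a single persistence claim: for each $\Cuv \in \innerChannels{t}$, the entry $\Clist(\Cuv)$ still names $u$ as a party at the moment $u$ issues its settle call, regardless of how the other users' operations are interleaved.

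First I would isolate the only write that can invalidate the guard, i.e.\ the only assignment $\Clist(\Cuv) \gets \bot$. Scanning \Fchannel, this occurs solely in the completing (else) branch of \settleChannel, which is reached only when $\PendClist(\Cuv) \ne \bot$. Next I would record the state at time~$t$: membership $\Cuv \in \innerChannels{t}$ gives both that the entry exists with $u$ as a party and that $\PendClist(\Cuv) \ne u$, hence $\PendClist(\Cuv) \in \{\bot, v\}$. I would then check that $u$'s own calls in $\OPS{2}$ cannot destroy a different channel's entry: settling $\Cuv'$ only writes $\Clist(\Cuv')$, clears the deposits whose entry in \Dlist{} points to $\Cuv'$, and consumes $u$'s dissociating deposits in \PendDlist; none of these touch $\Clist(\Cuv)$ for $\Cuv \ne \Cuv'$, so the calls in $\OPS{2}$ do not interfere with one another's guards.

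The remaining and, I expect, principal obstacle is ruling out that the counterparty~$v$ (or any other user) unilaterally drives $\Cuv$ into the completing branch before $u$ acts, thereby setting $\Clist(\Cuv) \gets \bot$ and erasing the channel's deposits. The argument must rest on the invariant that the completing branch fires only after settle calls from both distinct parties --- equivalently, that each party settles a given channel at most once, which the functionality tracks through \PendClist{} (abstracting the protocol's freshness and per-party state). Granting this invariant, a first settle by~$v$ only sets $\PendClist(\Cuv) \gets u$ and zeroes $v$'s balance while leaving the entry (and $u$'s balance $\amountU$) intact, so when $u$ finally calls $\settleChannel_u(\Cuv)$ the guard is still satisfied and $u$'s call is the completing one; and if no party has settled, $u$'s call is the first settle, which also satisfies the guard. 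Either way the success branch runs.

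Finally I would close the argument: with the guard satisfied, the success branch allocates a fresh $\ledgerId$, sets $\PendLlist(\ledgerId) \gets (u, \amountU + \textit{pendingDepositsSum})$, increments the counter, and returns $(\success, \ledgerId)$, as claimed. I would flag the persistence invariant as the crux, since balance correctness itself depends precisely on the counterparty being unable to strip $u$'s channel state before $u$ reclaims it; formalizing that a single adversarial party cannot reach the completing branch is where the real work lies.
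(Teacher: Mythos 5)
Your core step coincides with the paper's entire proof: the paper disposes of this lemma in one line, observing that the if-condition of \settleChannel holds for every $\Cuv \in \innerChannels{t}$ and therefore the success branch returns $(\success, \ledgerId)$. Everything beyond that in your proposal is additional: you correctly notice that the guard being true \emph{at time $t$} does not by itself justify the lemma, because the calls in $\OPS{2}$ execute later and Definition~\ref{def:balanceCorrectnessIdeal} explicitly permits interleaving with other users' operations. You therefore isolate the unique assignment $\Clist(\Cuv) \gets \bot$ (the completing branch of \settleChannel, reachable only when $\PendClist(\Cuv) \ne \bot$), check that your own calls in $\OPS{2}$ do not disturb one another's guards, and reduce the lemma to a persistence invariant about $\PendClist$. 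Your symmetric reading of the guard's pattern match is also necessary and matches the two disjuncts in the definition of $\innerChannels{t}$; the paper assumes it silently. In short, your route strictly refines the paper's, and the refinement is responsive to what the lemma is actually used for.

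There is, however, a residual sub-case your dichotomy misses, and it is worth flagging because it shows the lemma's literal statement is fragile. Membership in $\innerChannels{t}$ only requires $\PendClist(\Cuv) \ne u$, so besides $\PendClist(\Cuv) = \bot$ you must also admit $\PendClist(\Cuv) = v$ at time $t$ --- the state reached when $u$ herself already settled $\Cuv$ once before~$t$ (her first settle sets $\PendClist(\Cuv) \gets v$ and zeroes her balance, but leaves the $\Clist$ entry intact, so $\Cuv$ still belongs to $\innerChannels{t}$). In that state a post-$t$ settle by~$v$ is the \emph{second} settle, legitimately fires the completing branch, sets $\Clist(\Cuv) \gets \bot$, and $u$'s queued call in $\OPS{2}$ then returns \fail, contradicting the lemma as written. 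The damage to the overall theorem is nil --- $u$'s channel balance was already $0$ after her first settle, so the unreclaimable amount is zero --- but a complete proof must either strengthen the hypothesis (e.g., restrict $\innerChannels{t}$ to channels with $\PendClist(\Cuv) \ne v$ as well, i.e., channels $u$ has not yet settled) or argue explicitly that the failed call is harmless for balance correctness. Note that the paper's one-line proof has exactly the same hole and, unlike your write-up, never even surfaces the interleaving question; your identification of the persistence invariant as the crux is the right instinct, it just needs this one extra case closed.
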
	
	\begin{proof}
		The if statement of the \textsf{settleChannel} algorithm~(\Cref{alg:removeDeposit}, \Cref{alg:settleChannel.ifCheckPendingList}) will be true for any \Cuv in \innerChannels{t}, thus \Fchannel will return \success with a ledger id \ledgerId~(Line~\ref{alg:settleChannel.returnLedgeId}).
	\end{proof}
	
	\begin{lemma}
		User~$u$ has all the channel ids ${\Cuv \in \innerChannels{t}}$ such that she can invoke all the \settleChannel calls in \OPS{2}.
	\end{lemma}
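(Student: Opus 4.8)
The plan is to mirror the structure of the preceding deposit lemma, but with one structural twist: unlike deposits, whose identifiers are freshly minted and returned by \Fchannel, channel identifiers are inputs agreed on beforehand, so the argument cannot rely on \Fchannel \emph{handing} $\Cuv$ to $u$. I would instead trace every $\Cuv \in \innerChannels{t}$ back to the unique operation that could have inserted its entry into \Clist, and then argue that this operation already leaves $\Cuv$ in $u$'s possession.

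First I would observe that the only call in \Fchannel's API that creates a fresh entry in \Clist is \openChannel; every later call that touches a channel ($\acceptChannelOpen$, $\associateDeposit$, $\pay$, $\settleChannel$, \dots) merely reads an already-present identifier and mutates the stored tuple. Hence, for each $\Cuv \in \innerChannels{t}$, the entry $\Clist(\Cuv)$ in which $u$ appears as one of the two parties must have been produced by a successful \openChannel call during $\sigma_t$.

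Next I would split on the slot $u$ occupies. If $\Clist(\Cuv) = (u, v, \ldots)$, the entry was created by $\openChannel_u(\Cuv, v)$, so $u$ supplied $\Cuv$ herself as an argument and trivially knows it---this is exactly the ``owner is the creator'' reasoning used for deposits. If instead $\Clist(\Cuv) = (v, u, \ldots)$, the entry was created by $\openChannel_v(\Cuv, u)$; here I would invoke the convention fixed in the description of \Clist, namely that $\Cuv$ is a unique identifier agreed upon by $u$ and $v$ out-of-band prior to opening (in the real world both endpoints feed $\Cuv$ into their own \cmdNewPaymentChannel calls). Either way $u$ knows $\Cuv$. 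Combining both cases with Lemma~\ref{lem:correctnessAlgSettleOps}, which already guarantees that each $\settleChannel_u(\Cuv)$ returns \success, I conclude that $u$ knows every $\Cuv \in \innerChannels{t}$ and can successfully issue all the operations in \OPS{2}.

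The step I expect to be the main obstacle is the counterparty sub-case $\Clist(\Cuv) = (v, u, \ldots)$ with $\isSymmetric = \bot$: there $u$ may never have invoked any \Fchannel call carrying $\Cuv$, since she can accumulate balance in such a channel purely by receiving payments, and \receivePayment takes only a payment id rather than a channel id. Consequently I cannot lean on $u$ having passed $\Cuv$ to \Fchannel directly, as in the deposit proof, and closing the gap forces me to rely explicitly on the out-of-band channel-identifier agreement. I would state that dependence plainly, since it is the one point where the argument is not self-contained within the ideal-functionality API.
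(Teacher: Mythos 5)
Your proposal is correct and follows the same basic strategy as the paper---trace each $\Cuv \in \innerChannels{t}$ back to the operation that created its entry in \Clist and argue that this already places \Cuv in $u$'s possession---but your version is in fact more careful than the paper's own proof. The paper argues tersely that ``the only option for such an entry to be generated is if~$u$ invoked~$\openChannel_u(\Cuv)$ during~$\sigma_t$ or invoked $\acceptChannelOpen_u(\Cuv)$,'' which, read literally, is imprecise: an entry with $u$ in the second slot is generated by $\openChannel_v(\Cuv, u)$, a call made by~$v$ alone, and $u$ need never have invoked any \Fchannel call carrying \Cuv --- exactly the sub-case you isolate, since with $\isSymmetric = \bot$ user~$u$ can still accrue balance via \receivePayment, which takes only a \pendingPaymentId. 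You close this gap by appealing explicitly to the convention, stated in the paper's description of \Clist, that \Cuv is a unique identifier agreed between $u$ and $v$ prior to opening; the paper implicitly leans on the same convention without saying so. So the two arguments buy the same conclusion, but yours makes the one extra-protocol assumption visible, which is the honest way to state the lemma's dependence on the out-of-band identifier agreement rather than on the ideal-functionality API alone.
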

	
	\begin{proof}
		\OPS{2} is defined as \settleChannel calls for channel ids \Cuv such that~$\Clist(\Cuv)$ exists with user~$u$.
		The only option for such an entry to be generated is if~$u$ invoked~$\openChannel_u(\Cuv)$ during~$\sigma_t$ or invoked $\acceptChannelOpen_u(\Cuv)$.
		Thus, at time~$t$,~$u$ has all the channels ids she needs in order to invoke the \settleChannel calls as defined in \OPS{2}.
	\end{proof}
	
	We proved that if~$u$ places the calls to \Fchannel described in \OPS{1} and in \OPS{2}, then the return value of all these calls is $(\success, \ledgerId)$.
	
	We denote by~$t_1$ the time in which all the calls in $\OPS{1} \cup \OPS{2}$ return successfully.
	
	Let us define a set of the return values for each operation in $\OPS{1} \cup \OPS{2}$ at $t_1$:
	\begin{multline*}
	\RET{} = \{ \ledgerId | \\
	\exists i: \event{i} = (\op{i}, \ret{i}) \in \sigma_{t_1}, \op{i} \in \OPS{1} \cup \OPS{2},\\
	\ret{i} = (\success, \ledgerId) \}
	\end{multline*}
	
	\item $u$ places a set of \textsf{acceptLedgerPayment} operations for each $\ledgerId \in \RET{} \cup \innerPendingLedgerPayments{t}$:
	\begin{multline*}
	\OPS{3} = \{ \textsf{acceptLedgerPayment}_u(\ledgerId) | \\
	\ledgerId \in \RET{} \cup \\
	 \innerPendingLedgerPayments{t} \}
	\end{multline*}
	\begin{lemma} \label{lem:correctnessLemRetValuesOps}
		If~$u$ places the calls described in~\OPS{3} then \Fchannel returns \success for each of them.
	\end{lemma}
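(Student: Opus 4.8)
The plan is to reduce the claim to verifying the single branching guard inside $\acceptLedgerPayment$. Inspecting \Fchannel's API (\Cref{alg:acceptLedgerPayment}), the call $\acceptLedgerPayment_u(\ledgerId)$ returns $\success$ exactly when there is some $\amount$ with $\PendLlist(\ledgerId) = (u, \amount)$, and returns $\fail$ otherwise. Hence it suffices to show that for every $\ledgerId \in \RET{} \cup \innerPendingLedgerPayments{t}$, at the instant $u$ issues the corresponding call in $\OPS{3}$, the table $\PendLlist$ still holds an entry of the form $(u, \amount)$ under that identifier.

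First I would establish existence of such an entry at the moment it is created, splitting on the two sources of identifiers. For $\ledgerId \in \innerPendingLedgerPayments{t}$ this is immediate from the defining set: $\PendLlist(\ledgerId) = (u, \amount)$ already holds at time $t$. For $\ledgerId \in \RET{}$, recall from \Cref{lem:correctnessAlgRemoveOps} and \Cref{lem:correctnessAlgSettleOps} that every call in $\OPS{1} \cup \OPS{2}$ returns $(\success, \ledgerId)$; inspecting the success branches of $\removeDeposit$ and $\settleChannel$ in \Cref{alg:removeDeposit} and \Cref{alg:settleChannel}, each such call writes a fresh pending entry of the form $\PendLlist(\ledgerId) \gets (u, \amount)$, with $u$ both caller and recipient, and returns that identifier. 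So in either case a $u$-owned pending entry exists immediately after creation.

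The heart of the argument, and the step I expect to be the main obstacle, is a persistence (frame) invariant: these entries are neither removed nor altered between their creation (or time $t$) and the moment $u$ consumes them in $\OPS{3}$, even under arbitrary interleaving with operations of other, possibly corrupt, users. I would prove this by scanning every API call of \Fchannel for how it touches $\PendLlist$. The only call that clears an entry is $\acceptLedgerPayment_{w}(\ledgerId')$, whose guard requires $\PendLlist(\ledgerId') = (w, \cdot)$; a call issued by any $w \neq u$ therefore fails on a $u$-owned entry and leaves it intact, and no call ever rewrites the amount field of an existing entry. The remaining calls that write $\PendLlist$, namely $\removeDeposit$ and $\settleChannel$, only ever create entries under a freshly incremented ledger-id counter, so they cannot overwrite an existing identifier. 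This same monotone-counter observation serves twice more: it gives $\RET{} \cap \innerPendingLedgerPayments{t} = \emptyset$ (the $\OPS{1},\OPS{2}$ identifiers are minted strictly after time $t$), and it guarantees that the distinct calls in $\OPS{3}$ target distinct entries, so no call consumes an entry that another still needs.

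Combining these, for each $\ledgerId \in \RET{} \cup \innerPendingLedgerPayments{t}$ the entry $\PendLlist(\ledgerId) = (u, \amount)$ still exists when $u$ issues $\acceptLedgerPayment_u(\ledgerId)$; the guard holds and the call returns $\success$, as required. The subtle points to pin down are that $u$ is well-behaved, so it issues exactly the intended calls and no spurious ones, and that a corrupt user can neither forge a $u$-recipient pending entry nor siphon one off — both of which follow purely from the recipient-matching guard of $\acceptLedgerPayment$ rather than from any cryptographic assumption, since in the ideal world the only gateway to $\PendLlist$ is \Fchannel's API.
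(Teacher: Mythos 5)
Your proposal is correct and follows essentially the same route as the paper: reduce the claim to the guard of \acceptLedgerPayment and invoke Lemma~\ref{lem:correctnessAlgRemoveOps} and Lemma~\ref{lem:correctnessAlgSettleOps} to show the corresponding entries exist in \PendLlist. The paper's proof is a single sentence that stops there; your additional persistence argument (the recipient-matching guard prevents any $w \neq u$ from consuming a $u$-owned entry, and the monotone ledger-id counter prevents overwriting and gives $\RET{} \cap \innerPendingLedgerPayments{t} = \emptyset$) fills in exactly the interference analysis the paper leaves implicit, so it is a strictly more careful rendering of the same proof rather than a different one.
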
	
	\begin{proof}
		All the pending payment ids of the calls described in~\OPS{3} are in \Pendlist as proved in Lemma~\ref{lem:correctnessAlgRemoveOps} and Lemma~\ref{lem:correctnessAlgSettleOps}, thus when~$u$ places \acceptLedgerPayment with those pending \pendingPaymentId, \Fchannel will return \success.
	\end{proof}
\end{enumerate}

\begin{definition}[balance security algorithm] \label{def:balanceAlg}
	The suffix for $u$ in order to receive \balanceU{u} in the prefix $\sigma_t$ is:
	\begin{equation*}
		\OPS{u} \triangleq \left( \OPS{1}, \OPS{2}, \OPS{3}\right)
	\end{equation*}
\end{definition}


We now move to show that by invoking the calls in~\Cref{def:balanceAlg}, any user can receive her perceived balance, thus proving~\Cref{thm:balanceCorrectness}.

\newcommand{\innerChannelBalance}[1]{\ensuremath{\textit{innerChannelBalance}_{#1}(u)\xspace}}

Let \innerChannelBalance{t} be the set of all the capacities of all the open channels user $u$ has with other users in the system at a given time~$t$:
\begin{multline*}
	\innerChannelBalance{t} = \{\textit{amountU} | \\
	 \exists \Cuv, v, \textit{amountV}, \isSymmetric: \\
	(\Clist(\Cuv) = (u, v, \textit{amountU}, \textit{amountV}, \isSymmetric) \\
	\vee \Clist(\Cuv) = (v, u, \textit{amountV}, \textit{amountU}, \isSymmetric) ), \\
	\PendClist(\Cuv) \ne u  \}
\end{multline*}

\newcommand{\innerChannelBalanceSum}[1]{\ensuremath{\textit{innerChannelBalanceSum}_{#1}(u)\xspace}}
Let \innerChannelBalanceSum{t} be the sum of all the capacities in \innerChannelBalance{t}:
$$
\innerChannelBalanceSum{t} = \sum_{\mathclap{\amount \in \innerChannelBalance{t}}} \amount
$$

\newcommand{\innerDepositBalance}[1]{\ensuremath{\textit{innerDepositBalance}_{#1}(u)\xspace}}

Let \innerDepositBalance{t} be the set of the amounts of deposits that user $u$ added and not removed, and are not associated with any channel at a given time $t$, i.e.:
\begin{multline*}
\innerDepositBalance{t} = \{ \amount | \\
\exists \depositId, \isSymmetric: \\ 
\Dlist(\depositId) = (\amount, u, \bot, \isSymmetric) \}
\end{multline*}

\newcommand{\innerDepositBalanceSum}[1]{\ensuremath{\textit{innerDepositBalanceSum}_{#1}(u)\xspace}}
Let \innerDepositBalanceSum{t} be the sum of all amounts in \innerDepositBalance{t}:
$$
\innerDepositBalanceSum{t} = \sum_{\mathclap{\amount \in \innerDepositBalance{t}}} \amount
$$

\newcommand{\innerPendingLedgerOps}[1]{\ensuremath{\textit{innerPendingLedgerOpsSum}_{#1}(u)\xspace}}

Let \innerPendingLedgerOps{t} be a set of all pending ledger operations from user $u$ at a given time $t$, i.e.:
\begin{multline*}
	\innerPendingLedgerOps{t} = \\
	=\{ \amount | \exists \ledgerId: \\
	\PendLlist(\ledgerId) = (u, \amount) \}
\end{multline*}

\newcommand{\innerPendingLedgerOpsSum}[1]{\ensuremath{\textit{innerPendingLedgerOpsSum}_{#1}(u)\xspace}}
Let \innerPendingLedgerOpsSum{t} be the sum of all the amount of the ledger payment operations in \innerPendingLedgerOps{t}:
$$
\innerPendingLedgerOpsSum{t} = \sum_{\mathclap{\amount \in \innerPendingLedgerOps{t}}} \amount
$$

\newcommand{\innerPendingDeposits}[1]{\ensuremath{\textit{innerPendingDeposits}_{#1}(u)\xspace}}

Let \innerPendingDeposits{t} be a set of all amounts of deposits in the process of being dissociated from a channel, i.e., all deposits in \PendDlist:
\begin{multline*}
	\innerPendingDeposits{t} = \\
	= \{ \amount | \exists \depositId, \Cuv, \isSymmetric: \\
	\PendDlist(\depositId) = (u, \isSymmetric), \\
	\Dlist(\depositId) = (\amount, u, \Cuv, \isSymmetric) \}
\end{multline*}

\newcommand{\innerPendingDepositsSum}[1]{\ensuremath{\textit{innerPendingDepositsSum}_{#1}(u)\xspace}}
Let \innerPendingDepositsSum{t} be the sum of all the amounts of pending deposits in \innerPendingDeposits{t}:
$$
\innerPendingDepositsSum{t} = \sum_{\mathclap{\amount \in \innerPendingDepositsSum{t}}} \amount
$$

\begin{definition}
	Let \innerBalanceU{t} be the balance of $u$, as defined by the internal state of \Fchannel at a given time $t$, i.e.:
	\begin{multline*}
		\innerBalanceU{t} = \LuT{t} + \innerChannelBalanceSum{t} + \\
		+ \innerDepositBalanceSum{t} + \\
		+\innerPendingLedgerOpsSum{t} +  \\
		 + \innerPendingDepositsSum{t}
	\end{multline*}
\end{definition}

We begin by showing, that at any given time~$t$ the state balance of~$u$ is the same as the perceived balance as defined in~\Cref{def:idealPerceivedBalance}.

\begin{proposition}
	At any execution~$\sigma$ the inner balance of~$u$ and the perceived balances are equal, i.e.:
	$$
	\balanceU{u} = \innerBalanceU{t}
	$$
\end{proposition}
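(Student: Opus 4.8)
The plan is to prove the identity by induction on the number of events (API calls to \Fchannel) in the prefix $\sigma_t$, treating it as an invariant of the ideal functionality. The guiding intuition is \emph{conservation of funds}: at every instant $u$'s money sits in exactly one of the five buckets counted on the right-hand side --- the ledger balance \LuT{t}, the open-channel capacity $\innerChannelBalanceSum{t}$, the unassociated deposits $\innerDepositBalanceSum{t}$, the pending ledger payments $\innerPendingLedgerOpsSum{t}$, and the deposits mid-dissociation $\innerPendingDepositsSum{t}$ --- while the left-hand side tracks the same quantity through net payments. Each API call either shuffles value between buckets (leaving the inner balance fixed) or, in the case of \pay and \receivePayment, exchanges value with the payment counters by exactly matching amounts. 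The base case at $t=0$ is immediate: no call has occurred, so $\receiveSumU[0] = \paymentsSumU[0] = 0$ and the four auxiliary sums are empty, leaving both $\balanceU[0]{u}$ and $\innerBalanceU{0}$ equal to \LuT{0}.

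For the inductive step I would assume the identity at time $t$ and analyze the next successful event $\event{k} = (\op{k}, \ret{k})$, noting that failing calls touch no state and change neither side. The observation that collapses the case analysis is that the perceived balance moves \emph{only} on \pay (incrementing the caller's $\paymentsSumU$ by \amount) and on \receivePayment (incrementing the recipient's $\receiveSumU$ by \amount); for every other call it therefore suffices to check that the five inner sums have net change zero. This is routine for the bucket-shuffling calls: \addDeposit moves \amount from \LuT{t} to $\innerDepositBalanceSum{t}$, \acceptLedgerPayment moves it from $\innerPendingLedgerOpsSum{t}$ back to \LuT{t}, \removeDeposit from $\innerDepositBalanceSum{t}$ to $\innerPendingLedgerOpsSum{t}$, and \associateDeposit from $\innerDepositBalanceSum{t}$ into $u$'s channel capacity $\innerChannelBalanceSum{t}$; meanwhile \openChannel, \acceptChannelOpen, \acceptAssociateDeposit and \acceptDissociate only flip \isSymmetric flags on which no sum depends, and channels are born with zero capacity. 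For \pay by $u$ the caller's capacity drops by \amount while the uncounted \Pendlist absorbs it, matching the \amount drop in the perceived balance; symmetrically \receivePayment by $u$ raises the recipient's capacity by \amount to match the perceived rise, so in-transit funds are debited from the payer and later credited to the payee with no double counting.

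The delicate cases, and the main obstacle, are the three-phase deposit dissociation and channel settlement, where value crosses several buckets simultaneously. For \dissociateDeposit I would confirm that the \amount leaving $u$'s channel capacity reappears in $\innerPendingDepositsSum{t}$ through the new \PendDlist entry, and for \ackDissociate that it then leaves $\innerPendingDepositsSum{t}$ and re-enters $\innerDepositBalanceSum{t}$ as an unassociated deposit. The hardest bookkeeping is \settleChannel: the caller's side $\amountU$ is zeroed, every deposit still mid-dissociation is swept out of $\innerPendingDepositsSum{t}$, and a single pending ledger payment of $\amountU + \textit{pendingDepositsSum}$ is created. The crux is verifying that the simultaneous drop in $\innerChannelBalanceSum{t}$ and $\innerPendingDepositsSum{t}$ exactly equals the rise in $\innerPendingLedgerOpsSum{t}$, using the fact that after the first party settles the channel remains counted (since $\PendClist(\Cuv) \neq u$ excludes only the settling user's own tally) but with the caller's side now zero, so clearing the entry on the second settlement loses no value. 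Once this conservation identity is established for settlement, the remaining cases close by the same bucket-tracking argument and the induction goes through.
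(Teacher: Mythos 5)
Your proposal follows essentially the same route as the paper's own proof: induction on the events of $\sigma_t$, a base case at $t=0$ where all four auxiliary sums vanish, the observation that failing calls change nothing and that the perceived balance moves only on \textsf{pay} and \textsf{receivePayment}, and a per-call case analysis showing every other successful call merely shuffles value between the five buckets with zero net change. Your handling of the delicate cases (the in-transit funds sitting in the uncounted pending-payments list during \textsf{pay}, and \textsf{settleChannel}'s simultaneous transfer of the channel balance plus mid-dissociation deposits into the pending ledger payments) matches---and is in places more explicit than---the paper's argument, so the proposal is correct and essentially identical in approach.
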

\begin{proof}
	We will prove this by induction on the execution $\sigma$.
	
	\mypar{Inductive base} In the beginning of the execution $\sigma$ at $t=0$ all the internal variables of \Fchannel are~$\bot$, \ie, for any internal variable $f$ of \Fchannel, and for any~$x$: $f(x) = \bot$.
	Thus:
	\begin{flalign*}
		\innerChannelBalance{0} = \emptyset \\
		\innerDepositBalance{0} = \emptyset \\
		\innerPendingLedgerOps{0} = \emptyset \\
		 \innerPendingDepositsSum{0} = \emptyset
	\end{flalign*}
	Which means that:
	\begin{flalign*}
		\innerChannelBalanceSum{0} &= 0 \\
		\innerDepositBalanceSum{0} &= 0 \\
		\innerPendingLedgerOpsSum{0} &= 0 \\
		\innerPendingDepositsSum{0} &= 0
	\end{flalign*}
	And:
	$\innerBalanceU{0} = \balanceU[0]{u}$

	\mypar{Inductive step} Let us assume that in step $i < t$: $\innerBalanceU{i} = \balanceU[i]{u}$.
	We show that after the next event at step $i + 1$: $\innerBalanceU{i+1} = \balanceU[i+1]{u}$. 
	
	First we note that the inner balance as defined above does not change if \Fchannel returns \fail for any call, as the if statement in the beginning of each of its calls will cause it to return \fail and change nothing, and in particular, not change \LuT{i}.
	
	We go over all the event types, each of them a result of a call to \Fchannel, and show that for each of them, under the induction assumption for step $i$, it also holds for step $i+1$, i.e.:
	
	$
	\innerBalanceU{i+1} = \balanceU[i+1]{u}
	$
	
	\begin{description}[font=$\bullet \;$]
		\item[\getLedgerBalance] This algorithm does not affect any variable of \Fchannel.
		
		\item[\acceptLedgerPayment] This algorithm takes \amount from \PendLlist at time $i$ and adds it to  \LuT{i+1}, then it removes $\PendLlist(\ledgerId)$, i.e.:
		\begin{multline*}
		 \innerPendingLedgerOpsSum{i+1} = \\
		 \innerPendingLedgerOpsSum{i} - \textit{amount} 
		 \end{multline*}		
		$$ \LuT{i+1} = \LuT{i} + \amount $$
		\begin{multline*}
		 \innerBalanceU{i} = \innerBalanceU{i+1} = \\
		 =\balanceU[i]{u} = \balanceU[i+1]{u} 
		 \end{multline*}
		
		\item[\addDeposit] This algorithm deducts \amount from \LuT{i} and adds an entry $\Dlist(\depositEntry)$ with \amount, i.e.:
		\begin{multline*}
		 \innerDepositBalanceSum{i+1} = \\
		 =\innerDepositBalanceSum{i} + \amount
		\end{multline*}
		$$ \LuT{i+1} = \LuT{i} - \amount $$		
		\begin{multline*}
		\innerBalanceU{i} = \innerBalanceU{i+1} = \\
		= \balanceU[i]{u} = \balanceU[i+1]{u}
		\end{multline*}
		
		\item[\removeDeposit] This algorithm removes $\Dlist(\depositId)$ and adds amount to $\PendLlist(\ledgerId)$, i.e.:
		\begin{multline*} \innerPendingLedgerOpsSum{i+1} = \\
		= \innerPendingLedgerOpsSum{i} + \amount 
		\end{multline*}
		\begin{multline*}
		\innerDepositBalanceSum{i+1} = \\
		=\innerDepositBalanceSum{i} - \amount
		\end{multline*}		
		\begin{multline*}
		\innerBalanceU{i} = \innerBalanceU{i+1} = \\
		= \balanceU[i]{u} = \balanceU[i+1]{u}
		\end{multline*}
		
		\item[\openChannel] This algorithm does not affect the inner balance and therefore at step $i+1$ the inner and perceived balances are the same as in step $i$.	
		
		\item[\acceptChannelOpen] This algorithm does not affect the inner balance and therefore at step $i+1$ the inner and perceived balances are the same as in step $i$.	
		
		\item[\associateDeposit] This algorithm changes the third argument of $\Dlist(\depositId)$ to \Cuv, which logically means that the deposit \depositId is now associated with channel \Cuv.
		
		By doing so, it moves \amount from \innerDepositBalanceSum{i} to \innerChannelBalanceSum{i+1}, i.e.:
		\begin{multline*}
		\innerDepositBalanceSum{i+1} =\\
		= \innerDepositBalanceSum{i} - \amount
		\end{multline*}
		\begin{multline*}
		\innerChannelBalanceSum{i+1} = \\
		= \innerChannelBalanceSum{i} + \amount
		\end{multline*}
		\begin{multline*}
		\innerBalanceU{i} = \innerBalanceU{i+1} = \\
		= \balanceU[i]{u} = \balanceU[i+1]{u}
		\end{multline*}
		
		\item[\acceptAssociateDeposit] This algorithm does not affect the inner balance and therefore at step $i+1$ the inner and perceived balances are the same as in step $i$.
		
		\item[\dissociateDeposit] This algorithm deducts \amount from the balance of $u$ at time $i$ in channel \Cuv and adds it to $\PendDlist(\depositId)$ at $i+1$, i.e.:
		\begin{multline*}
		\innerChannelBalanceSum{i+1} = \\
		= \innerChannelBalanceSum{i} - \amount
		\end{multline*}
		\begin{multline*} 
		\innerPendingDepositsSum{i+1} = \\
		=\innerPendingDepositsSum{i} + \amount
		\end{multline*}
		\begin{multline*}
		\innerBalanceU{i} = \innerBalanceU{i+1} = \\
		= \balanceU[i]{u} = \balanceU[i+1]{u}
		\end{multline*}
		
		\item[\acceptDissociate] This algorithm does not affect the inner balance and therefore at step $i+1$ the inner and perceived balances are the same as in step $i$.
		
		\item[\ackDissociate] This algorithm removes the entry $\PendDlist(\depositId)$ and changes the third argument of $\Dlist(\depositId)$ to $\bot$, which logically means that the deposit is not associated with any channel:
		\begin{multline*}
		\innerPendingDepositsSum{i+1} = \\
		=\innerPendingDepositsSum{i} - \amount
		\end{multline*}
		\begin{multline*}
		\innerDepositBalanceSum{i+1} = \\
		= \innerDepositBalanceSum{i} + \amount
		\end{multline*}
		\begin{multline*}
		\innerBalanceU{i} = \innerBalanceU{i+1} = \\
		= \balanceU[i]{u} = \balanceU[i+1]{u}
		\end{multline*}
		
		\item[\pay] This algorithm deducts \amount from the balance of $u$ in channel \Cuv at time $i$, thus changing \innerBalanceU{i}, but it also deducts \amount from \balanceU[i]{u}, i.e.:
		\begin{multline*}\innerChannelBalanceSum{i+1} = \\
		= \innerChannelBalanceSum{i} - \amount
		\end{multline*}
		$$ \paymentsSumU[i+1] = \paymentsSumU[i] - \amount $$
		\begin{multline*}
		\innerBalanceU{i} = \innerBalanceU{i+1} = \\
		= \balanceU[i]{u} = \balanceU[i+1]{u}
		\end{multline*}

		\item[\receivePayment] This algorithm adds \amount to the balance of $u$ in channel \Cuv at time $i$, thus changing \innerBalanceU{i}, but it also adds \amount to \balanceU[i]{u}, i.e.:
		\begin{multline*}
		\innerChannelBalanceSum{i+1} = \\
		= \innerChannelBalanceSum{i} + \amount
		\end{multline*}
		$$ \receiveSumU[i+1] = \receiveSumU[i] + \amount $$
		\begin{multline*}
		\innerBalanceU{i} = \innerBalanceU{i+1} = \\
		= \balanceU[i]{u} = \balanceU[i+1]{u}
		\end{multline*}

		\item[\settleChannel] This algorithm takes the current balance of $u$ in channel \Cuv, and the sum of the deposits in the process of dissociation and deducts the total amount from $u$ ands it as a pending ledger operation, i.e.:
		$$ \innerChannelBalance{i+1} = 0 $$
		$$ \innerPendingDepositsSum{i+1} = 0 $$
		\begin{multline*}
			\innerPendingLedgerOpsSum{i+1} = \\ = \innerPendingLedgerOpsSum{i} + \\
			+ \innerChannelBalanceSum{i} + \\ 
			+ \innerPendingDepositsSum{i}
		\end{multline*}
		\begin{multline*}
		\innerBalanceU{i} = \innerBalanceU{i+1} = \\
		= \balanceU[i]{u} = \balanceU[i+1]{u}
		\end{multline*}

	\end{description}
	This concludes the inductive step, we proved that $\innerBalanceU{t} = \balanceU[t]{u}$ for any time $t$ during the execution $\sigma$.
\end{proof}

Next we show that for any open channel \Cuv the sum of the balances on both sides of the channel is less or equal to the sum of the deposits associated with the channel.

We first denote by \depositsChannel all the deposits that at a given time $t$ are associated with a given channel \Cuv:
\begin{multline*}
	\depositsChannel = \{ \amount | \exists \depositId, \isSymmetric, u: \\
	\Dlist(\depositId) = (\amount, u, \Cuv, \textit{isSymetric}), \\
	 \PendDlist(\depositId) = \bot \}
\end{multline*}

We denote by \depositsChannelSum the sum of the amounts of the deposits that are associated at a given time $t$ with channel \Cuv:
$$
\depositsChannelSum = \sum_{\mathclap{\amount \in \depositsChannel}} \amount
$$

We denote by \channelCapacity the sum of the capacities of the two users $u$ and $v$ which have a channel between them with channel id \Cuv at time $t$, i.e., for channel entry $\Clist(\Cuv) = (u, v, \textit{amountU}, \textit{amountV}, \isSymmetric)$ we denote:
$$
\channelCapacity = \textit{amountU} + \textit{amountV}
$$

\begin{proposition}
	At any given time $t$ during the execution $\sigma$ the sum of the deposits associated with a given channel \Cuv is always greater or equal to the balances of both users of the channel $u$ and $v$, \ie:
	\begin{equation} \label{eq:channelBalancesEqualDeposits}
	\channelCapacity \le \depositsChannelSum
	\end{equation}
\end{proposition}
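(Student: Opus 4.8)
The plan is to mirror the inductive argument used for the preceding proposition: induct on the events of the execution $\sigma$ and show that the inequality is preserved at every step. The bare inequality $\channelCapacity \le \depositsChannelSum$ is, however, not directly inductive, because $\receivePayment$ raises $\amountV$ (hence $\channelCapacity$) without touching any deposit, and $\settleChannel$ zeroes a party's channel balance without removing the matching deposits. I would therefore strengthen the invariant to account for in-flight payments. Let $P_t(\Cuv)$ denote the sum of the amounts of all pending payments recorded for channel $\Cuv$ at time $t$, i.e.\ the sum of $\amount$ over entries $\Pendlist(\pendingPaymentId) = (\cdot, \Cuv, \amount)$. Since every pending amount is non-negative, $P_t(\Cuv) \ge 0$, so it suffices to prove the strengthened invariant
$$\channelCapacity + P_t(\Cuv) \le \depositsChannelSum,$$
which immediately yields the proposition.

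For the base case, at $t = 0$ every internal variable of \Fchannel is $\bot$, so there is no channel entry and the claim holds vacuously; the same holds immediately after $\openChannel$/$\acceptChannelOpen$, where all three quantities equal $0$. For the inductive step I would walk through each API call of \Fchannel (\Cref{alg:Fteechain}) and verify it cannot falsify the invariant, exactly as the previous proof enumerates the event types. The balance-neutral calls ($\getLedgerBalance$, $\acceptLedgerPayment$, $\addDeposit$, $\removeDeposit$, $\acceptChannelOpen$, $\acceptAssociateDeposit$, $\acceptDissociate$, $\ackDissociate$) change none of $\channelCapacity$, $P_t(\Cuv)$, or $\depositsChannelSum$ for an open channel, so the invariant is inherited. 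The interesting cases pair up. $\associateDeposit$ raises both $\channelCapacity$ and $\depositsChannelSum$ by $\amount$, while $\dissociateDeposit$ lowers $\channelCapacity$ by $\amount$ and simultaneously moves the deposit into \PendDlist, which the definition of $\depositsChannelSum$ excludes, lowering it by the same $\amount$; in both cases the slack $\depositsChannelSum - \channelCapacity - P_t(\Cuv)$ is unchanged. For $\pay$, $\channelCapacity$ drops by $\amount$ but $P_t(\Cuv)$ rises by $\amount$, leaving the left-hand side fixed; $\receivePayment$ is its exact inverse, raising $\channelCapacity$ by $\amount$ while removing the pending entry and thus dropping $P_t(\Cuv)$ by $\amount$, again leaving the left-hand side fixed. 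Hence all of these preserve even the equality form of the invariant.

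The main obstacle I expect is the $\settleChannel$ case, the only place where genuine slack is created and which branches on whether this is the first or second settlement of $\Cuv$. On a first settle, the caller's balance $\amountU$ is zeroed and its pending-dissociation deposits are deleted; I would argue that these deleted deposits are precisely the ones already in \PendDlist and therefore already excluded from $\depositsChannelSum$, so $\depositsChannelSum$ is unchanged while $\channelCapacity$ only decreases, making the inequality slacker and preserving it. On a second settle, the channel entry and all its remaining associated deposits are removed, so both $\channelCapacity$ and $\depositsChannelSum$ become $0$ and the now non-existent channel satisfies the claim vacuously. Care is needed here to (i)~handle the notational asymmetry by which the settling party is always bound to the first component of the channel entry, and (ii)~confirm that the ledger and \PendLlist bookkeeping performed by settle touches none of the quantities appearing in the invariant. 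Assembling these cases completes the induction, and since $P_t(\Cuv) \ge 0$ the strengthened invariant yields $\channelCapacity \le \depositsChannelSum$, as required.
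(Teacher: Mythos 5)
Your proposal is correct and follows the paper's overall template---induction over the events of $\sigma$ with a case-by-case analysis of every \Fchannel call---but the invariant you induct on is genuinely different from, and stronger than, the paper's, and this is a real improvement. The paper carries only the bare inequality $\channelCapacity \le \depositsChannelSum$ and, at the \receivePayment step, asserts that $\channelCapacity[i] + \amount \le \depositsChannelSum[i]$ on the grounds that the matching earlier \pay deducted \amount and ``no other call deducts \amount only from the right hand side''; that assertion does not follow from the paper's stated induction hypothesis alone, so the paper's step there is really an informal global argument about in-flight payments smuggled into a local induction. Your strengthened invariant $\channelCapacity + P_t(\Cuv) \le \depositsChannelSum$ internalizes exactly the missing bookkeeping: \pay and \receivePayment become mutually inverse transfers between channel capacity and the pending pool, the associate/dissociate pair shifts \amount on both sides simultaneously (your observation that \PendDlist membership already excludes a dissociating deposit from \depositsChannelSum matches the paper's handling), a first \settleChannel deletes only deposits already excluded and merely increases the slack, and a second settle is vacuous---so your induction closes locally at every step where the paper's needs a side argument. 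Two small remarks. First, your motivation is slightly off: \settleChannel zeroing a party's balance without deleting the matching associated deposits only enlarges $\depositsChannelSum - \channelCapacity$ and so never threatens the bare $\le$ (it would only break an equality version); \receivePayment is the sole reason the unstrengthened statement fails to be inductive. Second, both you and the paper implicitly assume channel identifiers are not recycled, so that $P_t(\Cuv) = 0$ when \openChannel succeeds; a stale \Pendlist entry referencing a reused \Cuv would break your base step and, for that matter, the proposition itself in the ideal functionality---a shared idealization, not a gap specific to your argument.
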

\begin{proof}
	We will prove this by induction on the execution $\sigma$:
	
	\mypar{Inductive base} At the initial step $0$: $\forall \Cuv: \Clist(\Cuv) = \bot$.
	Therefore, there are no open channels in system at all.
	
	\mypar{Inductive step} We assume that at step $i$ the induction assumption holds for all entries in \Clist, \ie,
	$$ \channelCapacity[0] \leq \depositsChannelSum[0]. $$
	We will prove that at step $i+1$ the proposition holds as well.
	
	Let us look at all the operations in \Fchannel.
	We note that the channel balance as defined above does not change if \Fchannel returns \fail for any call, as the if statement in the beginning of each of its calls will cause it to return \fail does not change any internal variable of \Fchannel.
	Thus, we go over all operations that return \success:
	\begin{description}[font=$\bullet \;$]
		\item[\getLedgerBalance] This call does not affect the balance of any channel in \Clist.
		
		\item [\acceptLedgerPayment] This call does not affect the balance of any channel in \Clist.
		
		\item [\acceptLedgerPayment] This call does not affect the balance of any channel in \Clist.
		
		\item [\addDeposit] This call does not affect the balance of any channel in \Clist.
		
		\item [\removeDeposit] This call does not affect the balance of any channel in \Clist.
		
		\item [\openChannel] In this call a new channel entry in $\Clist(\Cuv)$ is generated in the form of $(u, v, 0, 0, \bot)$.
		Thus, no deposit is associated at step $i$ with channel \Cuv, which means that $\depositsChannelSum[i+1] = \channelCapacity[i+1] = 0$.
		
		\item[\acceptChannelOpen] The only effect of this call on $\Clist(\Cuv)$ is that it changes the last argument of the entry $\Clist(\Cuv)$ from $\bot$ to $\top$.
		This means that if at step~$i$ \Cref{eq:channelBalancesEqualDeposits} holds then it also holds at step $i+1$.
		
		\item[\associateDeposit] In any successful call $\associateDeposit(\Cuv, \depositId)$, \Fchannel adds the deposit \amount to either \textit{amountU} or \textit{amountV}, and changes the third argument of $\Dlist(\depositId)$ to \Cuv, i.e., at step $i+1$: the \amount of \depositId will be in \depositsChannel[i+1], and $\depositsChannelSum[i+1] = \depositsChannelSum[i] + \amount$ and $\channelCapacity[i+1] = \channelCapacity[i] + \amount \Rightarrow \channelCapacity[i+1] \le \depositsChannelSum[i+1]$.
		
		\item [\acceptAssociateDeposit] This call does not affect the balance of any channel in \Clist as it only changes the last argument of $\Dlist(\depositId)$ from $\bot$ to $\top$.
		
		\item [\dissociateDeposit] In this case \amount is deducted from either \textit{amountU} or \textit{amountV}, and $\PendDlist(\depositId)$ entry is generated, i.e., $\depositsChannelSum[i+1] = \depositsChannelSum[i] - \amount$ and $\channelCapacity[i+1] = \channelCapacity[i] - \amount \Rightarrow \channelCapacity[i+1] \le \depositsChannelSum[i+1]$.
		
		\item [\acceptDissociate] This algorithm only changes $\PendDlist(\depositId) = (u, \bot)$ to $(u, \top)$ and does not affect the balance of any channel in \Clist.
		
		\item [\ackDissociate] This algorithm removes $\PendDlist(\depositId)$ and also changes the third argument of $\Dlist(\depositId)$ to $\bot$, thus the deposit was not in \depositsChannel[i] at step $i$ and is not in \depositsChannel[i+1] at step $i+1$.
		
		\item [\pay] This call deducts \amount from either \textit{amountU} or \textit{amountV} in $\Clist(\Cuv)$. 
		In addition, in this call \Fchannel adds a new entry to \Pendlist with a new generated \pendingPaymentId, i.e., $\Pendlist(\pendingPaymentId) = (v, \Cuv, \amount$.
		This means that $\channelCapacity[i+1] = \channelCapacity[i] - \amount$, $\depositsChannel[i+1] = \depositsChannel[i] \Rightarrow \channelCapacity[i+1] < \depositsChannel[i+1]$
		
		\item [\receivePayment] In this call, \Fchannel adds the \amount in $\Pendlist(\pendingPaymentId)$ to \textit{amountU} or \textit{amountV} in $\Clist(\Cuv)$, and then removes $\Pendlist(\pendingPaymentId)$.
		
		In order for the call \receivePayment to return \success, there has to be at step $i$ an entry $\Pendlist(\pendingPaymentId)$ in the form of $(v, \Cuv, \amount)$.
		The only call in which \Fchannel adds a new entry to \Pendlist is \pay.
		This \pay call needs to be called by either one of the parties in channel \Cuv and \amount is deducted in that call from \textit{amountU} or \textit{amountV}.
		
		In addition, no other call deducts \amount only from the right hand side of~\Cref{eq:channelBalancesEqualDeposits}, which means that, $\channelCapacity[i] + \amount \le \depositsChannelSum[i]$, $\channelCapacity[i+1] = \channelCapacity[i] + \amount \Rightarrow \channelCapacity[i+1] \le \depositsChannelSum[i+1]$.
		
		\item [\settleChannel] This calls settles the channel \Cuv, and can be called twice:
		\begin{itemize}
			\item If this is the second time \settleChannel is called, then $\Clist(\Cuv)$ is removed, thus \channelCapacity[i+1] is undefined at $i+1$.
			\item If this is the first call to \settleChannel, then after the calls ends at step $i+1$, $\Clist(\Cuv)$ is updated s.t. $\textit{amountU} = 0$. 
			
			In addition, all the deposits s.t. $\exists \depositId, \isSymmetric: \PendDlist(\depositId) = (u, \isSymmetric)$ are removed from~\Dlist.
			The only deposits which have an entry in \PendDlist are deposits with \depositId s.t. $\dissociateDeposit_u(\depositId)$ was called during $\sigma_i$, i.e., \amount was already deducted from \channelCapacity[i].
			
			This means that $\channelCapacity[i+1] = \channelCapacity[i] - \textit{amountU} = \textit{amountV}$, $\depositsChannelSum[i+1] = \depositsChannelSum[i]$, thus, $\channelCapacity[i+1] \le \depositsChannelSum[i+1]$.
		\end{itemize}
	\end{description}
\end{proof}

Finally, we prove that at any given time~$t$, any user~$u$ has the ability to receive \innerBalanceU{t} by preforming the operations of the balance algorithm~\ref{def:balanceAlg}.

\begin{proposition}
	If a user $u$ preforms the operations described in balance security algorithm~\ref{def:balanceAlg} as a suffix to the prefix $\sigma_t$, interleaved with operation of other users, then for any time $t'' \geq t'$ such that $\op{t''} = \textsf{getLedgerBalance}(u)$ and ${\ret{t''} = (\textit{success, amount})}$, then $\amount = \innerBalanceU{t}$. 
\end{proposition}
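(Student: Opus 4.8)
The plan is to follow the three-phase structure of the balance-correctness algorithm (Definition~\ref{def:balanceAlg}) and track, phase by phase, how each nonledger component of the inner balance $\innerBalanceU{t}$ is funnelled onto the ledger, relying on the three success lemmas to guarantee that every call $u$ issues returns $(\success, \ledgerId)$.

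First I would invoke Lemma~\ref{lem:correctnessAlgRemoveOps}: every $\removeDeposit$ call in $\OPS{1}$ succeeds, and by inspecting the $\removeDeposit$ routine each such call moves the deposit's amount out of $\Dlist$ into a fresh $\PendLlist$ entry. Summing over $\innerDeposits{t}$, the pending-ledger total grows by exactly $\innerDepositBalanceSum{t}$ while the unassociated-deposit component drops to zero. Next, using Lemma~\ref{lem:correctnessAlgSettleOps}, every $\settleChannel$ call in $\OPS{2}$ succeeds; the first such call sweeps all of $u$'s dissociating deposits (those with a $\PendDlist$ entry) into its pending-ledger payment and clears them, and each call contributes $u$'s channel balance $\textit{amountU}$. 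Hence $\OPS{2}$ creates pending-ledger payments summing to $\innerChannelBalanceSum{t} + \innerPendingDepositsSum{t}$ and zeroes both of those components. Finally, Lemma~\ref{lem:correctnessLemRetValuesOps} guarantees every $\acceptLedgerPayment$ in $\OPS{3}$ succeeds; since $\OPS{3}$ ranges over $\RET{} \cup \innerPendingLedgerPayments{t}$, it accepts precisely the payments created by $\OPS{1}$ and $\OPS{2}$ together with the pre-existing pending payments summing to $\innerPendingLedgerOpsSum{t}$, each acceptance adding its amount to $\Llist(u)$.

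Adding these contributions to the ledger balance $\LuT{t}$ present at the snapshot gives a final balance of $\LuT{t} + \innerDepositBalanceSum{t} + \innerChannelBalanceSum{t} + \innerPendingDepositsSum{t} + \innerPendingLedgerOpsSum{t}$, which is exactly $\innerBalanceU{t}$ by definition; combined with the earlier proposition that $\innerBalanceU{t} = \balanceU[t]{u}$, this yields the claim. I would also note that once $\OPS{u}$ completes, no nonledger component of $\innerBalanceU{t}$ remains nonzero and $u$ issues no further calls, so $\Llist(u)$ stays fixed for every $t'' \ge t'$, which is what lets the statement quantify over all $t'' \ge t'$.

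The main obstacle is the interleaving argument: I must show that concurrent operations of other, possibly Byzantine, users cannot make any of $u$'s calls return $\fail$, cannot decrease the extracted amount, and cannot silently inflate it. The key observations are that (i)~only $u$ can $\removeDeposit$ a deposit it owns, only $u$ can $\settleChannel$ its own side, and only $u$ can $\acceptLedgerPayment$ on its own pending entries, so the \emph{if}-guards of $u$'s calls stay satisfied throughout the suffix; (ii)~another party settling a shared channel first only zeroes \emph{its own} side and leaves $u$'s $\textit{amountU}$ intact, so the second ($u$'s) settle still extracts $u$'s full balance, with the proposition $\channelCapacity \le \depositsChannelSum$ ensuring the backing deposits always cover it; and (iii)~$u$'s perceived balance is invariant over the suffix because $u$ issues no $\pay$ or $\receivePayment$, and an incoming pending payment in $\Pendlist$ is counted by neither $\balanceU[t]{u}$ nor $\innerBalanceU{t}$ until $u$ accepts it. Establishing these noninterference facts across all fourteen $\Fchannel$ calls is the crux; the remaining arithmetic is the routine bookkeeping sketched above.
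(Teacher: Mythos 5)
Your proposal follows the paper's own proof essentially verbatim: the same three-phase bookkeeping over \OPS{1}, \OPS{2}, \OPS{3}, invoking Lemmas~\ref{lem:correctnessAlgRemoveOps}, \ref{lem:correctnessAlgSettleOps} and~\ref{lem:correctnessLemRetValuesOps} to funnel \innerDepositBalanceSum{t}, then $\innerChannelBalanceSum{t} + \innerPendingDepositsSum{t}$, then the pending-ledger total onto \Llist, arriving at the same identity $\LuT{t_3} = \innerBalanceU{t}$. If anything, your explicit handling of the interleaving with other users' calls (the noninterference of their operations with $u$'s guards and amounts) and of why \LuT{t''} stays fixed for all $t'' \ge t'$ is more careful than the paper, which leaves both points implicit.
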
 
\begin{proof}
	Let us look at the for sets of operations consisting the balance security algorithm~(\cref{def:balanceAlg}):
	${\OPS{u} = \left( \OPS{1}, \OPS{2}, \OPS{3} \right)}$.
	\begin{description}[font=$\bullet \;$]
		\item [\innerDepositBalance{t}] Let us look at \OPS{1}.
		This set consists of $\textsf{removeDeposit}_u$ calls to \Fchannel for each $\depositId \in \innerDeposits{t}$.
		\innerDepositBalanceSum{t} is defined as the sum of all those deposits.
		We proved in Lemma~\ref{lem:correctnessAlgRemoveOps} that \Fchannel returns $(\success, \ledgerId)$ for all these calls, and for each call adds the deposit amount \amount to $\PendLlist(\depositId)$.
		This means that at time $t_1$ when all calls in \OPS{1} have been called then:
		\begin{multline*}
		\innerDepositBalanceSum{t_1} = 0 \\
		\innerPendingLedgerOpsSum{t_1} = \\ 
		= \innerPendingLedgerOpsSum{t} + \\
		+ \innerDepositBalanceSum{t}
		\end{multline*}

		\item [\innerChannelBalance{t}, \innerPendingDeposits{t}] Let us look at \OPS{2}.
		This set consists of $\textsf{settleChannel}_u(\Cuv)$ calls for each channel $\Cuv \in \innerChannels{t}$.
		\innerChannelBalanceSum{t} is defined as the sum of all $u$'s open channels' balances, and \innerPendingDeposits{t} is defined as all the deposits in the process of dissociation from a channel.
		We proved in~\Cref{lem:correctnessAlgSettleOps} that \Fchannel returns  $(\success, \ledgerId)$ for all these calls, and for each call adds $u$'s channel balance \textit{amountU} to $\Pendlist(\depositId)$.
		
		In addition \OPS{2} does not change any existing entries in \PendLlist,  therefore not changing \innerPendingLedgerOpsSum{t_1}.	
		Thus, at time $t_2$ when all calls in \OPS{2} have been completed:
		\begin{multline*}
		 \innerChannelBalanceSum{t_2} = \\
		 = \innerPendingDepositsSum{t_2} = 0
		\end{multline*}
		\begin{multline*}
			\innerPendingLedgerOpsSum{t_2} = \\
			 = \innerPendingLedgerOpsSum{t_1} + \\
			+ \innerChannelBalanceSum{t} + \\ + \innerPendingDepositsSum{t}
		\end{multline*}

		\item [\innerPendingLedgerOpsSum{t}] This set consists of $\textsf{acceptLedgerPayment}_u$ for each ledger payment ids \ledgerId that were the return values of the operations in $\OPS{1} \cup \OPS{2}$ in addition to all pending ledger operatios in \innerPendingLedgerPayments{t}.
		This means that at a time $t_3$ when the calls in \OPS{3} finish successfully (as proved in~\cref{lem:correctnessLemRetValuesOps}):
		\begin{multline*}
			\innerPendingLedgerOpsSum{t_3} = \\
			= \innerPendingLedgerOpsSum{t_2} - \\
			- \innerChannelBalanceSum{t} - \\
			- \innerDepositBalanceSum{t} = 0
		\end{multline*}
		\begin{multline*}
			\LuT{t_3} = \LuT{t} + \\
			+ \innerChannelBalanceSum{t} + \\
			+ \innerDepositBalanceSum{t} + \\
			+ \innerPendingDepositsSum{t} + \\
			+ \innerPendingLedgerOpsSum{t}
		\end{multline*}
		
	\end{description}
	
	We note that $t_3$ is the time after all the operations in $\OPS{u}$ have been invoked and returned successfully, and that $\LuT{t_3} = \innerBalanceU{t}$.
	Therefore, if any user calls $\textsf{getLedgerBalance}(u)$, \Fchannel will return $(\success, \amount)$ such that $\amount = \innerBalanceU{t}$.
\end{proof}

We proved that at any point in time a user~$u$ can preform a series of calls to \Fchannel and receive her inner balance.
Since the inner balance is always equal to~$u$'s perceived balance and we showed that the ability of~$u$ to preform these operations fo not affect the perceived balance of other users in the system then any user can choose to preform these operations and receive their balance.

We recall~\Cref{thm:balanceCorrectness}:

\balanceCorrectness*

This concludes our proof to the theorem, and since the ideal-world and the real-world are indistinguishable to any external environment~\env then \sys achieves balance security.

\subsection{Multi-hop payments}
\label{app:formalProof:chainPayments}

Here we show that multi-hop payments satisfy balance security.
As long as a multi-hop payment is not completed, the perceived balance of a user might be either post-payment or pre-payment (see~\Cref{sec:chains}).
Thus, we define the perceived balance of a user in a multi-hop payment between~$u$ and~$v$, where~$u$ is routing a payment of \amount to~$v$.
For execution trace~$\sigma$ let:
$t_1$ be the time in $\sigma$ at which~$u$ enters stage \stageLocked of the protocol;
$t_2 > t_1$ be the time in $\sigma$ at which~$v$ enters stage \stageLocked;
$t_3 > t_2$ the time in $\sigma$ at which~$v$ enters stage \stageIdle; and
$t_4 > t_3$ the time in $\sigma$ at which~$u$ ends the protocol and enters stage \stageIdle. See~\Cref{fig:teechainCommunication}.

The users' \emph{perceived balances} are as follows:
For $u$, the perceived balance is: before~$t_1$ as if \amount was not paid; after~$t_4$ as if \amount was paid; between~$t_1$ and~$t_4$ either option is acceptable.
For $v$, the perceived balance is: before~$t_2$ as if \amount was not paid; after~$t_3$ as if \amount was paid; between~$t_2$ and~$t_3$ either option is acceptable. The perceived balance of any intermediate party in the multi-hop payment does not change.

We prove~\Cref{thm:balanceCorrectness} by showing that every node~\node in a multi-hop payment (including~$u$ and~$v$) can unilaterally reclaim their perceived balance at any point in time.
We further show that if~\node settles, then all the channels of the multi-hop payment will always consistently settle the in either the pre-payment or post-payment state. 
Note that single channel payments do not interfere with multi-hop payments, as all the channels in the multi-hop payment are locked~(\cref{sec:chains}). 


\mypar{Stage: \stageIdle} If~\node is in stage
\stageIdle, then all other nodes of the payment are either in stage \stageIdle or
\stageLocked. Node \node and all other nodes can only obtain the
pre-payment settlement transaction and subsequently stop the protocol.
In this stage, the perceived balance of both $u$ and~$v$ reflects the pre-payment state, thus satisfying balance security. 


\mypar{Stage: \stageLocked} If~\node is in stage~\stageLocked, all other
nodes are either
\begin{enumerate*}[(i)]
	\item in stage \stageIdle and in stage \stageLocked, or
	\item in stage \stageLocked and in stage \stageSigned
\end{enumerate*}.
All nodes can only settle their channels at the pre-payment state.

\mypar{Stage: \stageSigned} If~\node is in stage~\stageSigned, all nodes
are either
\begin{enumerate*}[(i)]
	\item in stage \stageLocked and in stage \stageSigned, or
	\item in stage \stageSigned and in stage \stagePromisedA
\end{enumerate*}. Case (i): If any node ejects, it settles its channels in the pre-payment state. This prevents progress and no node will reach the \stagePromisedA stage. Node~\node can then similarly eject and settle its channels in the
pre-payment state.
Case (ii): Any node in the \stagePromisedA stage might eject with \signedChainSettleTx. In this case, all channels will be settled
in post-payment state. 

\mypar{Stage: \stagePromisedA} If~\node is in stage~\stagePromisedA, all
nodes are either
\begin{enumerate*}[(i)]
	\item in stage \stageSigned and in stage \stagePromisedA, or
	\item in stage \stagePromisedA and in stage \stagePromisedB
\end{enumerate*}. Case (i): Any node in stage \stageSigned may eject and settle its channels in the pre-payment state.
Node~\node can then present this settlement transaction to its TEE as \popt and
obtain pre-payment settlement transactions for its channels. 
Node~\node can also voluntarily eject and obtain \signedChainSettleTx. 
Placing \signedChainSettleTx onto the blockchain fails if one of the channels was already settled, in which case
\node can obtain settlement transactions for its channels as above. Case (ii): Any node can eject and settle the multi-hop payment with \signedChainSettleTx.
If nodes have reached \stagePromisedB, then all nodes passed stage \stageSigned, therefore none can generate local settlements. 

\mypar{Stage: \stagePromisedB} If~\node is in stage \stagePromisedB, all nodes are either
\begin{enumerate*}[(i)]
	\item in stage \stagePromisedA and in stage \stagePromisedB, or
	\item in stage \stagePromisedB and in stage \stageUpdate 
\end{enumerate*}.
Case (i): All nodes can eject and settle the multi-hop payment with \signedChainSettleTx. None can generate individual settlements.
Case (ii): Nodes in \stageUpdate can only settle their channels individually at post-payment state. 
Node~\node can present its TEE with single-channel settling transactions as \popt and obtain settlement transactions to terminate in post-update state. 

\mypar{Stage: \stageUpdate} If~\node is in stage~\stageUpdate, all nodes are either
\begin{enumerate*}[(i)]
	\item in stage \stagePromisedB and in stage \stageUpdate, or
	\item in stage \stageUpdate and in stage \stageIdle 
\end{enumerate*}.
Case (i): Nodes in the \stagePromisedB stage can voluntarily settle the entire multi-hop payment with \signedChainSettleTx. Nodes in stage \stageUpdate can settle their local channels at post-payment, and provide node~\node with \popt to do the same.
Case (ii): All nodes can only settle their local channels at post-payment.

\mypar{Stage: \stageRelease} When \node returns to the \stageRelease stage, other nodes are
either in \stageRelease, or some are in stage \stagePromisedB; all nodes can only settle their channels at the post-payment state.

\tinyskip
\noindent
With this, we conclude that \sys{s} multi-hop payment satisfy balance security:
(i)~if either $u$ or~$v$ are in the initial \stageIdle stage, they are both only able to settle pre-payment;
(ii)~between stage \stageLocked and \stageUpdate, both $u$ and $v$ may settle their channels in either pre-payment or post-payment state.
However, they will always consistently settle the same state;
(iii)~once reaching stage \stageRelease, both $u$ and $v$ will settle the post-payment state.
The balance for any intermediate nodes does not change during the course of the payment routing.
Thus, all participants in the multi-hop payment protocol are always able to reclaim their perceived balance.

\BlankLine
This concludes our proof.
We proved~\cref{thm:balanceCorrectness} and showed that the \sys protocol (both channel and multi-hop payments) guarantee balance security.

\end{document}